\newcites{Supp}{References for the Supplement}
\DeclareMathOperator*{\argmin}{arg\,min}
\tikzset{
    -Latex,auto,node distance =1 cm and 1 cm,semithick,
    state/.style ={ellipse, draw, minimum width = 0.7 cm},
    state1/.style ={ draw, minimum width = 0.7 cm},
    point/.style = {circle, draw, inner sep=0.04cm,fill,node contents={}},
    bidirected/.style={Latex-Latex,dashed},
    el/.style = {inner sep=2pt, align=left, sloped}
}
\newtheorem{theorem}{Theorem}
\newtheorem{assumption}{Assumption}
\newtheorem{lemma}{Lemma}
\renewcommand{\algocf@captiontext}[2]{#1\algocf@typo. \AlCapFnt{}#2} % text of caption
\def\@algocf@capt@plain{top}
\renewcommand{\algocf@makecaption}[2]{%
  \addtolength{\hsize}{\algomargin}%
  \sbox\@tempboxa{\algocf@captiontext{#1}{#2}}%
  \ifdim\wd\@tempboxa >\hsize%     % if caption is longer than a line
  \hskip .5\algomargin%
  \parbox[t]{\hsize}{\algocf@captiontext{#1}{#2}}% then caption is not centered
  \else%
  \global\@minipagefalse%
  \hbox to\hsize{\box\@tempboxa}% else caption is centered
  \fi%
  \addtolength{\hsize}{-\algomargin}%
}
\definecolor{dblue}{HTML}{0072B2}
\def\bmZ{\bm{Z}}
\def\bmX{\bm{X}}
\def\bmU{\bm{U}}
\def\bgamma{\bm{\gamma}}
\def\bGamma{\bm{\Gamma}}
\def\bbeta{\bm{\beta}}
\def\var{{\rm var}}
\def\cov{{\rm cov}}
\newcommand{\probP}{\text{I\kern-0.15em P}}
\newcommand{\zerodisplayskips}{%
  \setlength{\abovedisplayskip}{3pt}%
  \setlength{\belowdisplayskip}{3pt}%
  \setlength{\abovedisplayshortskip}{3pt}%
  \setlength{\belowdisplayshortskip}{3pt}}
\appto{\normalsize}{\zerodisplayskips}
\appto{\small}{\zerodisplayskips}
\appto{\footnotesize}{\zerodisplayskips}
\begin{document}

\title{Group Identification and Variable Selection in Multivariable Mendelian
Randomization with Highly-Correlated Exposures}

\author{Yinxiang Wu$^1$, Neil M. Davies$^{2,3,4} $, Ting Ye$^1$}
\affil{\normalsize $^1$Department of Biostatistics, University of Washington,  Seattle, Washington, U.S.A.\\
$^2$ Division of Psychiatry, University College London, London WC1E 6BT, UK.\\
$^3$ Department of Statistical Science, University College London, London WC1E 6BT, UK.\\
$^4$ Department of Public Health and Nursing, Norwegian University of Science and Technology, Norway.
}
\date{}  
\maketitle

\def\spacingset#1{\renewcommand{\baselinestretch}%
{#1}\small\normalsize} \spacingset{1}

\begin{abstract}
Multivariable Mendelian Randomization (MVMR) estimates the direct causal effects of multiple risk factors on an outcome using genetic variants as instruments. The growing availability of summary-level genetic data has created opportunities to apply MVMR in high-dimensional settings with many strongly correlated candidate risk factors. However, existing methods face three major limitations: weak instrument bias, limited interpretability, and the absence of valid post-selection inference. Here we introduce MVMR-PACS, a method that identifies signal-groups---sets of causal risk factors with high genetic correlation or indistinguishable causal effects---and estimates the direct effect of each group. MVMR-PACS minimizes a debiased objective function that reduces weak instrument bias while yielding interpretable estimates with theoretical guarantees for variable selection. We adapt a data-thinning strategy to summary-data MVMR to enable valid post-selection inference. In simulations, MVMR-PACS outperforms existing approaches in both estimation accuracy and variable selection. When applied to 27 lipoprotein subfraction traits and coronary artery disease risk, MVMR-PACS identifies biologically meaningful and robust signal-groups with interpretable direct causal effects. 
\end{abstract}

\noindent%
{\it Keywords:}  Causal Inference; GWAS; Variable Selection; Post-selection Inference; Weak Instruments
\vfill

\newpage
\spacingset{1.5} % DON'T change the spacing!

\section{Introduction}

Multivariable Mendelian randomization (MVMR) uses genetic variants as instrumental variables to estimate the direct causal effects of multiple exposures on an outcome \citep{Davey-Smith:2003aa, sanderson2022Mendelian}. With the growing availability of summary-level genetic association data, summary-data MVMR has been widely applied across various fields \citep{sanderson2020multivariable,richmond2022mendelian}, including causal inference with omics data \citep{sadler2022quantifying, lu2024integrative, yoshiji2025integrative}. {However, as the number and correlation of exposures increase, current MVMR methods face three critical challenges that limit their reliability: weak instrument bias arising from highly correlated exposures,  difficulty in selecting true causal factors or interpretable groups
among many candidates, and lack of valid post-selection inference. Addressing these challenges is essential for extending MVMR to modern high-dimensional omics applications, which often involve many highly correlated exposures.}

%Recent applications have extended MVMR into mediation analysis and integrative multi-omics research. For example, \cite{sadler2022quantifying} developed a three-sample MVMR framework to assess how DNA methylation affects complex traits through gene expression. \cite{lu2024integrative} proposed a multi-context MVMR framework to identify tissue-specific effects of molecular traits on diseases. \cite{yoshiji2025integrative} applied a comprehensive multi-omics approach to dissect how obesity contributes to cardiometabolic diseases, using MVMR to disentangle the independent effects of body mass index (BMI)-driven proteins. A common feature of these studies is the joint analysis of many potential risk factors, ranging from classical phenotypes to high-dimensional omics traits. Such analyses provide a more \yw{accurate} view of disease etiology and help disentangle correlated exposures.

MVMR requires genetic variants (typically single-nucleotide polymorphisms, or SNPs) to satisfy three core assumptions \citep{hernan2006estimating,sanderson2021testing} (Figure \ref{main fig 1}a):
\begin{enumerate}
    \item [i] \textit{Relevance}: SNPs must be strongly associated with each exposure, conditional on the other exposures in the model;
    \item [ii] \textit{Independence}: SNPs are independent of all confounders of the exposures and outcome;
    \item [iii] \textit{Exclusion restriction}: SNPs affect the outcome only through the included exposures.
\end{enumerate}
Unlike univariable MR, which assumes genetic instruments affect the outcome through a single exposure, MVMR incorporates multiple exposures to strengthen the plausibility of the exclusion restriction by adjusting for known pleiotropic pathways. However, this adjustment creates a fundamental trade-off: when exposures are correlated, conditioning on multiple exposures weakens the conditional associations between SNPs and individual exposures. Even when SNPs show strong marginal associations with each exposure---indicating strong instruments in a univariable setting---their conditional associations can become weak when genetic effects on different exposures are highly correlated \citep{sanderson2021testing,patel2024weak,wu2024more}. This problem is particularly acute in omics applications, where genetic correlations between molecular traits such as metabolites or proteins often exceed 0.8 \citep{zhao2021Mendelian,chan2024novel,qiang2025genetic}. In these cases, 
standard estimators such as MVMR-IVW \citep{burgess2015multivariable}, MVMR-Egger \citep{rees2017extending}, and MVMR-Median \citep{grant2021pleiotropy}, as well as recent weak-instrument-robust methods including GRAPPLE \citep{Wang2021grapple}, MVMR-cML \citep{lin2023robust}, MRBEE \citep{lorincz2023mrbee}, and SRIVW \citep{wu2024more}, can yield biased, imprecise, or unstable estimates. Consequently, practitioners often avoid MVMR entirely, opting instead for extensive univariable MR analyses or restricting MVMR to a few preselected exposures---approaches that impose heavy multiple-testing burdens and risk overstating or misattributing causal effects driven by correlated pathways. Scalable methods that can accommodate many correlated exposures while providing valid inference are therefore urgently needed.

Several methods have been proposed to extend MVMR to high-dimensional settings, but each has important limitations. MR-BMA \citep{zuber2020selecting} uses Bayesian model averaging to prioritize likely causal factors but does not correct for weak instrument bias, which can compromise both estimation and selection.  MVMR-cML-SuSiE \citep{chan2024novel}  combines MVMR-cML with SuSiE to cluster exposures but assumes only one exposure per cluster has a direct effect. Both methods also require prespecifying the number of clusters or causal factors, which is typically unknown. Frequentist approaches include sparse principal component analysis \citep{karageorgiou2023sparse}, which constructs principal components from SNP–exposure associations but yields effects that are difficult to interpret biologically. Penalized regression approaches have also been developed. \citet{grant2021pleiotropy} estimate the effect of a primary exposure while applying an $\ell_1$ penalty to select pleiotropic traits. \citet{hao2025transfer} proposed a transfer learning framework that employs the Minimax Concave Penalty (MCP) for variable selection, which penalizes exposures individually but does not account for structured correlations or shared effects among exposures. However, both approaches fail to mitigate weak instrument bias. Critically, none of these methods provide valid post-selection inference, leaving statistical uncertainty poorly quantified after variable selection.

Here we introduce MVMR-PACS, a statistical framework for summary-data MVMR that addresses risk factor selection and post-selection inference in settings with many or highly correlated exposures.  In such settings, disentangling individual causal effects is challenging because conditional instrument strength is limited. MVMR-PACS addresses this through three key innovations. First, it mitigates weak instrument bias using a debiased loss function \citep{wu2024more}. Second, it concentrates signal by adaptively grouping exposures that are genetically correlated or share causal effects through the pairwise absolute clustering and shrinkage (PACS) penalty \citep{sharma2013consistent}. Third, it estimates the causal effect of each group (Figure \ref{main fig 1}c), where the resulting estimand corresponds either to a common direct causal effect within the group or to a weighted average of member effects. For rigorous post-selection inference, we adapt the recently proposed data-thinning approach \citep{neufeld2023data} to summary-data MVMR (Figure \ref{main fig 1}d). Compared with existing approaches, MVMR-PACS provides interpretable grouping of risk factors with theoretical guarantees for both estimation and inference after selection.

We establish that MVMR-PACS satisfies the oracle property for variable selection under reasonable conditions, ensuring consistent identification of true causal factors and signal-groups. Through simulations, we demonstrate its superior performance relative to existing methods in both estimation accuracy and variable selection. We then apply MVMR-PACS to 27 lipoprotein subfraction traits to identify biologically meaningful signal-groups influencing coronary artery disease risk and their direct causal effects.

\section{Results}

\subsection{Overview of MVMR-PACS}

To illustrate the MVMR-PACS workflow, we first provide a schematic overview before presenting simulation and applied analyses. We consider the widely used two-sample summary-data design, where SNP-exposure and SNP-outcome associations are obtained from non-overlapping cohorts. The framework begins by identifying SNPs that satisfy the three core instrument variable assumptions (Figure 1a). Candidate SNPs are selected based on association with at least one exposure at a specified significance threshold (for example, $p<5\times 10^{-8}$), followed by linkage disequilibrium clumping to retain independent variants (Figure 1b). The exposure and outcome datasets are then harmonized to include only these instruments. Details of instrument variable selection and data preprocessing are provided in Methods (Section \ref{sec: iv selection}).

\begin{figure}[ht]
    \centering
    \includegraphics[width=\textwidth]{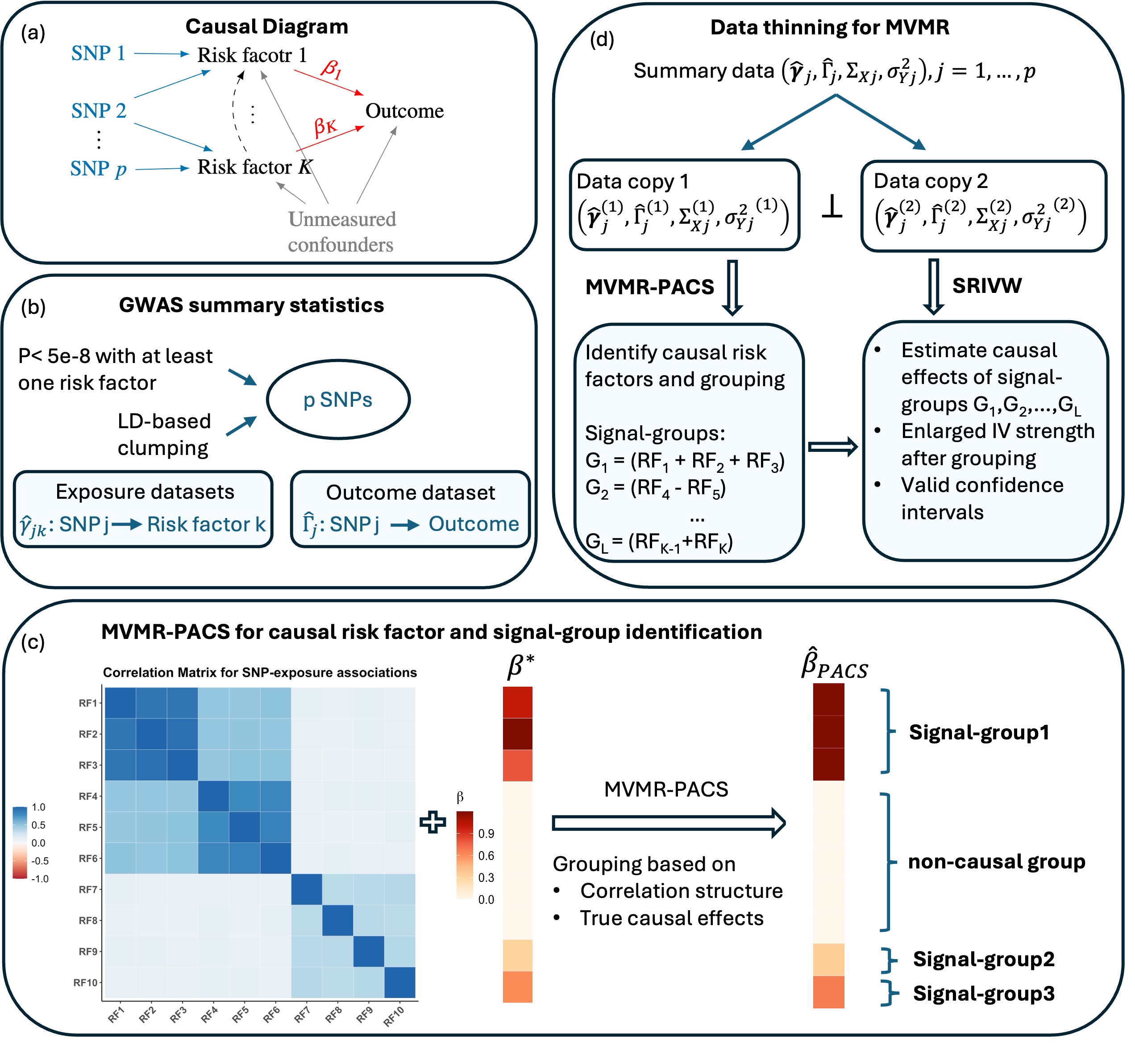}
    \caption{{Overview of the MVMR-PACS framework for variable selection and post-selection inference. \textbf{(a)} We look for SNPs that satisfy the three core instrument assumptions:  SNPs are conditionally associated with each exposure (relevance), are independent of unmeasured confounders (independence), and influence the outcome only through the included exposures (exclusion restriction). \textbf{(b)} Instruments are selected using a p-value threshold and LD clumping, based on GWAS summary statistics. \textbf{(c)} An illustrative example demonstrating the MVMR-PACS approach; MVMR-PACS identifies signal-groups of exposures, based on both correlation structure of observed SNP-exposure associations and true causal effects $\bbeta^*$. \textbf{(d)} Data thinning procedure for summary-data MVMR; observed summary statistics are split into two independent copies, enabling valid post-selection inference. RF: risk factor.}} \label{main fig 1}
\end{figure}

Using the harmonized dataset, MVMR-PACS identifies groups of risk factors that may jointly influence the outcome by combining exposures with high genetic correlation or indistinguishable causal effects. Figure 1c illustrates the approach using a hypothetical example with ten exposures. The correlation matrix of SNP–exposure associations reveals three clusters, though not all clusters contain causal exposures.

A key feature of MVMR-PACS is its ability to form signal-groups that capture both correlation structure and underlying causal effects in a data-driven manner. For instance, as illustrated in Figure \ref{main fig 1}c, {risk factors 1--3} are highly correlated; the method shrinks their estimated effects toward a common value, even if their true effects differ slightly, recognizing that instrument strength is insufficient to distinguish them. {Risk factors 4--6} are also highly correlated but non-causal; their estimated effects are correctly shrunk toward zero. {Risk factors 7--10} exhibit weaker correlations and are treated as separate groups. We define a \emph{signal-group} as a set of exposures assigned a nonzero estimated effect of equal magnitude. This grouping mitigates instability from multicollinearity, overcomes weak instrument bias by concentrating weak signals, and yields interpretable results.

The final step is inference on the identified signal-groups. To ensure valid post-selection inference, we adapt a data thinning strategy \citep{neufeld2023data}. {For each selected SNP, the vector of SNP–exposure associations is carefully perturbed to yield two independent components that preserve distributional properties of the original vector; SNP–outcome associations are treated analogously (Method Section \ref{sec: data thin}). This procedure yields two independent datasets containing the same set of SNPs. One dataset is used for selection, where MVMR-PACS identifies signal-groups, and the other for inference, where direct effects and confidence intervals are estimated (Figure 1d). Unlike conventional sample-splitting approaches that partition SNPs into disjoint subsets \citep{grant2022efficient, zhao2023robust}, data-thinning preserves all instruments in both datasets. This avoids instability when the number of instruments is limited while accommodating heterogeneous SNP-level association distributions.}

The causal effect of a signal-group has a clear interpretation. When group members are highly genetically correlated, the estimate reflects a weighted average of their individual effects. When weakly or moderately correlated members are grouped together, this indicates that the risk factors exert similar direct causal effects on the outcome, and the estimate reflects their common effect (Methods Section \ref{sec: interpretation}).

%We note that the causal effect of a signal-group has a meaningful interpretation: when the risk factors within a signal-group are highly correlated, the estimated effect represents a weighted average of their individual effects; when the correlations are low, the estimated effect applies to each risk factor individually, indicating that all the group members have effects of comparable magnitude. See Method Section 4.6 for further details.

\subsection{Simulation results}\label{sec: simulation results}

The simulation study evaluated the finite-sample performance of MVMR-PACS relative to existing MVMR estimators in terms of estimation accuracy and variable-selection performance. We also assessed whether the proposed data-thinning procedure yields valid post-selection inference. We generated summary-level data for ten risk factors under varying sample sizes to assess robustness to weak instruments. Results are summarized using 1,000 Monte Carlo replicates for each scenario.

Each dataset comprised ten risk factors (RF1–RF10) with true causal effects of $(1, 1, 1, 0, 0, 0, 0, 0, 0.5, 0)$, respectively. Summary-level SNP–exposure and SNP–outcome associations were simulated from individual-level models at sample sizes $N = 1\times10^5$, $2\times10^5$, and $3\times10^5$. The correlation structure of SNP–exposure associations mimicked the block pattern shown in Figure 1c (Supplementary Figure S1), creating correlated exposures with conditionally weak instruments. Further details of the data-generating process are provided in Methods (Section \ref{sec: sim details}).

The target estimand for each method was the direct causal effect of each exposure on the outcome. We compared MVMR-PACS and its variants---MVMR-PACS-0.8, {which incorporates a correlation threshold of 0.8 to disable grouping between weakly correlated exposures}, and MVMR-dLASSO, corresponding to the special case where the PACS penalty reduces to an $\ell_1$ penalty---against several established estimators listed in Table \ref{tab:methods_summary}.

\begin{table}[ht]
\centering
\caption{MVMR estimators compared in simulation and real-data analyses. Selection rules specify criteria for identifying exposures with nonzero direct causal effects on the outcome.}
\label{tab:methods_summary}
\footnotesize
\resizebox{\textwidth}{!}{
\begin{tabular}{p{3.3cm} p{8.5cm} p{4.5cm}}
\toprule
\textbf{Method} & \textbf{Overview} & \textbf{Selection rule} \\
\midrule
\textbf{MVMR-IVW} \citep{burgess2015multivariable} &
Inverse-variance weighted estimator assuming all instruments are valid and sufficiently strong. &
Bonferroni-corrected $p < 0.05/K$ ($K$ exposures). \\[3pt]
\addlinespace[4pt]
\textbf{SRIVW} \citep{wu2024more} &
Weak-instrument–robust extension of MVMR-IVW. &
Bonferroni-corrected $p < 0.05/K$ ($K$ exposures). \\[3pt]
\addlinespace[4pt]
\textbf{IVW-LASSO} (implemented in \cite{zuber2020selecting}) &
Adds an $\ell_1$ penalty to MVMR-IVW for sparse estimation. &
Selected if nonzero\textsuperscript{1} after penalization. \\[3pt]
\addlinespace[4pt]
\textbf{MR-BMA} \citep{zuber2020selecting} &
Bayesian model averaging exploring likely exposure subsets; computes marginal inclusion probabilities (MIPs). &
Selected if MIP $>0.2$, $0.5$, or $0.8$ (default: $>0.5$). \\[3pt]
\addlinespace[4pt]
\textbf{MVMR-cML-SuSiE} \citep{chan2024novel} &
Combines constrained maximum likelihood with SuSiE to identify causal clusters and estimate one effect per cluster. &
Selected if exposure belongs to any causal cluster. \\
\addlinespace[4pt]
\textbf{MVMR-PACS} &
Adaptively group highly correlated exposures or those with indistinguishable causal effects and estimate each group's direct effect. &
Selected if exposure belongs to any signal-group\textsuperscript{1}. \\
\addlinespace[4pt]
\textbf{MVMR-PACS-0.8} &
MVMR-PACS with a correlation threshold of 0.8; disabled grouping between exposures whose SNP-exposure associations have absolute correlation less than 0.8. &
Selected if exposure belongs to any signal-group\textsuperscript{1}. \\
\addlinespace[4pt]
\textbf{MVMR-dLASSO} &
MVMR-PACS with grouping disabled, i.e., when the PACS penalty reduces to a usual $\ell_1$ penalty. &
Selected if nonzero\textsuperscript{1} after penalization. \\
\bottomrule
\end{tabular}}
\vspace{0.1em}
\begin{flushleft}
\footnotesize\textit{\textsuperscript{1}:} Absolute values of estimated effects greater than 0.001.
\end{flushleft}
\end{table}

Method performance was summarized using: (1) \textit{Estimation accuracy}: median mean-squared error (MSE) across replicates,  defined as the squared difference between estimated and true causal effects; (2) \textit{Variable selection performance}: correct sparsity (proportion of exposures correctly identified as having zero or nonzero effects), sensitivity, and false-positive rate, accounting for direction of effect. To evaluate post-selection inference, we computed empirical coverage probability of nominal 95\% confidence intervals on the inference dataset, conditional on signal-groups identified in the selection dataset.

\subsubsection{Estimation accuracy and risk factor selection}

Figure \ref{main sim fig} summarizes results across 1,000 replicates; numerical results are presented in Supplementary Table S1. At $N=1\times10^5$, overall instrument strength was weak, with a mean instrument strength parameter \citep{wu2024more} of -0.2  and conditional $F$-statistics \citep{sanderson2021testing} ranging from 1.1 to 13.1. The first three risk factors had mean conditional $F$-statistics of 1.1, indicating extremely weak conditional instruments.

In this setting, MVMR-PACS achieved the lowest MSE and the most stable point estimates (Supplementary Figure S2).
MR-BMA ranked second but frequently produced negative estimates for null effects (RF5, RF7, RF8, RF10), suggesting directional bias under weak instruments. MSE was not reported for MVMR-cML-SuSiE, as it evaluates all possible combinations of cluster representatives rather than producing a single set of exposure effects.

For variable selection, MVMR-PACS achieved the highest correct sparsity (88.3\%), outperforming MR-BMA (75.3\%), MVMR-cML-SuSiE (73.8\%), and other methods.
Its sensitivity reached 83.5\%, compared with 86.2\% for MR-BMA and 45.1\% for MVMR-cML-SuSiE, while maintaining a low false-positive rate (8.5\% versus 31.9\% and 7.1\%, respectively).
False positives from MVMR-PACS mainly reflected moderate correlations between RF1–RF3 and RF4–RF6, where non-causal exposures (RF4–RF6) occasionally joined the same signal-group.
When a correlation threshold (MVMR-PACS-0.8) was imposed, this grouping was prevented, yielding improved correct sparsity and reduced false-positive rates (Figure \ref{main sim fig}).
These findings demonstrate how incorporating prior knowledge of risk factor group structure can enhance MVMR-PACS performance in
practice.

IVW-LASSO achieved correct sparsity comparable to MVMR-PACS and MR-BMA but exhibited much larger MSE and lower sensitivity, especially at small sample sizes.
MVMR-dLASSO showed even lower sensitivity, often selecting only one or two risk factors from RF1–RF3. SRIVW produced nearly unbiased estimates but with high variance, whereas MVMR-IVW displayed substantial bias and variability (Supplementary Figure S2). As expected, Bonferroni correction yielded conservative selection, with low sensitivity.

\begin{figure}[ht]
\centering\includegraphics[width=\textwidth]{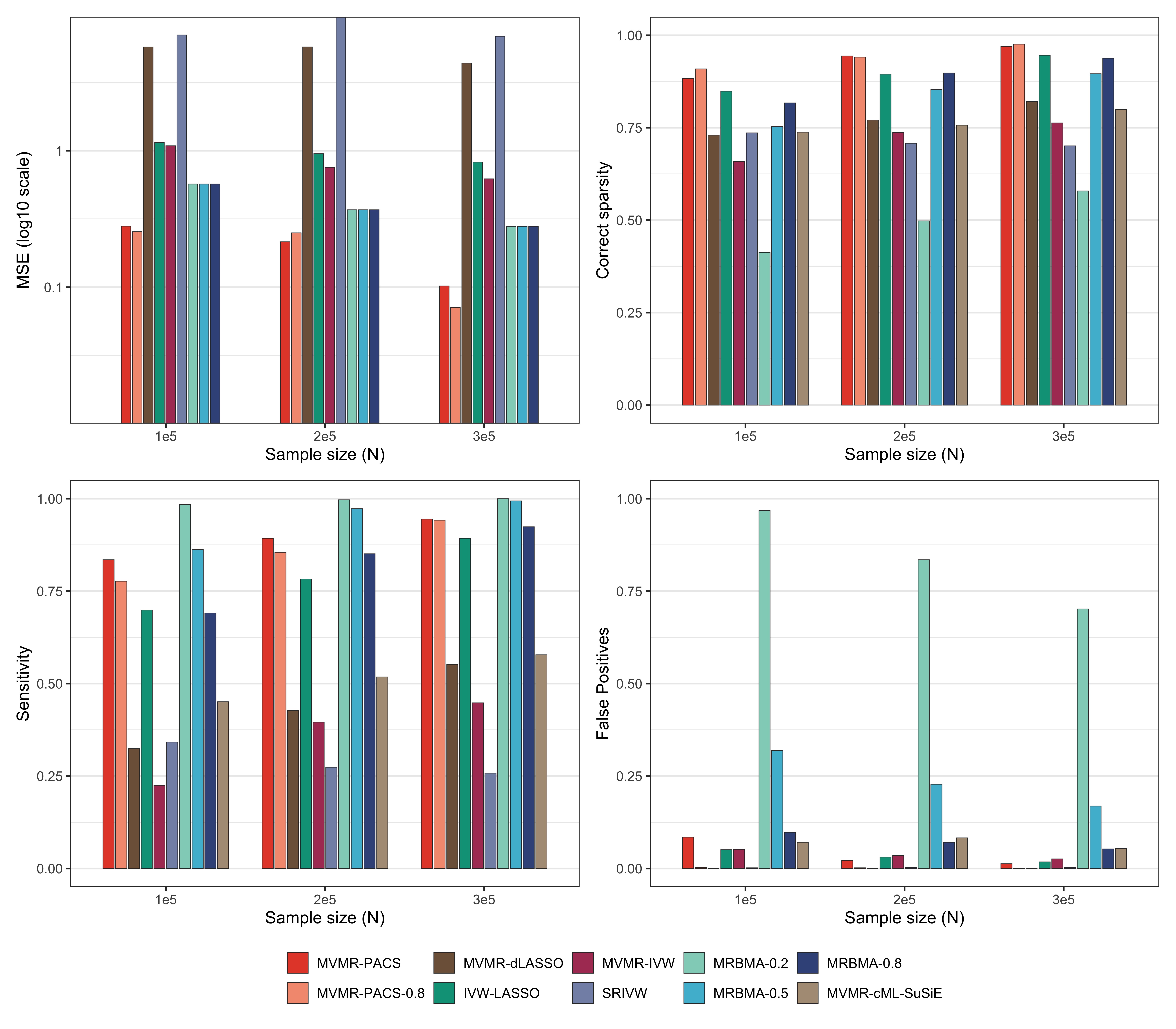}
    \caption{{Performance of MVMR estimators in causal effect estimation and variable selection (Table \ref{tab:methods_summary}). The figure summarizes results from simulations with sample sizes $N=1\times10^5$, $2\times10^5$, and $3\times10^5$, reporting median mean squared error (MSE), correct sparsity, sensitivity, and false positive rate across 1,000 replicates. The average instrument strength parameters are -0.2, 1.9, and 4.0, for each sample size respectively. MVMR-PACS-0.8 is a variant of MVMR-PACS, incorporating a correlation threshold of 0.8 into its penalty. MR-BMA-0.2, MR-BMA-0.5, and MR-BMA-0.8 denote Bayesian model averaging with MIP thresholds of 0.2, 0.5, and 0.8; they all share the same model-averaged effect estimates and hence the same MSE estimates. MSE is not reported for MVMR-cML-SuSiE.}} \label{main sim fig}
\end{figure}

As sample size increased to $2\times10^5$ and $3\times10^5$, performance improved for all estimators, with MVMR-PACS achieving the lowest MSE, highest correct sparsity, and highest sensitivity while maintaining a comparable false positive rate.

\subsubsection{Post-selection Inference}

We next examined whether the proposed data-thinning procedure yields valid confidence intervals conditional on the selection event.
Using the same data-generation setup, we conducted 1,000 replications at each sample size $N=1\times10^5$, $2\times10^5$, and $3\times10^5$. In each repetition, observed SNP-exposure and SNP-outcome associations were carefully perturbed to yield two independent datasets: a selection dataset, used by MVMR-PACS to identify signal groups, and an inference dataset, used for post-selection inference (Figure~\ref{main fig 1}; Methods Section~4.6). Confidence intervals for the direct effects of the selected signal-groups were then constructed using the SRIVW estimator and evaluated against the true effects. In this setting, two signal-groups existed by design: one comprising RF1–RF3, which share the same true causal effect of 1, and another comprising RF9, with a true causal effect of 0.5.

Due to added noise and reduced instruments' strength in the selection dataset, MVMR-PACS exhibited slightly weaker performance than with full data (Supplementary Table S2). At sample size $N=1\times10^5$, the most frequently observed grouping placed RF1–RF3 in a signal group, with remaining risk factors identified as non-causal. This grouping occurred in 775 of 1,000 runs. The number of runs in which MVMR-PACS correctly identified all causal risk factors increased with sample size: 132 at $N=1\times10^5$, 342 at $N=2\times10^5$, and 442 at $N=3\times10^5$. The complete distribution of grouping frequencies is shown in Supplementary Table S3.

Coverage probabilities of SRIVW-based confidence intervals constructed on the inference set were close to the nominal 0.95 level when the correct model was selected (Supplementary Table S4).

\subsection{Lipoprotein Subfraction Traits as Risk Factors for Coronary Artery Disease}

\subsubsection{Signal-groups identified by MVMR-PACS}

We applied MVMR-PACS to traditional lipid profiles (HDL-C, LDL-C, TG, ApoA1, ApoB) and lipoprotein subfraction traits to identify signal-groups that may jointly influence coronary artery disease (CAD). We also applied the proposed data thinning technique to estimate group effects and construct confidence intervals.

Lipoprotein subfractions and particle sizes have gained increasing attention in cardiovascular risk assessment \citep{mora2009advanced,emerging2012lipid,chary2023association,lawler2017residual}, yet observational studies have reported conflicting findings \citep{zhao2021Mendelian}. To address this, \citet{zhao2021Mendelian} conducted an MR study of 82 traits spanning VLDL, LDL, IDL, and HDL subfractions and particle sizes. Each subfraction trait is named using three components separated by hyphens: size (XS, S, M, L, XL, XXL), lipoprotein density (VLDL, LDL, IDL, HDL), and measurement (C for total cholesterol, CE for cholesterol esters, FC for free cholesterol, L for total lipids, P for particle concentration, PL for phospholipids, TG for triglycerides). For example, M-HDL-P denotes the concentration of medium HDL particles. In addition, VLDL-D, LDL-D, and HDL-D represent the average diameters of the corresponding lipoprotein classes.
After excluding traits with genetic correlations above 0.8 with traditional lipids, they retained 27 subfraction traits that may capture independent biological mechanisms, yet substantial correlations remained in the SNP–exposure associations. Consequently, each trait was analyzed separately while adjusting for traditional lipids, an approach that carries a high multiple testing burden and risks bias from uncontrolled pleiotropy since it is difficult to identify variants that affect a single subfraction without influencing others.

To overcome these limitations, we jointly analyzed the 27 subfraction traits and five traditional lipids using MVMR-PACS. Publicly available GWAS summary statistics were harmonized and inverse rank normalized to place all traits on the same scale (see Supplementary Section 2 for data sources). SNPs were first selected based on genome-wide significance ($p<5\times 10^{-8}$) for association with at least one of the 32 lipid traits.
Variants meeting this criterion were then clumped to ensure independence, using pairwise linkage disequilibrium ($r^2<0.001$) within a 10 Mb window and the European reference panel from the 1000 Genomes Project. Within each linkage disequilibrium block, the SNP showing the smallest p-value for association with any of the lipid traits was retained. This procedure yielded a total of 591 uncorrelated SNPs. The correlation matrix of SNP–exposure associations (Figure  \ref{lipid cor mat}) showed a clear clustered structure. Conditional F-statistics for the 32 traits ranged from 0.16 to 1.31, and the instrument strength parameter was negative (–24.3), both indicating extremely weak instrument strength.

\begin{figure}[ht]
    \centering
    \includegraphics[width=14cm]{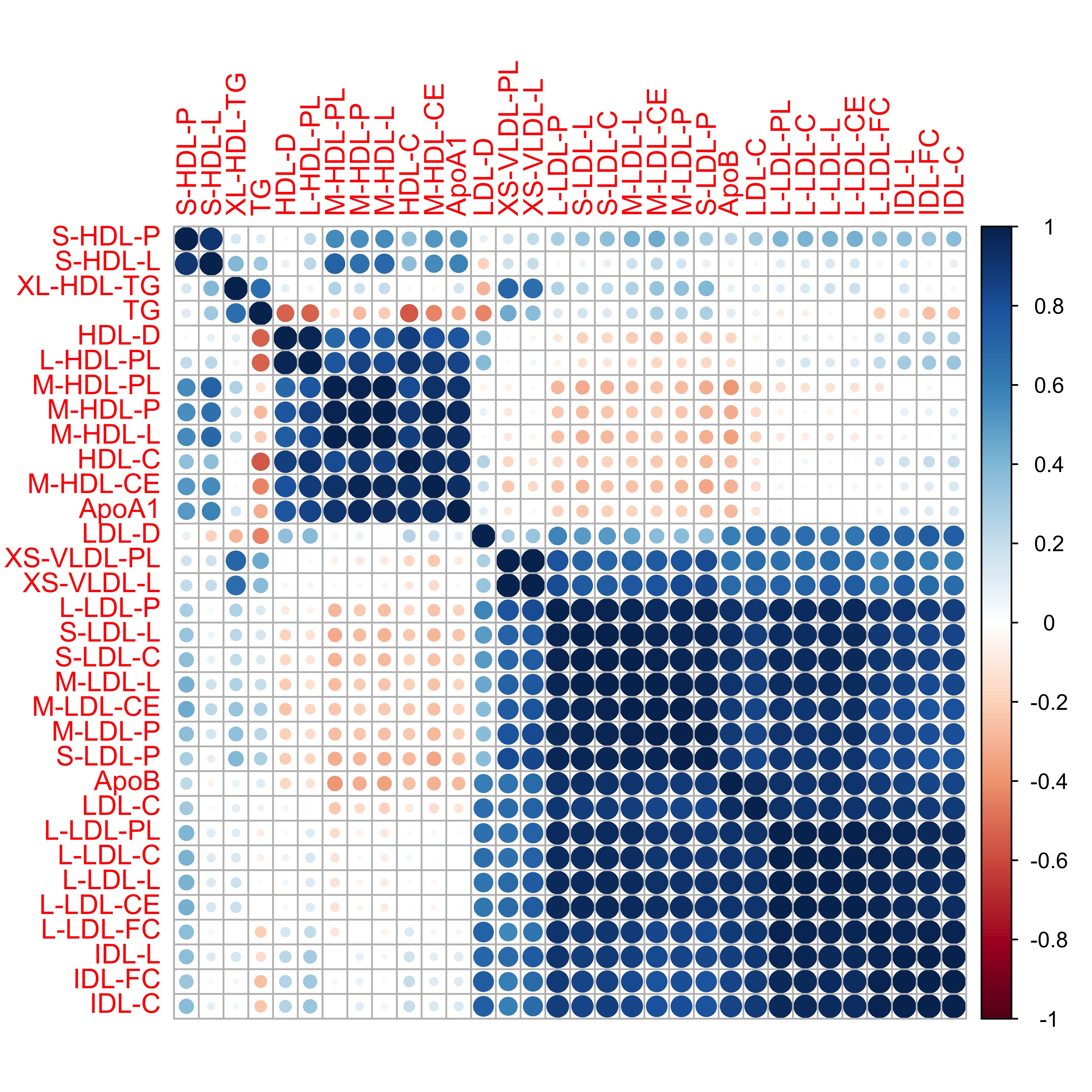}
    \caption{{Correlations of SNP-exposure associations among 32 lipid traits based on 591 selected SNPs.}}\label{lipid cor mat}
\end{figure}

{MVMR-PACS identified seven signal groups (Table \ref{lipid main table}), labeled by their members: TG; LDL-C; ApoB; HDL-D; medium/large HDL subfractions (M-HDL-CE, M-HDL-L, M-HDL-P, M-HDL-PL, L-HDL-PL); LDL subfractions (S-LDL-C, S-LDL-L, S-LDL-P, M-LDL-CE, M-LDL-L, M-LDL-P, L-LDL-P); and a mixed VLDL/IDL/LDL group (XS-VLDL-L, XS-VLDL-PL, IDL-C, IDL-FC, IDL-L, L-LDL-FC, LDL-D). These groupings largely mirrored the correlation-based clusters yet also revealed separations driven by estimated causal effects. For example, despite their high correlation, ApoB and LDL-C formed distinct groups, whereas LDL-D clustered with VLDL/IDL lipid components despite weaker phenotypic correlation.}

{Based on the MVMR-PACS estimates, medium/large HDL subfractions, LDL subfractions, and LDL and HDL diameters showed protective effects on CAD risk, while LDL-C, ApoB, TG, VLDL/IDL subfractions, and L-LDL-FC indicated harmful effects. For groups with strong internal correlation (for example, the signal-group of LDL subfractions), estimates should be interpreted as weighted averages of member traits' effects; for signal-groups with weaker internal correlation (such as LDL-D with VLDL lipid components), the equal magnitude of effects suggests comparably sized causal effects (see Methods Section \ref{sec: interpretation}).} 

\subsubsection{Other methods}

For comparison, we applied six alternative MVMR estimators to the same dataset: MVMR-IVW, SRIVW, IVW-LASSO, MVMR-dLASSO, MR-BMA, and MVMR-cML-SuSiE (Table \ref{lipid main table}).

Estimates from MVMR-IVW and SRIVW were unstable, reflecting the extremely weak instrument strength. IVW-LASSO and MVMR-dLASSO agreed on LDL-C, TG and S-LDL-C, but disagreed on a few correlated traits.

For MR-BMA, the leading risk factors ranked by marginal inclusion probabilities (MIPs) were LDL-C (MIP = 1), TG (MIP = 0.996), ApoB (MIP = 0.989), L-LDL-P (MIP = 0.988), HDL-D (MIP = 0.975), and L-HDL-PL (MIP = 0.967) (Supplementary Table S6). Notably, MR-BMA produced a model-averaged causal effect estimate for ApoB of –0.497, which likely reflects weak instrument bias and contradicts the established consensus that elevated ApoB increases CAD risk \citep{sniderman2019apolipoprotein}.

{MVMR-cML-SuSiE identified three clusters: cluster 1 (LDL-C), cluster 2 (ApoA1 and HDL-C), and cluster 3 (L-LDL-FC, IDL-L, IDL-FC, IDL-C). A representative risk factor from each cluster was then selected, yielding eight possible models. MVMR-cML was applied to each model to estimate causal effects (numerical results are presented in Supplementary Table S7). LDL-C and the cluster of ApoA1 and HDL-C consistently showed harmful effects, whereas risk factors in clusters 3 suggested protective effects. However, despite accounting for collinearity through cluster selection, the point estimates are likely biased by residual pleiotropy, as only one exposure from each cluster is retained for analysis. For example, LDL-C exhibited an estimated effect size of approximately 2.2, corresponding to an 9-fold increase in CAD risk per standard deviation increase in LDL-C levels, far exceeding estimates from other approaches.}

\begin{table}[h]
\centering
\caption{Point estimates of individual traits across estimators. Estimates are shown on the log-odds scale. Values with absolute magnitude below 0.001 are denoted as ``-''. For MR-BMA, model-averaged effect estimates are reported.} \label{lipid main table}
\centering
\resizebox{\ifdim\width>\linewidth\linewidth\else\width\fi}{!}{
\begin{tabular}[t]{ccccccc}
\toprule
Traits & IVW & SRIVW & IVW-LASSO & MVMR-dLASSO & MVMR-PACS & MR-BMA\\
\midrule
XS-VLDL-PL &   7.237 &   1.256 &      - &        - &         0.165 &   0.563 \\
XS-VLDL-L &  -5.663 &  -1.797 &      - &        - &         0.165 &   0.050 \\
L-LDL-FC &   7.825 & -25.623 &      - &        - &         0.165 &  -0.014 \\
IDL-L &  -9.148 &  -9.736 &      - &        - &         0.165 &  -0.308 \\
IDL-FC &  -1.217 &  -0.949 &      - &        - &         0.165 &   0.016 \\
IDL-C &   5.688 &  11.300 &      - &        - &         0.165 &  -0.392 \\
LDL-D &  -0.360 &   1.593 &      - &        - &        -0.165 &  -0.028 \\
L-LDL-P &  -2.084 &  -0.582 &      - &        - &        -0.351 &  -1.328 \\
M-LDL-P &  -1.198 &   1.671 &      - &        - &        -0.351 &   0.020 \\
M-LDL-L &   0.320 & -10.824 &      - &        - &        -0.351 &   0.222 \\
M-LDL-CE &   1.534 &   3.154 &      - &        - &        -0.351 &   0.012 \\
S-LDL-P &  -0.780 &   0.690 &      - &        - &        -0.351 &  -0.803 \\
S-LDL-L &  -3.090 &  -0.501 &      - &        - &        -0.351 &  -0.051 \\
S-LDL-C &   0.586 &   6.723 &     -0.340 &       -1.911 &        -0.351 &  -0.510 \\
L-HDL-PL &   3.782 &   1.797 &      - &        - &        -0.049 &   1.137 \\
M-HDL-PL &  -1.859 &  -7.468 &      - &        - &        -0.049 &  -0.041 \\
M-HDL-P &  -2.633 &   3.198 &      -0.070 &        - &        -0.049 &  -0.213 \\
M-HDL-L &   8.745 &   7.439 &      - &        - &        -0.049 &  -0.065 \\
M-HDL-CE &  -2.902 &  -5.299 &     -0.040 &        - &        -0.049 &   0.012 \\
HDL-D &  -3.580 &  -1.731 &      - &       -0.240 &        -0.314 &  -1.173 \\
ApoB &  -0.752 &   0.784 &      - &        0.721 &         0.422 &  -0.497 \\
LDL-C &   1.307 &   0.637 &      0.732 &        1.301 &         1.218 &   1.201 \\
TG &   0.350 &  -0.392 &      0.117 &        0.188 &         0.362 &   0.408 \\
L-LDL-PL &   7.101 &  -0.181 &      - &        - &         - &   0.134 \\
L-LDL-L &   2.143 &   8.681 &      - &        - &         - &   0.661 \\
L-LDL-C & -38.804 &  71.081 &      - &        - &         - &   0.510 \\
L-LDL-CE &  30.783 & -56.568 &      - &        - &         - &   0.984 \\
XL-HDL-TG &  -0.107 &  -0.062 &      - &        - &         - &  -0.009 \\
S-HDL-P &   0.443 &  -3.357 &     -0.030 &        - &         - &  -0.092 \\
S-HDL-L &  -4.019 &   3.812 &      - &        - &         - &  -0.501 \\
ApoA1 &  -0.263 &   1.634 &     -0.010 &        - &         - &  -0.167 \\
HDL-C &  -0.007 &  -0.409 &     -0.118 &        - &         - &  -0.003 \\
\bottomrule
\end{tabular}}
\end{table}

\subsubsection{Inference with data thinning}

We next applied the proposed data-thinning procedure, which splits the data into two independent copies, one for variable selection and the other for estimation and inference. As with many resampling approaches, random variation can affect results, particularly in settings with weak and noisy signals. To stabilize inference, we repeated the data thinning procedure 100 times.

Figure \ref{heatmap inference} shows a combined heatmap and dendrogram summarizing MVMR-PACS estimates across 100 data thinning replicates. The heatmap displays effects for the 32 traits that were nonzero and statistically significant at $p<0.05$. Across runs, LDL-C, TG, XS-VLDL-PL, and XS-VLDL-L consistently showed harmful effects, while small and medium LDL subfractions and medium HDL subfractions showed protective effects, with concordant significance in more than half of the replicates. These directions also matched the primary full-data results (Table \ref{lipid main table}). Other traits were either not significant in most runs or showed inconsistent directions, indicating limited and unstable evidence of causal effects. 

The dendrogram visualizes clustering of the 32 traits using a custom distance metric defined by the frequency of co-assignment to the same signal-group across runs. Traits that repeatedly grouped together clustered tightly. Several signal-groups from the full-data analysis, including medium HDL subfractions, small and medium LDL subfractions, VLDL lipid components were frequently recovered. In contrast, LDL-C and TG were often isolated into distinct groups, consistent with the full-data analysis in Table \ref{lipid main table}.

\begin{figure}[ht]
    \centering
    \includegraphics[width=\textwidth]{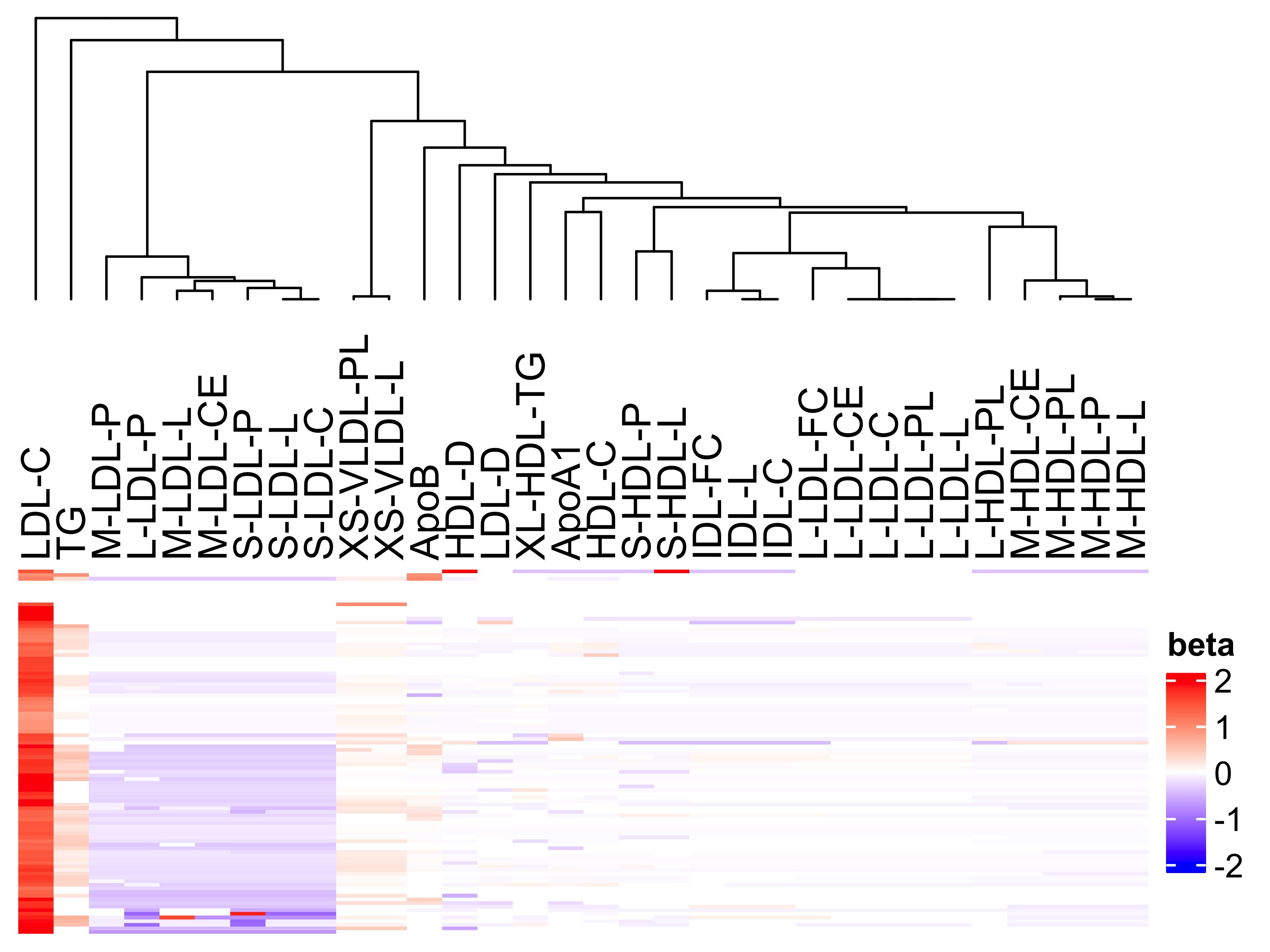}
    \caption{{Heatmap of estimated causal effects across 100 data-thinning replicates. Shown are MVMR-PACS estimates (log-odds scale) for exposures that were nonzero and statistically significant at $p<0.05$ in at least one replicate. Red indicates harmful effects and blue indicates protective effects on CAD risk. The dendrogram represents hierarchical clustering of the 32 traits based on their frequency of co-assignment to the same signal-group across replicates. Median [inter-quartile range] of instrument strength parameter \cite{wu2024more} in the inference dataset is 14.8 [7.5, 49.8] across 100 data thinning replicates.}} \label{heatmap inference}
\end{figure}

\section{Discussion}

We introduced MVMR-PACS, a general framework for risk factor selection and post-selection inference in summary-data MVMR. The method expands the current toolbox to settings with many, often highly correlated, risk factors. By minimizing a debiased objective function that accounts for the clustered structure of SNP–exposure associations, MVMR-PACS provides principled estimation and transparent grouping. It adaptively identifies clusters of risk factors with high correlation or indistinguishable effects, and interprets the direct effect of a signal-group as a weighted average of its members’ causal effects. 
When risk factors are not highly correlated, or when there is sufficient instrument strength to distinguish their individual effects, MVMR-PACS treats them as distinct signal groups. This adaptive, data-driven behavior makes MVMR-PACS broadly applicable. We formally demonstrate that MVMR-PACS exhibits the oracle property of variable selection: it can consistently identify causal risk factors in the presence of many weak instruments (Methods, Section \ref{sec: theory}).

To address the challenge of inference following risk factor selection, we extended the data thinning approach to summary-data MVMR. This approach generates independent analytic datasets of summary statistics that retain the same number of SNPs as the original dataset while preserving key information necessary for both causal effect estimation and signal group identification. By using one dataset for signal group identification and an independent dataset for inference, our method yields valid post-selection inference when the true model is correctly identified. Moreover, this approach is broadly applicable to other tasks in MVMR studies, such as cross-validation, where traditional sample-splitting based on available instruments may be infeasible or inappropriate. Therefore, it serves as a valuable tool that complements existing methods in the MVMR literature.

Our method has several limitations. First, signal groups identified by MVMR-PACS may comprise heterogeneous components without feasible intervention targets to modify the entire group. We therefore advise against interpreting the estimated direct effect of a signal group in the traditional way---{as the effect of a unit change in the exposure on the outcome}. Instead, the effect should be viewed as a weighted average of the direct causal effects of the risk factors within the group. A nonzero group effect indicates the presence of at least one true causal risk factor, but disentangling individual contributions will likely require new data and methods beyond MR.

Second, although Theorem \ref{theorem: PACS} establishes that MVMR-PACS enjoys the oracle property for risk factor selection in the presence of many weak instruments, this guarantee relies on specific conditions. A sufficient condition is that, as sample size grows, the ratio $\mu_{n,\min}/\sqrt{p}\to \infty$, where $p$ is the number of instruments and $\mu_{n,\min}/\sqrt{p}$ is the instrument strength parameter defined in \citet{wu2024more}. {This parameter reflects the weakest instrument strength across all linear combinations of exposures and serves as a diagnostic measure for estimators such as MVMR-IVW or SRIVW that aim to estimate each exposure’s direct effect. In practice, one can compute its empirical value, $\hat{\mu}_{n,\min}/\sqrt{p}$; values smaller than the rule-of-thumb threshold of 7, or negative values, indicate potential weak-instrument bias, as observed in our data application. However, this does not necessarily imply failure of MVMR-PACS. Unlike estimators that attempt to recover all individual exposure effects, MVMR-PACS adaptively aggregates highly correlated exposures or those with indistinguishable effects into signal groups, thereby mitigating instability from extremely weak instruments while still producing accurate estimates for groups with adequate conditional instrument strength. Additionally, our simulation study shows that MVMR-PACS maintains low MSE and good variable selection performance even when $\hat{\mu}_{n,\min}/\sqrt{p} < 0$, indicating that the instrument strength condition is sufficient but not necessary for satisfactory finite-sample performance. These observations suggest that the formal instrument-strength condition may be stronger than necessary for MVMR-PACS, since its practical goal in settings with highly correlated exposures is not to disentangle every individual effect but to identify and estimate interpretable signal groups.} Relaxing this requirement remains an important direction for future work.

{Third, as with any variable-selection method, MVMR-PACS may fail to correctly identify all signal-groups. This limitation reflects a fundamental challenge rather than a specific shortcoming of the method: when the available instruments contain insufficient independent information, no data-driven approach can fully separate highly correlated causal signals. Nevertheless, MVMR-PACS provides a transparent framework that allows users to incorporate prior knowledge to improve stability. The PACS penalty can be customized by setting selected pairwise weights to zero to prevent grouping between specific traits, or by assigning large positive weights to enforce grouping. As shown in Section~2.3, adjusting the correlation threshold is another way to influence the resulting group structure. This flexibility enables domain expertise to guide the grouping process, ensuring that the resulting signal-groups reflect both empirical and biological evidence rather than relying solely on data-driven process.}

{In practice, one can diagnose whether MVMR-PACS has likely identified reliable signal-groups through reasonable perturbations of data, such as repeated data-thinning runs or bootstrap resamples.  Stable group membership indicates robust clustering. Conversely, highly variable group composition across runs suggests uncertainty due to weak instruments or collinearity.}

Fourth, data thinning, used for both tuning parameter selection in MVMR-PACS and post-selection inference, introduces randomness when generating independent copies of the dataset, which may affect performance and signal-group identification. To ensure stability and reproducibility, we recommend repeating the procedure multiple times. For example, in tuning parameter selection, a five-fold data thinning procedure can be repeated several times, and the tuning parameter that minimizes the average loss across runs selected. This process can be parallelized to reduce computation time.

Summarizing post-selection inference across repeated runs is more nuanced, as signal-groups may vary slightly between iterations and the literature provides little guidance on how to integrate such results. We explored two strategies. The first is to focus on the most frequently observed grouping and compute the median of point estimates and standard errors for each signal-group across runs, from which confidence intervals and p-values are derived. A caveat is that datasets are independent within runs but not across runs. In addition, when instrument strength is very weak, signal-group identification may be inconsistent across repeated runs, and the most frequent grouping may not represent the most plausible structure. The second strategy, which we recommend in practice, is to descriptively summarize variation in groupings and inference results, for example using a heatmap as in Figure \ref{heatmap inference}. A similar approach was adopted by \citet{dharamshi2025generalized} to assess the stability of changepoint detection with data thinning.

Fifth, the confidence intervals constructed using our data thinning procedure are valid only if MVMR-PACS correctly identifies all signal-groups. Our theoretical results show that MVMR-PACS can consistently recover the true signal-groups under appropriate conditions (Methods, Section \ref{sec: theory}), a finding further supported by our simulations. These results suggest that the confidence intervals are asymptotically valid. In finite samples, however, failure to identify the correct model can lead to misspecification in the inference stage, resulting in invalid confidence intervals.

In summary, MVMR-PACS extends the methodological toolkit for high-dimensional MR by integrating principled risk factor selection with valid post-selection inference. By explicitly accounting for correlation among exposures and addressing the challenges posed by weak instruments, the framework provides a robust approach for disentangling complex causal relationships, particularly in genomic and molecular data.

\section{Methods}
\label{sec: method}
\vspace{-4mm}

\subsection{Model and identification} \label{sec: model and iden}

We consider the problem of estimating the causal effects of $K$ potentially correlated exposures $\bmX = (X_1,...,X_K)^T $ on an outcome $Y$, in the presence of unmeasured confounders $\bmU$. Let $\bmZ = (Z_1,\ldots,Z_p)^T$ denote $p$ independent SNPs, obtained through linkage disequilibrium (LD) pruning or clumping \citep{Hemani:2018ab}, which satisfy the three instrument assumptions. The causal model is given by
\begin{align}
    \Gamma_j =\gamma_{j1} \beta_{1}^* + \gamma_{j2} \beta_{2}^* + \dots + \gamma_{jK} \beta_{K}^*, \quad j=1,\dots, p \label{eq: true causal}
\end{align}
where $\Gamma_j$ is the marginal SNP–outcome association, $\gamma_{jk}$ is the marginal SNP–exposure association for exposure $k$, and $\beta_k^*$ is the direct effect of exposure $X_k$ on $Y$. This formulation arises from the structural equations
\begin{align}
	  &  X_k  = \sum_{j=1}^p Z_j \gamma_{jk} + f_k(\bm U, E_{Xk}),  \quad k=1,\dots, K, \label{eq: exp}  \\ 
        & \text{Var}(X_k) = 1, \quad k=1,\dots, K, \\
	  &  Y = \ \bm X^T \bbeta^* + g(\bm U, E_Y) ,\label{eq: out}
\end{align}
where $\bm\beta^* = (\beta_1^*,\beta_2^*,...,\beta_K^*)^T$, $E_Y$ is a random noise term,  $(E_Y, \bmU)$ is independent of $\bm Z$ by the second instrument assumption,
and $f_1,\dots, f_K, g$ are unspecified functions. We scale the exposures to unit variance so that the elements of $\bbeta^*$ are comparable. The causal model in \eqref{eq: true causal} remains approximately correct even when model \eqref{eq: out} is nonlinear, for example when $Y$ is binary or when $\bmX$ has nonlinear effects on $Y$ \citep{zhao2018statistical,Wang2021grapple}. In such cases, $\bm\beta^*$ represents average local effects of the exposures. From \eqref{eq: true causal}, $\bbeta^*$ is uniquely identified provided that the design matrix
\begin{align}
    \Pi = \bigg[\begin{matrix}
\gamma_{11} & \dots & \gamma_{1K}\\
\vdots & \ddots & \vdots \\
\gamma_{p1} & \dots & \gamma_{pK}
\end{matrix}\bigg]
\end{align}
has full column rank. In the following, we use $\bgamma_j$, $j = 1,...,p$, to denote its rows.

We focus on settings with many potentially correlated risk factors, where $K$ is large but fixed and the matrix $\Pi$ may suffer from multicollinearity. In such cases, standard MVMR estimators can be unstable and biased, particularly when estimation error in $\hat\Pi$ is large.

\subsection{Two-sample summary-data MVMR} \label{sec: setup}

Our framework is based on the two-sample summary-data MVMR setting, in which SNP–exposure and SNP–outcome associations are obtained from non-overlapping GWASs, while allowing for sample overlap across the exposure datasets. For each SNP $j=1,\dots,p$, we observe the vector of estimated SNP–exposure associations $\hat \bgamma_j = (\hat \gamma_{j1},\dots,\hat \gamma_{jK})^T$ from marginal regressions of each standardized exposure on SNP $j$, along with their standard errors $(\sigma_{Xj1},\dots,\sigma_{XjK})^T$. We also observe the SNP–outcome association estimate $\hat \Gamma_j$ from the marginal regression of the outcome $Y$ on SNP $j$, together with its standard error $\sigma_{Yj}$.

To account for potential sample overlap between exposure GWASs, we incorporate a shared positive-definite correlation matrix $\Sigma$, which captures correlations among elements of $\hat \bgamma_j$. The variance-covariance matrix of $\hat \bgamma_j$ is
$$
\Sigma_{Xj} = {\rm diag}(\sigma_{Xj1},...,\sigma_{XjK}) \Sigma {\rm diag}(\sigma_{Xj1},...,\sigma_{XjK}).
$$
Although $\Sigma$ is not directly available from GWAS summary statistics, it can be estimated using bivariate LD score regression \citep{bulik2015atlas} or from empirical correlations between GWAS $Z$-scores across null SNPs \citep{Wang2021grapple}.

{Given the large sample sizes in modern genome-wide association studies, we assume for each SNP $j$, $\hat \bgamma_j \sim N(\bgamma_j, \Sigma_{Xj})$ and $\hat \Gamma_j \sim N(\Gamma_j, \sigma_{Yj}^2)$ (see Assumption 1 in Supplement).}

\subsection{Instrument Selection and Data Preprocessing} \label{sec: iv selection}

We followed the approach of \cite{Wang2021grapple} to select $p$ SNPs associated with at least one risk factor at a genome-wide significance threshold of $5\times10^{-8}$. After initial screening, LD clumping was applied to obtain a set of approximately independent SNPs, a procedure demonstrated in our data application. When risk factors are measured on different scales, we recommend rescaling SNP–exposure associations and their standard errors by the reported standard deviation of each risk factor, ensuring that all estimated causal effects are comparable.

Throughout, we assume that the number of SNPs exceeds the number of risk factors and that these SNPs satisfy the instrumental variable assumptions. In practice, additional filtering methods \citep{Verbanck:2018aa,chan2024novel} can be incorporated during SNP selection to help exclude potentially invalid instruments and reduce the risk of violating the exclusion restriction.

\subsection{MVMR-PACS} \label{sec: mvmr pacs}

We propose MVMR-PACS, a novel estimator for MVMR designed for settings with many highly correlated risk factors. Unlike existing methods, MVMR-PACS explicitly accounts for the fact that $\hat \Pi$ is a random matrix with entries measured with error.

Most existing MVMR estimators ignore this source of uncertainty. For example, the widely used inverse-variance weighted (IVW) estimator \citep{burgess2015multivariable} minimizes
\begin{align}
    \mathcal{L}_{\rm IVW}(\bbeta) & = \frac{1}{2}(\hat \bGamma - \hat \Pi \bbeta)^T W (\hat \bGamma - \hat \Pi \bbeta) \label{eq: IVW} \nonumber \\
    & = \frac{1}{2} \bbeta \hat  \Pi^T W \hat \Pi \bbeta - \hat \bGamma^T W \hat \Pi \bbeta  + c,
\end{align}
where $\hat \bGamma = (\hat \Gamma_1,...,\hat \Gamma_p)$, $W = {\rm diag}(1/\sigma_{Y1}^2,...,1/\sigma_{Yp}^2)$ and $c$ is a constant independent of $\bbeta$.  The IVW estimator is known to suffer bias when instruments are weak \citep{wu2024more}. This bias arises because the loss function it minimizes differs from the oracle form,
\begin{align}
    \mathcal{L}_{\rm IVW}^{\rm oracle}(\bbeta) = \frac{1}{2} \bbeta  \Pi^T W \Pi \bbeta - \hat \bGamma^T W \Pi \bbeta  + c\label{eq: IVW true},
\end{align}
with the discrepancy between $\Pi$ and its estimate $\hat \Pi$ driving the bias when estimation error in $\hat \Pi$ is non-negligible.

To address this, we debias the IVW loss by noting that
\begin{align}
  &  E(\hat  \Pi^T W \hat \Pi) = \Pi^T W \Pi + V, \nonumber
  \quad E(\hat \bGamma^T W \hat \Pi) = \bGamma^T W \Pi \nonumber,
\end{align}
where $V = \sum_{j = 1}^p \Sigma_{Xj}\sigma_{Yj}^{-2}$. Minimizing the debiased loss function,
\begin{align}
\mathcal{L}_{\rm debiased}(\bbeta) = \tfrac{1}{2}\bbeta^T (\hat \Pi^T W \hat \Pi - V)\bbeta - \hat \bGamma^T W \hat \Pi \bbeta, \label{eq: dIVW}
\end{align}
is, on average, equivalent to minimizing the oracle loss in \eqref{eq: IVW true}. When $(\hat \Pi^T W \hat \Pi - V)$ is positive definite, this yields the debiased IVW estimator \citep{wu2024more}.

A challenge arises when $(\hat \Pi^T W \hat \Pi - V)$ is not positive definite, in which case the loss is non-convex and unbounded below. To restore convexity, we adopt the projection technique of \cite{datta2017cocolasso} and define the projected debiased loss function:
\begin{align}
    \mathcal{L}_{\rm debiased}^+(\bbeta) = \frac{1}{2} \bbeta^T  (\hat \Pi^T W \hat \Pi - V)_{+} \bbeta - \hat \bGamma^T W \hat \Pi \bbeta  \label{eq: dIVW +},
\end{align}
where $(M)_{+}$ denotes the nearest positive semi-definite matrix to $M$, computed as
\begin{align}
    (M)_{+} = \argmin_{M^{'} > 0} ||M-M^{'}||_{\rm max},
\end{align}
where $M^{'}$ is any positive definite matrix and $||M||_{\rm max} = \max_{i,j}|M_{ij}|$ denotes the elementwise maximum norm. 
This projection ensures convexity and stability by discarding information along directions in which causal effects are effectively unidentified (e.g., contrasts between almost collinear exposures), while preserving information in directions with sufficient instrument strength.

Building on this projected debiased loss, we define the MVMR-PACS estimator $\hat\bbeta_{\rm PACS}$ as the minimizer of
\begin{align}
    \mathcal{L}_{\rm PACS}(\bbeta, \lambda) & = \frac{1}{2} \bbeta^T  (\hat \Pi^T W \hat \Pi - V)_{+} \bbeta - \hat \bGamma^T W \hat \Pi \bbeta  \nonumber \\
    & + \lambda\bigg\{\sum_{k = 1}^K w_k |\beta_k| + \sum_{k < m} w_{km(-)} |\beta_k - \beta_m| + \sum_{k < m} w_{km(+)}|\beta_k + \beta_m|\bigg\} \label{eq: mvmr pacs},
\end{align}
where $\lambda$ is a tuning parameter and $w_k$, $w_{km(-)}$, and $w_{km(+)}$ are non-negative weights. Alternative forms of penalization are reviewed in Supplementary Section 3. The PACS penalty is particularly attractive because it adapts to the correlation structure of the exposures and does not require prior specification of grouping.

We adopt adaptive weighting schemes following \cite{sharma2013consistent}. The weights depend on an initial consistent estimator and the observed correlations $\hat r_{km}$ between SNP–exposure associations for exposures $k$ and $m$. The weights are defined as 
$$w_{k} = |\tilde{\beta_k}|^{-\tau}, \quad w_{km(-)} = (1-\hat r_{km})^{-\tau}|\tilde \beta_k - \tilde \beta_m|^{-\tau}, \quad w_{km(+)} = (1+\hat r_{km})^{-\tau}|\tilde \beta_k + \tilde \beta_m|^{-\tau}$$ 
for $k, m = 1,\dots,K$,
where $\tilde \bbeta_{\rm init} = (\tilde \beta_1,\dots,\tilde \beta_K)^T$ is a consistent and asymptotically normal initial estimator of $\bbeta$ under the conditions of Theorem \ref{theorem: PACS}, and $\tau$ is another tuning parameter. This construction ensures weaker penalization for effects or pairwise sums/contrasts that are large in magnitude, a critical feature for achieving the oracle property \citep{zou2006adaptive,sharma2013consistent}. Tuning parameters are selected using multi-fold data thinning (see Sections \ref{sec: data thin}).

By incorporating correlations between SNP–exposure associations, MVMR-PACS shrinks coefficients toward each other more heavily when $\hat r_{km}$ is close to one, or toward opposite values when $\hat r_{km}$ is close to minus one. This improves stability in the presence of highly correlated exposures that cannot be disentangled due to weak conditional instruments. When associations are not highly correlated, the method estimates each risk factor’s effect individually. 

We use the minimizer of $\mathcal{L}_{\mathrm{debiased}}^{+}(\boldsymbol{\beta})$ with a ridge penalty as the initial estimator (its consistency and asymptotic normality are established in Supplementary Section 8) and then  minimize $\mathcal{L}_{\rm PACS}(\bbeta,\lambda)$ using the local quadratic approximation algorithm of \citet{sharma2013consistent}, which provides efficient closed-form updates (see Supplementary Section 7 for details). When $w_{km(-)}=w_{km(+)}=0$, the PACS penalty reduces to the adaptive LASSO penalty \citep{zou2006adaptive}, yielding the MVMR-dLASSO estimator. As illustrated by MVMR-PACS-0.8 in our simulation study, we also consider a family of variants, denoted MVMR-PACS-$x$, which incorporate a correlation threshold of value $x$ into the weighting scheme to control the degree of grouping among correlated exposures (see Supplementary Section 5 for details).

\subsection{Signal-groups and estimand from MVMR-PACS} \label{sec: interpretation}

MVMR-PACS produces clusters of exposures, which we refer to as \emph{signal-groups}. A signal-group is defined as a set of risk factors that share nonzero effect estimates of equal magnitude (up to a specified decimal precision). Each signal-group can be viewed as a linear combination—typically a sum or difference—of the risk factors it contains, with the direction determined by the signs of their estimated effects. The key question is how to interpret the shared effect for each signal-group.

To illustrate, consider $K_1+K_2$ exposures. Suppose MVMR-PACS identifies two signal-groups: $G_1$, comprising the first $K_1$ exposures, and $G_2$, comprising the remaining $K_2$. A working model then estimates the direct causal effects of $G_1$ and $G_2$. As shown in Supplementary Section 4, these effects are linear combinations of the true causal effects of all exposures, including those outside the respective groups. The grouping becomes interpretable when the exposures within a signal-group are strongly correlated in their SNP–exposure associations or genuinely share the same causal effect, in which case the estimated group effect approximates a weighted average of the individual effects with positive weights.

When multicollinearity is severe, the true model may be unidentifiable, making the working model not only a pragmatic approximation but also a necessary alternative for stable estimation. Indeed, existing high-dimensional MVMR methods also rely on working models for effect estimation and variable selection. For example, MR-BMA averages over many combinations of risk factors, and MVMR-cML-SuSiE selects representative exposures from causal clusters. Each combination implicitly defines a working model. However, these approaches provide limited discussion of how to interpret the resulting estimands or of the conditions under which such working models approximate the true model.

%Notably, when multicollinearity is severe, the true model may be unidentifiable, making the working model not just a practical approximation but a necessary alternative for stable estimation. Furthermore, existing high-dimensional MVMR methods also rely on working models for effect estimation and variable selection. For example, MR-BMA constructs causal estimates by averaging over a large number of combinations of risk factors, while MVMR-cML-SuSiE selects representative exposures from signal clusters. Each combination of exposures implicitly defines a working model. Nevertheless, these methods provide limited discussion on how to interpret the resulting estimands or on the circumstances under which such working models approximate the true model well.

\subsection{Data thinning for summary-data MVMR} \label{sec: data thin}

We introduce a data-thinning approach \citep{neufeld2023data} to generate independent replicates of GWAS summary statistics, which can then be used for tasks such as cross-validation and post-selection inference.

In the MVMR literature, cross-validation is often implemented by sample splitting, where SNPs are randomly partitioned into folds \citep{grant2022efficient, zhao2023robust}. This approach is often infeasible when the number of instruments is small and SNP-level association distributions are heterogeneous. For post-selection inference, existing approaches generally use the same dataset for both model selection and inference, which can result in underestimated standard errors and inflated type I error due to “double dipping.’’ Although this limitation has been noted in prior work \citep{grant2022efficient, chan2024novel}, it remains unresolved. 

Data thinning provides a way forward \citep{neufeld2023data}. The idea is to decompose an observed SNP–exposure or SNP–outcome association into two or more independent parts that sum to the original estimate while retaining the same distributional structure, up to a known scaling factor. {For example, in two-fold thinning, for each SNP $j$ one can create two independent SNP-exposure association estimates by
\begin{align*}
    \hat \bgamma_j^{(1)} = \frac{\hat \bgamma_j}{2} + \epsilon_j, \qquad  \hat \bgamma_j^{(2)} = \hat \bgamma_j - \hat \bgamma_j^{(1)}, 
\end{align*}
where $\epsilon_j$ is drawn independently from $N(0,\frac{\Sigma_{Xj}}{4})$, 
so that marginally,
\begin{align*}
     \hat \bgamma_j^{(1)} \sim N(\frac{\bgamma_j}{2},\frac{\Sigma_{Xj}}{2}), \qquad  \hat \bgamma_j^{(2)} \sim N(\frac{\bgamma_j}{2},\frac{\Sigma_{Xj}}{2}),
\end{align*}
with $\hat \bgamma_j^{(1)}$ and $\hat \bgamma_j^{(2)}$ independent. SNP–outcome associations can be decomposed analogously.} The resulting two independent datasets, $\left \{\hat \bgamma_j^{(1)},\Sigma_{Xj}/2,\hat \Gamma_j^{(1)},\sigma_{Yj}^2/2, j = 1,...,p \right \}$ and\\ $\left \{\hat \bgamma_j^{(2)},\Sigma_{Xj}/2,\hat \Gamma_j^{(2)},\sigma_{Yj}^2/2, j = 1,...,p \right \}$, each contain all $p$ SNPs and preserve the causal relationship in equation \eqref{eq: true causal}, while providing independent replicates for downstream analysis. The main consequence of generating independent replicates is a reduction in instrument strength. As shown in Supplementary Section 6, the instrument strength parameter \cite{wu2024more} is halved under standard two-fold thinning, making it necessary to use weak-instrument robust methods for inference.

Although the above procedure partitions data into two parts, it can be extended to create $M$ independent datasets with potentially uneven information allocation, a strategy we refer to as multi-fold data thinning. A general procedure for multi-fold thinning in MVMR is provided in Supplementary Section 6.

\subsection{Cross-validation} \label{sec: tuning paramter}

We use five-fold data thinning and select tuning parameters by minimizing the projected debiased loss function
\begin{align}
    \sum_{i=1}^5 \frac{1}{2} \hat \bbeta_{-i}(\bm\delta)^T  (\hat \Pi_{i}^T W_i \hat \Pi_{i} - V_{i})_{+} \hat \bbeta_{-i}(\bm\delta) - \hat \bGamma_{i}^T W_i \hat \Pi_i \hat \bbeta_{-i}(\bm\delta)   \label{obj: cv} ,
\end{align}
where $\bm \delta = (\lambda, \tau)$ denotes the vector of tuning parameters, $\hat \Pi_i$, $W_i$, $V_i$, $\hat \bGamma_i$ are based on the $i$th replicate of the summary statistics, and $\hat \bbeta_{-i}(\bm\delta)$ is obtained based on the summary statistics summed over the remaining replicates (see Supplementary Section 6).

In high-dimensional settings, parsimony is often desirable. To encourage simpler models, we apply the one-standard-error (1SE) rule. After computing the cross-validated loss for a grid of candidate tuning parameter values, we identify the parameter choice that minimizes the average loss and calculate the corresponding standard error. {Among all candidates whose average loss lies within one standard error of the minimum, we first select the largest $\lambda$, then the largest $\tau$ among candidates sharing that $\lambda$ value.} This strategy promotes sparsity and grouping of correlated risk factors while maintaining near-optimal predictive accuracy, thereby mitigating overfitting and reducing false discoveries. Choices of $\lambda$ and $\tau$ are further discussed in the next section, where we present the theoretical results. 

\subsection{Theoretical analysis} \label{sec: theory}

\subsubsection{MVMR-PACS}

We establish that MVMR-PACS enjoys the oracle properties of variable selection and signal-group identification. Let $\bbeta^* = (\beta_1^*, \beta_2^*,\dots, \beta_{K_1}^*,\beta_{K_1+1}^*,\dots,\beta_{K_1+K_0}^*)^T$ denote the vector of true causal effects for $K = K_1 + K_0$ risk factors,
where the first $K_1$ are causal ($\beta_k^* \neq 0$ for $k=1,\dots,K_1$) and the remaining $K_0$ are non-causal. Suppose the causal risk factors can be partitioned into signal-groups $\mathcal{G} = \{G_1,\dots,G_L\}$ such that within each group $G_l$ all effects share the same magnitude, $|\beta_k^*| = \alpha_l$ for $k \in G_l$, with $\alpha_l > 0$. Since signs may differ within a group, we write $\beta_k^* = s_k \alpha_l$ for $k \in G_l$, where $s_k \in \{-1,+1\}$ denotes the sign.

To uniquely represent grouped effects, we define a canonical representative for each group. For $l=1,\dots,L$, let $k_l = \min\{k: k \in G_l\}$ denote the smallest index in group $G_l$, and set the group sign as $\bar s_l = s_{k_l} \in \{-1,+1\}$. We then construct an $L \times K$ matrix $C_g$ with entries
\begin{align*}
    C_{g, lk} = \begin{cases}
        \frac{\bar s_l s_{k}}{|G_l|}, & \text{$k \in G_l$} \\
        0, & \text{otherwise}
    \end{cases},
\end{align*}
where $|G_l|$ is the size of group $G_l$. The grouped causal effects are defined as $\bar \bbeta^* = (\bar \beta_1^*,\dots,\bar \beta_L^*)^T =  C_g \bbeta^* \in \mathbb{R}^L$, with
\begin{align*}
    \bar \beta_l^* = \frac{\bar s_l}{|G_l|}\sum_{k\in G_l} s_k\beta_k^*.
\end{align*}
As an illustration, consider five risk factors with $\beta_1^* = \beta_2^*$, $\beta_3^* = -\beta_4^*$, and $\beta_5^*=0$, where $|\beta_1| \neq |\beta_3|$. This yields three groups (two causal and one non-causal) and the grouping matrix 
$$C_g = \begin{bmatrix}
    0.5 & 0.5 & 0 & 0 & 0 \\
    0 & 0 & 0.5 & -0.5 & 0
\end{bmatrix},$$ 
which maps $\bbeta^*$ to the group-level effects $\bar \bbeta^*$ and drops the exposures with no effects.

To study the consistency of variable selection and group identification, we follow \cite{sharma2013consistent} and define the over-parameterized vector $\bm \theta^* = M \bbeta^*$, where $M$ is a $K^2 \times K$ matrix given by $M = [I_K \ D_{(-)}^T \ D_{(+)}^T]^T$. The matrix $D_{(-)}$ has dimension $\frac{K(K-1)}{2} \times K$ with entries $\pm 1$, such that $D_{(-)}\bbeta$ yields all pairwise differences of $\bbeta$ (for example, $\beta_2 - \beta_1$, $\beta_3 - \beta_1$, \dots, $\beta_K - \beta_{K-1}$). Similarly, $D_{(+)}$ is a $\frac{K(K-1)}{2} \times K$ matrix with entries equal to $+ 1$, such that $D_{(+)}\bbeta$ gives all pairwise sums of $\bbeta$. 

Let $\mathcal{B} = \{k: \theta_k^* \neq 0, k = 1,...,K^2\}$ denote the set of nonzero indices of $\bm \theta^*$. Its complement, $\mathcal{B}^c$, specifies which risk factors have no causal effects and which share effects of identical magnitude and direction, thereby defining the oracle sparsity pattern and grouping structure. Let $\mathcal{B}_n = \{k: \hat \theta_k^* \neq 0, k=1,\dots,K^2\}$ denote the corresponding set estimated by MVMR-PACS, where $\hat {\bm \theta}^* = M \hat \bbeta_{\rm PACS}$ and $\hat \bbeta_{\rm PACS}$ is defined in equation \eqref{eq: mvmr pacs}.

Let $\bar \bgamma_j = G \bgamma_j$ denote the true association of SNP $j$ with the $L$ causal groups, where $G$ is an $L\times K$ matrix analogous to $C_g$ but without the $|G_l|$ scaling. Specifically,
\begin{align*}
    G_{lk} = \begin{cases}
        \bar s_l s_{k}, & \text{$k \in G_l$} \\
        0, & \text{otherwise}
    \end{cases},
\end{align*}
so that $G$ collapses elements of $\bgamma_j$: it sums those with nonzero effects of the same magnitude and direction, subtracts those with opposite effects, and omits those with zero effects.

\begin{theorem} \label{theorem: PACS}
    (Oracle properties of MVMR-PACS)  Under the standard MVMR Assumptions 1-2 (see Supplementary Section 8), if ${\mu_{n,\min}}/{\sqrt{p}}\rightarrow \infty$ and $\max_j \gamma_{jk}^2\sigma_{Xjk}^{-2}/(\mu_{n,\min} + p) \rightarrow 0$ for all $k$ as $n\to\infty$, and if the tuning parameters satisfy $\lambda_n/r_n \rightarrow 0$, $\lambda_n r_n^{\tau - 1} \rightarrow \infty$ with $r_n = \mu_{n,\min}/\sqrt{\mu_{n,\min} + p}$ and $\lambda_nr_n^{\tau+1}/\mu_{n,\max}\rightarrow \infty$, then the MVMR-PACS estimator has the following properties:
    \begin{enumerate}
        \item Consistency in risk factor selection: $\lim_n \probP(\mathcal{B}_n = \mathcal{B}) = 1$. 
        \item Asymptotic normality: $
    \mathbb{V}_{g}^{-\frac{1}{2}} (\Pi_g^T W \Pi_g)(C_g\hat \bbeta_{{\rm PACS}} - C_g\bbeta^*)  \xrightarrow[]{D} N(\bm 0, I_{L})$.
    \end{enumerate}
Here, $\mathcal{B} = \{k: \theta_k \neq 0, k=1,...,K^2\}$, $\mathcal{B}_n = \{k:  \hat \theta_k^* \neq 0\}$, $\Pi_g =  \Pi G^T$, $V_{j, g} = G \Sigma_{Xj}\sigma_{Yj}^{-2}G^T$, $\mathbb{V}_g = \sum_{j=1}^{p}\big\{  (1+ \bar \bbeta^{*T}V_{j,g}\bar\bbeta^*) (G\bgamma_j\bgamma_j^T\sigma_{Yj}^{-2}G^T+ V_{j,g})+V_{j,g}\bar \bbeta^*\bar \bbeta^{*T}V_{j,g}\big\}$, with $I_{L}$ denoting the $L\times L$ identity matrix.
\end{theorem}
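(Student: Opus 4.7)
My plan is to combine two ingredients. First, the oracle-property argument of \citet{sharma2013consistent} for the PACS penalty, which uses the $K^2\times K$ encoding $\bm\theta = M\bbeta$ to rewrite the three-part PACS penalty as a single weighted $\ell_1$ penalty on $\bm\theta$, so that selection consistency reduces to an adaptive-LASSO-type statement about the over-parameterized vector. Second, the weak-instrument asymptotics for the projected debiased IVW loss from \citet{wu2024more} and Supplementary Section~8, which give the initial ridge-debiased estimator $\tilde\bbeta_{\mathrm{init}}$ consistency and asymptotic normality at the rate $r_n = \mu_{n,\min}/\sqrt{\mu_{n,\min}+p}$. The overall template is the Fan--Li/Zou oracle argument, but carried out on the projected debiased loss rather than a classical least-squares loss.

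I would first establish the rate of convergence of $\hat\bbeta_{\mathrm{PACS}}$. Define $V_n(u) := \mathcal L_{\mathrm{PACS}}(\bbeta^* + u/r_n,\lambda_n) - \mathcal L_{\mathrm{PACS}}(\bbeta^*,\lambda_n)$; its quadratic part $\tfrac{1}{2r_n^2}\,u^T(\hat\Pi^T W \hat\Pi - V)_+ u$ tends to a positive quadratic form on the identifiable subspace, its cross term is $O_p(1)$ through the Gaussian decomposition of the debiased score already used in \citet{wu2024more}, and the penalty increment is $o_p(1)$ for coordinates with $\theta^*_k\neq 0$ because $\lambda_n/r_n\to 0$ and the associated adaptive weights stay bounded in probability. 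Convexity of $V_n$ then yields $\hat u := r_n(\hat\bbeta_{\mathrm{PACS}}-\bbeta^*) = O_p(1)$. Selection consistency follows from a KKT argument on the reparameterized objective: for $k\in\mathcal B^c$, $|\tilde\theta_k| = O_p(r_n^{-1})$, so the adaptive weight is of order $w_k \asymp r_n^\tau$ (whether $\theta_k$ encodes an individual $\beta_k$, a difference $\beta_k-\beta_m$, or a sum, including the $(1\pm\hat r_{km})^{-\tau}$ factor for the latter two). After local rescaling, the effective penalty on a unit perturbation in coordinate $k$ is of order $\lambda_n r_n^{\tau-1}\to\infty$, which dominates both the $O_p(1)$ score (controlled by the max-term condition $\max_j\gamma_{jk}^2\sigma_{Xjk}^{-2}/(\mu_{n,\min}+p)\to 0$) and the local quadratic contribution of order $\mu_{n,\max}/r_n^2$, the latter being where $\lambda_n r_n^{\tau+1}/\mu_{n,\max}\to\infty$ enters. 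This forces $\hat\theta_k = 0$ with probability tending to one, uniformly over the finite set $\mathcal B^c$, so $\probP(\mathcal B_n=\mathcal B)\to 1$.

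On the event $\{\mathcal B_n=\mathcal B\}$, the PACS problem collapses to an unpenalized projected debiased minimization in the grouped parameter $\bar\bbeta = C_g\bbeta$, with design $\Pi_g = \Pi G^T$ and SNP-level noise $V_{j,g} = G\Sigma_{Xj}\sigma_{Yj}^{-2}G^T$; this uses the identity $\Pi\bbeta^* = \Pi_g\bar\bbeta^*$, which follows directly from the definitions of $C_g$ and $G$. Linearising its first-order condition gives
\begin{equation*}
(\Pi_g^T W \Pi_g)(C_g\hat\bbeta_{\mathrm{PACS}} - C_g\bbeta^*) = \sum_{j=1}^p \sigma_{Yj}^{-2}\bigl\{ \hat{\bar\bgamma}_j(\hat\Gamma_j - \hat{\bar\bgamma}_j^T\bar\bbeta^*) + V_{j,g}\bar\bbeta^* \bigr\} + o_p(1),
\end{equation*}
with $\hat{\bar\bgamma}_j = G\hat\bgamma_j$. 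The summands are independent and mean zero, and a Lindeberg--Feller CLT, again justified by the same max-term condition, delivers the stated normality. The three pieces of $\mathbb V_g$ arise, respectively, from $\mathrm{Var}(\hat{\bar\bgamma}_j\hat\Gamma_j)$, from $\mathrm{Var}(\hat{\bar\bgamma}_j\hat{\bar\bgamma}_j^T\bar\bbeta^*)$ after recentering via the bias correction $V_{j,g}\bar\bbeta^*$, and from the product-of-Gaussians cross term giving $V_{j,g}\bar\bbeta^*\bar\bbeta^{*T}V_{j,g}$.

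The hard part, I expect, will be reconciling the projection operator $(\cdot)_+$ with the PACS-induced grouping. One needs to show that $(\cdot)_+$ is asymptotically inactive on the subspace $G^T\mathbb{R}^L$ selected by the true grouping, where the restricted Hessian $\Pi_g^T W \Pi_g$ diverges under $\mu_{n,\min}/\sqrt p\to\infty$, while the directions along which it discards information are precisely the near-collinear contrasts that the adaptive weights $w_{km(\pm)}$ simultaneously drive to zero. Making this matching quantitative — in particular, controlling the boundary cases where $\hat r_{km}\to\pm 1$ and $\tilde\beta_k\mp\tilde\beta_m\to 0$ at comparable rates, so that $w_{km(\pm)}$ blows up at the advertised rate rather than some faster or slower one — is where the most careful bookkeeping will sit.
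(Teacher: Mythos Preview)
Your approach matches the paper's: reparameterize via $\bm u = r_n(\bbeta-\bbeta^*)$, show the penalty forces the oracle structure on $\hat{\bm u}$ by comparing the rate $\lambda_n r_n^{\tau-1}$ against the quadratic term of order $\mu_{n,\max}/r_n^2$, reduce to the grouped design $\Pi_g$ via the identity $G^TC_g\bbeta^* = \bbeta^*$ (which the paper isolates as a separate lemma), and then invoke the debiased-IVW CLT machinery of \citet{wu2024more} on the collapsed problem. The selection-consistency half is finished by a KKT contradiction, splitting $\mathcal B^c$ into the three cases $\beta_m^*=0$, $\beta_m^*=\beta_l^*$, $\beta_m^*=-\beta_l^*$.

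The two points you flag as delicate are resolved more simply than you anticipate. First, the projection $(\cdot)_+$ is not merely inactive on the grouped subspace $G^T\mathbb R^L$ but on all of $\mathbb R^K$: under $\mu_{n,\min}/\sqrt p\to\infty$ and Assumption~2, Lemma~7 of \citet{wu2024more} gives $S_n^{-1}(\hat\Pi^TW\hat\Pi - \Pi^TW\Pi - V)S_n^{-T}\to_p 0$, so the minimum eigenvalue of $\hat\Pi^TW\hat\Pi - V$ is $\Theta_p(\mu_{n,\min})$ and the matrix is positive definite with probability tending to one. The projection is therefore the identity for large $n$, and no subspace-matching argument is needed at all. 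Second, the correlation-weight boundary you worry about is handled by assumption: the paper requires $\hat c_{km(\pm)}\to c_{km(\pm)}$ with $0<c_{km(\pm)}<\infty$, which excludes $\hat r_{km}\to\pm 1$ and makes each $w_{km(\pm)}$ behave exactly like a standard adaptive weight times a bounded constant, so no extra bookkeeping is required.
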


The proof is provided in the Supplement. Assumptions 1 and 2 are standard and were also used in \cite{wu2024more}. The scalar $\mu_{n,\min}$ is an instrument strength parameter that characterizes the rate at which instrument strength increases along the weakest direction among all linear combinations of exposures as $p\to \infty$, which determines the asymptotic behavior of MVMR estimators. The condition $\max_j (\gamma_{jk}^2{\sigma_{Xjk}^{-2}} ) /(\mu_{n,\min} + p) \rightarrow 0$ for all $k$ ensures that no single SNP dominates the instrument strength and is required to verify Lindeberg’s condition for the central limit theorem. The requirement ${\mu_{n,\min}}/{\sqrt{p}} \rightarrow \infty$ sets a minimum instrument strength condition needed for consistency and asymptotic normality of the initial estimator. By contrast, \citet{wu2024more} showed that MVMR-IVW requires the stronger condition $\mu_{n,\min}/p^2 \to \infty$ to achieve consistency, which is generally impractical. Thus, ${\mu_{n,\min}}/{\sqrt{p}} \to \infty$ represents a substantially weaker and more realistic condition on instrument strength.

The additional requirements $\lambda_n/r_n \to 0$, $\lambda_n r_n^{\tau - 1} \to \infty$, and $\lambda_n r_n^{\tau + 1}/\mu_{n,\max}\to \infty$ are conditions on the tuning parameters.
One regime where these conditions are satisfied is when $\mu_{n,\min} = \Theta(p)$, a typical rate under many weak instruments. We write $a = \Theta(b)$ if there exists a constant $c > 0$ such that $c^{-1} b \leq |a| \leq c b$. In this setting, $r_n = \Theta(\sqrt{p})$ and $\lambda_n = o(\sqrt{p})$. The maximum strength parameter $\mu_{n,\max} $ typically grows no faster than $\Theta(n)$, and in most MVMR studies it is reasonable to assume $n = O(p^2)$. Under this regime, the conditions $\lambda_n r_n^{\tau-1} \to \infty$ and $\lambda_n r_n^{\tau+1}/\mu_{n,\max} \to \infty$ are readily achievable. For instance, setting $\tau = 3$ and $\lambda_n = O(p^{1/4})$ suffices.

\subsubsection{Choices of tuning parameters}

In practice, $\mu_{n,\min}$ is estimated by $\hat \mu_{n,\min}$, the minimum eigenvalue of the sample instrument strength matrix \citep{wu2024more}. We performed 5-fold data thinning to choose $\lambda_n$ from a grid of values of the rate $(\hat \mu_{n,\min}/\sqrt{\hat \mu_{n,\min} + p})^{2/3}$ if $\hat \mu_{n,\min} > p$, or $(p/2)^{1/3}$ otherwise, and chose $\tau$ from $(0.5,1,2,3)$.

\subsection{Simulation studies} \label{sec: sim details}

We generated summary statistics by first simulating individual-level data. We considered $K=10$ risk factors and $p = 500$ SNPs, with the correlation structure of SNP–exposure associations mimicking that in Figure~1c. Specifically, 20\% of SNPs were associated with all risk factors, 40\% with the first six, and the remaining 40\% with the last four. True $\gamma_{jk}$ values were drawn from mean-zero normal distributions and fixed across simulations.

Let $\Sigma_{\rho, K}$ denote a $K\times K$ compound symmetry matrix with diagonal entries equal to 1 and off-diagonal entries equal to $\rho$, where $\rho$ controls the level of within-cluster correlations. Let $\bm c_{a\times b}$ be an $a\times b$ matrix of constant $c$. For SNPs associated with the first six risk factors,
\begin{align*}
    (\gamma_{j1},\dots,\gamma_{j6})^T \sim N(\bm 0, \sigma_{\gamma}^2 \Sigma_{\rm cluster 1}) \quad \text{and} \quad (\gamma_{7},\dots,\gamma_{j10})^T = \bm 0,
\end{align*} 
where 
\begin{align*}
    \Sigma_{\rm cluster 1} = \begin{bmatrix}
        \Sigma_{0.995, 3} & {\bm {0.5}}_{3\times 3} \\
        {\bm {0.5}}_{3\times 3} & \Sigma_{0.9, 3}
    \end{bmatrix}.
\end{align*}
For SNPs associated with the last four risk factors,
\begin{align*}
    (\gamma_{j1},\dots,\gamma_{j6})^T = \bm 0 \quad \text{and} \quad (\gamma_{j7},\dots,\gamma_{j10})^T \sim N(\bm 0, \sigma_{\gamma}^2 \Sigma_{\rm cluster 2}),
\end{align*}
with $\Sigma_{\rm cluster 2} = \Sigma_{0.3, 4}$. For SNPs associated with all ten risk factors,
\begin{align*}
    (\gamma_{j1},\dots,\gamma_{j10})^T \sim N(\bm 0, \sigma_{\gamma}^2 \Sigma_{\rm all}),
\end{align*}
where 
\begin{align*}
    \Sigma_{\rm all} = \begin{bmatrix}
        \Sigma_{\rm cluster 1} & {\bm {0.3}}_{4\times 4} \\
        {\bm {0.3}}_{4\times 4} & \Sigma_{\rm cluster 2}
    \end{bmatrix}.
\end{align*}

For each individual $i = 1,...,n$, SNPs $Z_{ij}$ were generated with minor allele frequencies $\text{MAF}_j \sim \text{Unif}(0.01,0.5)$. An unmeasured confounder $U_i \sim N(0, \sigma_u^2)$ was generated. The exposures $X_{ik}$'s and outcome $Y_{i}$ were simulated as
\begin{align*}
    & X_{ik} = \bm Z_i^T \bgamma_{k} + U_i + e_{ik},\quad k = 1,\dots,K, \\
    & Y_i = \bm X_i^T \bbeta^* + U_i + E_i
\end{align*}
where $\bm Z_i = (Z_{i1},\dots,Z_{ip})^T$, $\bm X_i = s(X_{i1},\dots,X_{iK})^T$ representing the standardized vector of exposures, $e_{ik} \sim N(0, \sigma_{e}^2)$, and $E_i \sim N(0, \sigma_{e}^2)$.  We set $\sigma_{\gamma} = 0.001$, $\sigma_{u} = 2$ and $\sigma_{e} = 1$, corresponding to a many–weak–instruments setting with approximately 2\% heritability across the risk factors. 

This process was repeated to generate two independent datasets. Marginal SNP–exposure associations and their standard errors were obtained from simple linear regressions in one dataset, and SNP–outcome associations and their standard errors from regressions in the other. The true causal effects were set to $\bbeta^* = (1,1,1,0,0,0,0,0,0.5,0)$, with sample sizes $n = 1\times 10^5,\ 2\times 10^5,\ \text{and}\ 3\times 10^5$.

We compared MVMR-PACS against alternative estimators listed in Table \ref{tab:methods_summary}. IVW-LASSO was tuned by the 1SE rule. MR-BMA was implemented with a prior inclusion probability of 0.5 and default hyperparameters. For MVMR-cML-SuSiE, we first screened the ten risk factors using univariable MR-cML \citep{xue2021constrained}, then ran MVMR-cML-SuSiE initialized with the MVMR-cML \citep{lin2023robust} estimates for the selected factors. Risk factor selection was assessed by inclusion in any causal cluster (posterior inclusion probability greater than $1/K_s$, where $K_s$ is the number of selected risk factors from the initial screening).

\section*{Data Availability}

All data used in our study is in the public domain. Our study is based on publicly available summary-level data on genetic associations, which are referenced in Supplementary Table S5.

\section*{Code Availability}

R code for MVMR-PACS is available at \url{https://github.com/yinxiangwu/MVMR-PACS}; R code for MR-BMA and IVW-Lasso is avaialble at \url{https://github.com/verena-zuber/demo_AMD}; R code for MVMR-cML-SuSiE is available at \url{https://github.com/lapsumchan/MVMR-cML-SuSiE}; R code for SRIVW is available at \url{https://github.com/tye27/mr.divw}. R code for the simulation study and real data application is available at \url{https://github.com/yinxiangwu/MVMR-PACS}.

\clearpage

\spacingset{1.0}
\bibliography{reference}

\section*{Acknowledgments}
T.Y.’s research was partially supported by the National Institute Of General
Medical Sciences of the National Institutes of Health under Award Number R35GM155070.

\section*{Author Contributions}

Y.W. and T.Y. conceived and designed the study. Y.W. performed the statistical analyses and theoretical development under the supervision of T.Y. N.M.D. contributed to the study design and interpretation of the results. Y.W. and T.Y. wrote the manuscript, and all authors reviewed and approved this version.

\subsection*{Corresponding Author}
Correspondence to Ting Ye.

\clearpage

\pagenumbering{arabic}
\setcounter{equation}{0}
\setcounter{table}{0}
\setcounter{figure}{0}
\setcounter{section}{0}
\setcounter{lemma}{0}
\setcounter{assumption}{0}
\setcounter{theorem}{0}
\renewcommand{\theequation}{S\arabic{equation}}
\renewcommand{\thetable}{S\arabic{table}}
\renewcommand{\thelemma}{S\arabic{lemma}}
\renewcommand{\thesection}{S\arabic{section}}
\renewcommand{\thefigure}{S\arabic{figure}}

\title{\LARGE Supplement to
``Group Identification and Variable Selection in Multivariable Mendelian Randomization
with Highly-Correlated Exposures''}
\maketitle

\section{Additional simulation details and results}

\begin{table}[ht]
\centering
\caption{Numerical results from simulations with sample sizes $N=1\times10^5$, $2\times10^5$, and $3\times10^5$, reporting median mean squared error (MSE), correct sparsity, sensitivity, and false positive rate across 1000 replicates. MVMR-PACS-0.8 is a variant of MVMR-PACS, incorporating a correlation threshold of 0.8 into its weighting scheme. MR-BMA-0.2, MR-BMA-0.5, and MR-BMA-0.8 denote Bayesian model averaging with MIP thresholds of 0.2, 0.5, and 0.8; they all share the same model-averaged effect estimates and hence the same MSE estimates. MSE is not reported for MVMR-cML-SuSiE and marked as ``NA''.}\label{main sim table}
\centering
\resizebox{\ifdim\width>\linewidth\linewidth\else\width\fi}{!}{
\begin{tabular}[t]{c|ccc|ccc|ccc|ccc}
\toprule
\multicolumn{1}{c}{ } & \multicolumn{3}{c}{MSE} & \multicolumn{3}{c}{Correct Sparsity} & \multicolumn{3}{c}{Sensitivity} & \multicolumn{3}{c}{False Positive} \\
\cmidrule(l{3pt}r{3pt}){2-4} \cmidrule(l{3pt}r{3pt}){5-7} \cmidrule(l{3pt}r{3pt}){8-10} \cmidrule(l{3pt}r{3pt}){11-13}
Estimator & $1\times10^5$ & $2\times10^5$ & $3\times10^5$ & $1\times10^5$ & $2\times10^5$ & $3\times10^5$ & $1\times10^5$ & $2\times10^5$ & $3\times10^5$ & $1\times10^5$ & $2\times10^5$ & $3\times10^5$\\
\midrule
MVMR-IVW & 1.087 & 0.756 & 0.622 & 0.659 & 0.737 & 0.763 & 0.225 & 0.396 & 0.448 & 0.052 & 0.035 & 0.026\\
\cmidrule{1-13}
SRIVW & 7.047 & 9.585 & 6.905 & 0.736 & 0.708 & 0.701 & 0.342 & 0.274 & 0.258 & 0.002 & 0.003 & 0.003\\
\cmidrule{1-13}
IVW-LASSO & 1.146 & 0.951 & 0.825 & 0.849 & 0.895 & 0.946 & 0.699 & 0.783 & 0.893 & 0.051 & 0.031 & 0.018\\
\cmidrule{1-13}
MVMR-dLASSO & 5.757 & 5.759 & 4.387 & 0.73 & 0.771 & 0.821 & 0.324 & 0.427 & 0.552 & 0.000 & 0.000 & 0.000\\
\cmidrule{1-13}
MVMR-PACS & 0.28 & 0.215 & 0.102 & 0.883 & 0.944 & 0.97 & 0.835 & 0.893 & 0.945 & 0.085 & 0.022 & 0.013\\
\cmidrule{1-13}
MVMR-PACS-0.8 & 0.255 & 0.25 & 0.071 & 0.909 & 0.941 & 0.976 & 0.777 & 0.855 & 0.942 & 0.003 & 0.002 & 0.001\\
\cmidrule{1-13}
MRBMA-0.2 & 0.57 & 0.369 & 0.279 & 0.413 & 0.498 & 0.579 & 0.984 & 0.997 & 1.000 & 0.968 & 0.835 & 0.702\\
\cmidrule{1-13}
MRBMA-0.5 & 0.57 & 0.369 & 0.279 & 0.753 & 0.853 & 0.896 & 0.862 & 0.973 & 0.994 & 0.319 & 0.228 & 0.169\\
\cmidrule{1-13}
MRBMA-0.8 & 0.57 & 0.369 & 0.279 & 0.817 & 0.898 & 0.938 & 0.691 & 0.851 & 0.924 & 0.098 & 0.071 & 0.053\\
\cmidrule{1-13}
MVMR-cML-SuSiE & NA & NA & NA & 0.738 & 0.757 & 0.799 & 0.451 & 0.518 & 0.578 & 0.071 & 0.083 & 0.054\\
\bottomrule
\end{tabular}}
\end{table}

\clearpage

\subsection{Graphical results}

\begin{figure}[H]
    \centering
    \includegraphics[width=10cm]{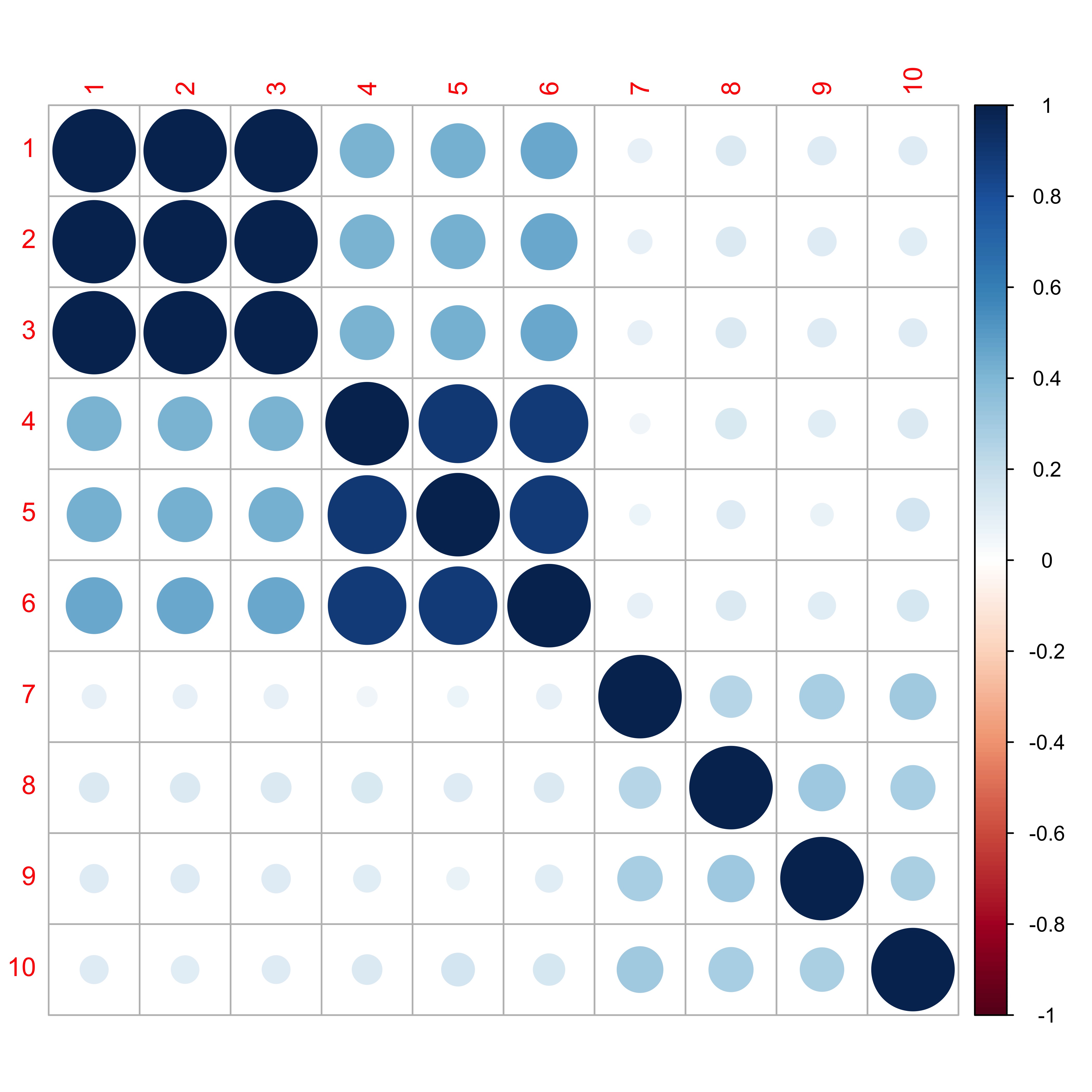}
    \caption{{\protect\footnotesize The correlation matrix of \textit{true} SNP-exposure associations across 500 SNPs.}}
\end{figure}

\begin{figure}[H]
    \centering
    \includegraphics[width=\textwidth]{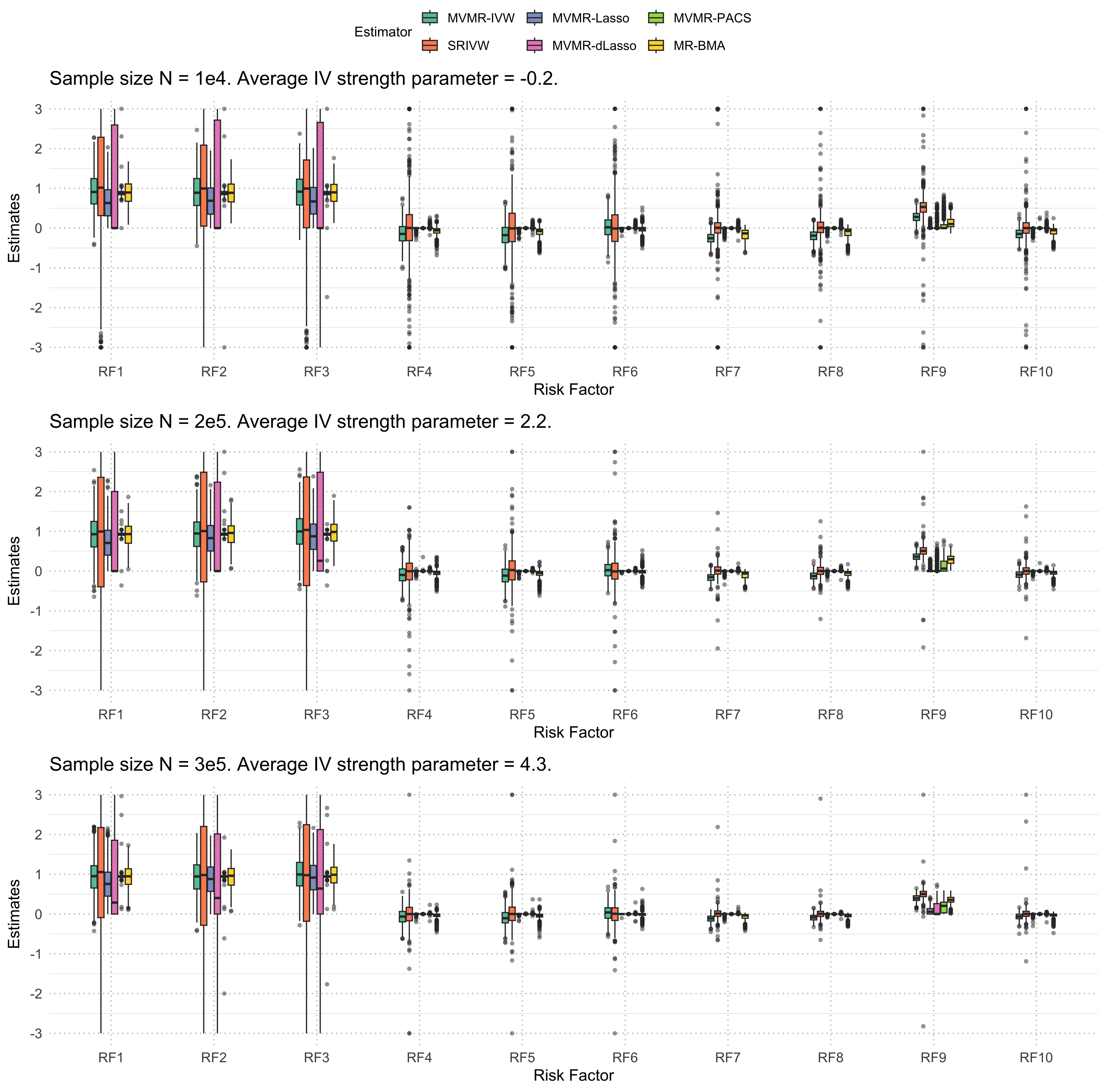}
    \caption{{\protect\footnotesize Boxplots of point estimates from different estimators across different exposures and sample sizes. Y-axis is cupped between -3 and 3 for better visibility. Abbreviations: RF: risk factor.}}
\end{figure}

\subsection{Post-selection inference results}

\begin{table}[H]
    \centering
    \begin{tabular}{c|c|c|c|c}
        \toprule
        Sample size & instrument strength & Correct Sparsity & Sensitivity & False Positive \\
        \midrule
        1e5 & 44.5 & 0.834 & 0.823 & 0.158 \\
        2e5 & 86.7 & 0.910 & 0.850 & 0.050 \\
        3e5 & 124.6 & 0.923 & 0.871 & 0.042 \\
        \bottomrule
    \end{tabular}
    \caption{MVMR-PACS performance in the selection dataset generated through data thinning. Sample size refers to the sample size of the individual-level data that are used to generate summary statistics. The instrument strength column reports the mean instrument strength parameter based on grouped exposures in the selection dataset. Data generating process is the same as in the simulation study.}
\end{table}

\begin{table}[H]
    \centering
    \begin{tabular}{c|c|c|c|c|c}
    \toprule
    Sample size & 1-1-1-0-0-0-0-0-0-0 & 1-1-1-0-0-0-0-0-2-0 & other \\
    \midrule
    1e5 & 77.5\% & 13.2\% & 9.3\% \\
    2e5 & 61.5\% & 34.2\% & 4.3\% \\
    3e5 & 55.3\% & 44.2\% & 2.5\% \\
    \bottomrule
    \end{tabular}
    \caption{Frequency of different groupings selected by MVMR-PACS in the training dataset generated through two-fold data thinning over 1000 simulation runs. The header of the table includes four typical groupings. Risk factors assigned with the same number form a signal-group, and a risk factor is labelled 0 if it is selected as non-causal. For example, 1-1-1-0-0-0-0-0-0-0 means the first three risk factors are selected as a signal-group, while the remaining risk factors are selected as non-causal. Grouping 1-1-1-0-0-0-0-0-2-0 corresponds to the true model.}
\end{table}

\begin{table}[H]
    \centering
    \begin{tabular}{c|c|c|c}
    \toprule
    Sample size & Signal-group 1 & Signal-group 2 \\
    \midrule
    1e5 & 0.949 & 0.950 \\
    2e5 & 0.950 & 0.940  \\
    3e5 & 0.939 & 0.941  \\
    \bottomrule
    \end{tabular}
    \caption{Coverage probabilities of 95\% confidence intervals constructed by SRIVW based on the inferential dataset when the model is correctly identified. Signal-group 1 is the group of risk factors 1 to 3. Signal-group 2 includes only risk factor 9.}
\end{table}

\section{Additional real data application details}

\subsection{Data sources}

{\small
\begin{table}[H]
\centering
\caption{Data sources for exposures and outcome.}
\label{supp-table:data-source}
\setlength{\tabcolsep}{4pt}
\renewcommand{\arraystretch}{1.15}
\begin{tabularx}{\textwidth}{
  >{\RaggedRight\arraybackslash}p{3.2cm}
  >{\centering\arraybackslash}p{2cm}
  >{\RaggedRight\arraybackslash}p{1.8cm}
  >{\RaggedRight\arraybackslash}p{2.5cm}
  >{\RaggedRight\arraybackslash}p{3.2cm}
  >{\RaggedRight\arraybackslash}X
}
\toprule
\textbf{Phenotype} & \textbf{Sample size} & \textbf{Ancestry} & \textbf{Source} & \textbf{Reference} & \textbf{Download link} \\
\midrule
Apolipoprotein A-I & 393{,}193 & European & UK Biobank & \cite{richardson2020evaluating} & \url{https://opengwas.io/datasets/ieu-b-107} \\
Apolipoprotein B & 439{,}214 & European & UK Biobank & \cite{richardson2020evaluating} & \url{https://opengwas.io/datasets/ieu-b-108} \\
LDL-C & 1.32\,M & European & GLGC & \cite{graham2021power} & \url{https://csg.sph.umich.edu/willer/public/glgc-lipids2021/} \\
HDL-C & 1.32\,M & European & GLGC & \cite{graham2021power} & \url{https://csg.sph.umich.edu/willer/public/glgc-lipids2021/} \\
TG & 1.32\,M & European & GLGC & \cite{graham2021power} & \url{https://csg.sph.umich.edu/willer/public/glgc-lipids2021/} \\
Lipoprotein subfraction traits & 115{,}082 & European & UK Biobank & \cite{richardson2022characterising} & \url{https://www.ebi.ac.uk/gwas/publications/35213538} \\
Coronary artery disease &  & Mostly European & 	CARDIoGRAM-plusC4D + UK Biobank & \cite{nelson2017association} & \url{https://www.ebi.ac.uk/gwas/publications/35213538} \\
\midrule
\multicolumn{6}{l}{\footnotesize Abbreviation: M, million.} \\
\bottomrule
\end{tabularx}
\end{table}\label{data sources}
}

\subsection{MR-BMA posterior inclusion probability}

\begin{table}[H]
\centering
\caption{MR-BMA results: marginal inclusion probabilities and average effects.}
\begin{tabular}{lcc}
\toprule
trait & marginal inclusion & average effect \\
\midrule
      LDL-C &               1.000 &           1.202 \\
         TG &               0.996 &           0.408 \\
       ApoB &               0.989 &          -0.497 \\
    L-LDL-P &               0.988 &          -1.330 \\
      HDL-D &               0.975 &          -1.173 \\
   L-HDL-PL &               0.967 &           1.141 \\
   L-LDL-CE &               0.896 &           0.996 \\
    S-LDL-P &               0.882 &          -0.788 \\
 XS-VLDL-PL &               0.849 &           0.544 \\
    S-HDL-L &               0.745 &          -0.499 \\
    L-LDL-L &               0.737 &           0.648 \\
    S-LDL-C &               0.673 &          -0.525 \\
    L-LDL-C &               0.652 &           0.496 \\
      ApoA1 &               0.649 &          -0.166 \\
      IDL-C &               0.647 &          -0.392 \\
      IDL-L &               0.558 &          -0.295 \\
    M-LDL-L &               0.498 &           0.236 \\
    M-HDL-P &               0.461 &          -0.214 \\
   L-LDL-PL &               0.434 &           0.134 \\
  XS-VLDL-L &               0.415 &           0.063 \\
   L-LDL-FC &               0.363 &          -0.016 \\
    M-HDL-L &               0.360 &          -0.069 \\
    S-HDL-P &               0.342 &          -0.090 \\
    S-LDL-L &               0.315 &          -0.052 \\
   M-LDL-CE &               0.314 &           0.013 \\
   M-HDL-PL &               0.313 &          -0.044 \\
   M-HDL-CE &               0.281 &           0.012 \\
     IDL-FC &               0.227 &           0.014 \\
    M-LDL-P &               0.191 &           0.018 \\
      LDL-D &               0.124 &          -0.024 \\
  XL-HDL-TG &               0.087 &          -0.009 \\
      HDL-C &               0.071 &          -0.003 \\
\bottomrule
\end{tabular}
\end{table}

\begin{table}[H]
\centering
\caption{Results from MVMR-cML analysis}
\label{tab:mvmr_cml_results}
\begin{tabular}{lrr}
\toprule
Model & Exposure &  Estimated effect \\
\midrule
  LDL-C, ApoA1, L-LDL-FC &     LDL-C &  2.084 \\
   &     ApoA1 &  0.217 \\
   &     L-LDL-FC &   -2.037 \\
  LDL-C, ApoA1, IDL-L &     LDL-C &  2.355 \\
   &    ApoA1 &  0.238 \\
   &    IDL-L &   -2.375 \\
  LDL-C, ApoA1, IDL-FC &     LDL-C &  1.592 \\
   &     ApoA1 &  0.062 \\
   &     IDL-FC &   -1.451 \\
  LDL-C, ApoA1, IDL-C &     LDL-C &  1.519 \\
   &     ApoA1 &  0.086 \\
   &     IDL-C &   -1.325 \\
  LDL-C, HDL-C, L-LDL-FC &     LDL-C &  2.759 \\
   &     HDL-C &  0.481 \\
   &     L-LDL-FC &   -3.014 \\
  LDL-C, HDL-C, IDL-L &     LDL-C &  2.574 \\
   &     HDL-C &  0.320 \\
   &     IDL-L &   -2.633 \\
  LDL-C, HDL-C, IDL-FC &     LDL-C &  2.244 \\
   &     HDL-C &  0.315 \\
   &     IDL-FC &   -2.271 \\
  LDL-C, HDL-C, IDL-C &     LDL-C &  2.138 \\
   &     HDL-C &  0.329 \\
   &     IDL-C &   -2.162 \\
\bottomrule
\end{tabular}
\end{table}

\section{Review of penalized regression methods for correlated data and clustered regression coefficients} \label{sec: review}

When the columns of $\hat \Pi$ are highly correlated, ordinary weighted least squares estimates of $\bbeta$ tend to be unstable. Penalized regression methods address this issue by introducing regularization to stabilize estimation. 

In this section, we review commonly used penalized regression methods for estimating $\bbeta^*$ in the linear model:
\begin{align}
    \hat \bGamma = \hat \Pi \bbeta^* + \bm \epsilon \label{eq: regression problem}
\end{align}
where $\bm \epsilon = (\epsilon_1,\dots,\epsilon_p)^T$, and $\epsilon_j \sim N(0, \sigma_{Yj}^2)$ for every $j$. For simplicity, we assume that $\hat \Pi$ is measured with negligible error, though we will relax this assumption in our proposed method.

These methods typically solve the following optimization problem:
\begin{align*}
    \mathcal{L}(\bbeta, \lambda) = \frac{1}{2}(\hat \bGamma - \hat \Pi \bbeta)^TW(\hat \bGamma - \hat \Pi \bbeta) + \Omega(\bbeta,\lambda)
\end{align*}
where $W = {\rm diag}(\sigma_{Y1}^{-2},...,\sigma_{Yp}^{-2})$ accounts for heteroskedastic errors, and $\Omega(\bbeta,\lambda)$ is a penalty function of $\bbeta$ and tuning parameter $\lambda$. Well-known examples include LASSO with $\Omega(\bbeta, \lambda) = \lambda||\bbeta||_1$, Ridge with $\Omega(\bbeta, \lambda) = \lambda ||\bbeta||_2^2$ and Elastic Net with $\Omega(\bbeta,\lambda) = \lambda(\alpha||\bbeta||_1 + (1-\alpha)||\bbeta||_2^2)$, where $\alpha$ can be another tuning parameter or user-specified. To incorporate group structure among exposures, the sparse group LASSO adds an additional group-wise $l_2$ penalty: $\Omega(\bbeta,\lambda) = \lambda (\alpha||\bbeta||_1 + (1-\alpha)\sum_{l=1}^L ||\bbeta_l||_2)$ where $\bbeta_l$ is a vector of coefficients for the $l$-th group and $\alpha$ balances the sparsity between and within groups. However, when group structure is unknown, methods that can perform estimation and grouping simultaneously in a data-driven way are more practical.

To address this limitation, \cite{sharma2013consistent} proposed the pairwise absolute clustering and shrinkage (PACS) estimator with:
\begin{align}
    \Omega(\bbeta,\lambda) = \lambda\bigg\{\sum_{k = K} w_k |\beta_k| + \sum_{1\le k < m \le K} w_{km(-)} |\beta_k - \beta_m| + \sum_{1\le k < m\le K} w_{km(+)}|\beta_k + \beta_m|\bigg\} \label{penalty PACS}
\end{align}
where $w_k$, $w_{km(-)}$, and $w_{km(+)}$ are weights that can be user-specified or data-adaptive. For example, the weights are often based on correlations between columns of $\hat \Pi$. This penalty encourages both sparsity and grouping of effects of similar magnitude, without requiring pre-specified groups, a feature that is particularly appealing when highly correlated exposures have indistinguishable effects, as it tends to shrink them to a shared value. 

Other related approaches include OSCAR \citep{bondell2008simultaneous} and the cluster elastic net (CEN) estimator \citep{witten2014cluster}. The former can be considered as a special case of PACS \citep{sharma2013consistent}. CEN uses an $l_1$ penalty together with an $l_2$ within-cluster shrinkage penalty, and combines a data-driven clustering step via $K$-means. Unlike PACS, CEN encourages similarity within clusters but does not enforce exact equality of effects, and it requires the number of clusters to be pre-specified. Refer to \cite{buch2023systematic} for a more comprehensive review of related approaches.

In this work, we adopt the PACS penalty for its ability to identify signal-groups—groups of risk factors with indistinguishable effects—without relying on prior group information.

\section{Estimand based on signal-groups}

As noted in Methods Section 4.8, MVMR-PACS defines a transformation matrix $G$, which converts $\Pi$ to a new design matrix $\Pi_g = \Pi G^T$ with $L$ columns by collapsing elements of $\bgamma_j$, i.e., summing those with nonzero effects of the same magnitude, subtracting those with opposite effects, and omitting those with zero effects. 
, where $L$ is the number of identified groups. Based on this new design matrix for signal-groups, a working model is given by $\bGamma = \Pi_g C_g\bbeta^{*}$, where $\bbeta^{*} = (\beta_1^{*}, ..., \beta_{L}^{*})$ represents the direct causal effects of $L$ signal-groups. The estimand of interest based on the working model is then 
\begin{align*}
    \bbeta_g^{*} = (\Pi_g^TW\Pi_g)^{-1}\Pi_g^TW\bGamma,
\end{align*}
where $W = {\rm diag}(\sigma_{Y1}^{-2},\dots,\sigma_{Yp}^{-2})$.

To demonstrate when this estimand is reasonable to pursue, we consider an example with $K_1 + K_2$ exposures where $K_1, K_2 > 1$. 

Suppose that $\bgamma_j = (\gamma_{j1}, \dots, \gamma_{jK_1},\gamma_{j(K_1+1)},\dots,\gamma_{j(K_1+K_2)})^T \sim N(\bm 0, \Sigma_\gamma)$, where $\Sigma_\gamma$ is a well-defined positive definite matrix. Let $\rho_{kl}$ denote the correlation coefficient for the correlation between $\gamma_{jk}$ and $\gamma_{jl}.$ $\Gamma_j = \bgamma_j \bbeta^* = \sum_{k=1}^{K_1} \gamma_{jk} \beta_k^* + \sum_{k=K_1+1}^{K_2} \gamma_{jk} \beta_k^*$. For simplicity, assume $\sigma_{Yj}^{2}$'s are the same across $j$, all $\rho_{kl} \ge 0$, and diagonal terms of $\Sigma_{\gamma}$ are all 1. For ease of notation, we omit the superscript $^*$ from all betas in the following derivations.

Let $g_{1j} = \sum_{k=1}^{K_1} \gamma_{jk}$ be the SNP-exposure association for the sum of first $K_1$ exposures (signal-group 1), and $g_{2j} = \sum_{k=K_1+1}^{K_2} \gamma_{jk}$ be the SNP-exposure association for the sum of remaining $K_2$ exposures (signal-group 2). Then, the working model based on the grouped exposures is
\begin{align*}
    \Gamma_j & = g_{1j}\tau_1 + g_{2j}\tau_2 + \epsilon_j
\end{align*}
where $\epsilon_j$ measures the difference between the working model and the true model, which is likely correlated with $g_{1j}$ and $g_{2j}$. Because of such correlation, when regressing $\Gamma_j$ on $g_{1j}$ and $g_{2j}$, $\tau_1$ and $\tau_2$ will generally be biased for $\beta$'s.

Our goal here is to compare $\tau_1$, $\tau_2$ with original $\beta$'s, and discuss under which situation grouping may help.

First, we derive the formula for $\tau_1$ and $\tau_2$. Note that
\begin{align*}
    \begin{bmatrix}
        \tau_1 \\
        \tau_2
    \end{bmatrix} = (\Pi_g^T \Pi_g)^{-1} \Pi_g^T\Gamma
\end{align*}
where $\Pi_g = \Pi G^T$ with $G^T = \begin{bmatrix}
    \bm 1_{K_1} & 0\\
    0 & \bm 1_{K2}
\end{bmatrix}$, and
\begin{align*}
    \Pi_g^T \Pi_g = \begin{bmatrix}
        \sum_j g_{1j}^2 & \sum_j g_{1j} g_{2j} \\
        \sum_j g_{1j} g_{2j} & \sum_j g_{2j}^2
    \end{bmatrix}
\end{align*}
and 
\begin{align*}
    \Pi_g^T\Gamma = \begin{bmatrix}
        \sum_j g_{1j} \Gamma_j \\
        \sum_j g_{2j} \Gamma_j
    \end{bmatrix}
\end{align*}

As $p$ is large, we have
\begin{align*}
    & \sum_j g_{1j}^2 = p \var(g_{1j}) = p [K_1 + 2\sum_{k<l}^{K_1}\rho_{kl}] = p [K_1 + C_1]\\
    & \sum_j g_{1j} g_{2j} = p \cov(g_{1j}, g_{2j}) = p \sum_{k=1}^{K_1}\sum_{l=K_1+1}^{K_1+K_2} \rho_{kl} = p D \\
    & \sum_j g_{2j}^2 = p \var(g_{2j}) = p[K_2 + 2\sum_{K_1<k<l<K_1+K_2}^{K_1+K_2} \rho_{kl}] = p [K_2 + C_2] \\
    & \sum_j g_{1j} \Gamma_j = p \cov(g_{1j}, \Gamma_j) = p \cov(\sum_{k=1}^{K_1}\gamma_{jk},\sum_{k=1}^{K_1}\gamma_{jk}\beta_k + \sum_{k=K_1+1}^{K_2}\gamma_{jk} \beta_k) \\
    & = p \bigg(\sum_{k=1}^{K_1}\beta_k + 2\sum_{k<l}^{K_1} \rho_{kl}\beta_l + \sum_{k=1}^{K_1}\sum_{l=K_1+1}^{K_1+K_2} \rho_{kl} \beta_l  \bigg) = p[K1 \bar \beta_1 + A_1 + B_1] \\
    & \sum_j g_{2j} \Gamma_j = p \cov(g_{2j}, \Gamma_j) = p \cov(\sum_{k=K_1+1}^{K_1+K_2}\gamma_{jk},\sum_{k=1}^{K_1}\gamma_{jk}\beta_k + \sum_{k=K_1+1}^{K_2}\gamma_{jk} \beta_k) \\\\
    & = p \bigg(\sum_{k=K_1+1}^{K_1+K_2} \beta_k + 2 \sum_{K_1<k<l\le K_1+K_2}^{K_1+K_2} \rho_{kl}\beta_l + \sum_{k=1}^{K_1} \sum_{l=K_1+1}^{K_1+K_2} \rho_{kl} \beta_k \bigg) = p [K_2 \bar\beta_2+A_2+B_2]
\end{align*}
where $\bar \beta_1 = \frac{1}{K_1} \sum_{k=1}^{K_1} \beta_k$ and $\bar \beta_2 = \frac{1}{K_2} \sum_{k=K_1+1}^{K_1+K_2} \beta_k$.

Thus,
\begin{align*}
    & \Pi_g^T \Pi_g  \approx  \begin{bmatrix}
        p\var(g_{1j}) & p\cov(g_{1j}, g_{2j}) \\
        p\cov(g_{1j}, g_{2j}) & p\var(g_{2j})
    \end{bmatrix} \\
    & (\Pi_g^T \Pi_g)^{-1} = \frac{1}{p(\var(g_{1j})\var(g_{2j}) - \cov(g_{1j}, g_{2j})^2)} \begin{bmatrix}
        \var(g_{2j}) & - \cov(g_{1j}, g_{2j}) \\
        - \cov(g_{1j}, g_{2j}) & \var(g_{1j})
    \end{bmatrix} \\
    & \Pi_g^T\Gamma \approx \begin{bmatrix}
        p\cov(g_{1j}, \Gamma_j) \\
        p\cov(g_{2j}, \Gamma_j)
    \end{bmatrix}
\end{align*}

Therefore, we have
\begin{align*}
    \tau_1 & = \frac{\var(g_{2j})\cov(g_{1j},\Gamma_j) - \cov(g_{1j},g_{2j})\cov(g_{2j},\Gamma_j)}{\var(g_{1j})\var(g_{2j}) - \cov(g_{1j},g_{2j})^2} \\
    & = \frac{(K_2+C_2)(K_1\bar\beta_1+A_1+B_1) - D(K_2\bar\beta_2 + A_2 + B_2)}{(K_1+C_1)(K_2+C_2)- D^2}
\end{align*}
and
\begin{align*}
    \tau_2 & = \frac{\var(g_{1j})\cov(g_{2j},\Gamma_j) - \cov(g_{1j},g_{2j})\cov(g_{1j},\Gamma_j)}{\var(g_{1j})\var(g_{2j}) - \cov(g_{1j},g_{2j})^2} \\
    & = \frac{(K_1+C_1)(K_2\bar\beta_2+A_2+B_2) - D(K_1\bar\beta_1 + A_1 + B_1)}{(K_1+C_1)(K_2+C_2)- D^2}
\end{align*}

Without further assumptions on $\beta$'s and $\rho$'s, we can see that each group's effect is a linear combination of all $\beta$'s in its own group and, through the correlation between the two groups, also dependent on the $\beta$'s in the other group.

However, there are certainly a few special settings where grouping can be beneficial.

Setting 1: $\beta$'s in a group are all equal.

Suppose $\beta_k = \alpha_1^{*}$ for all $k = 1,\dots, K_1$ and $\beta_k = \alpha_2^{*}$ for all $k = K_1+1,\dots, K_1+K_2$. Then,
$\bar\beta_1 = \alpha_1^{*}$, $A_1 = C_1 \alpha_1^{*}$, $B_1 = D\alpha_2^*$, $\bar \beta_2 = \alpha_2^{*}$, $A_2 = C_2 \alpha_2^{*}$, $B_2 = D \alpha_1^{*}$. Then, we have
\begin{align*}
    \tau_1
    & = \frac{(K_2+C_2)(K_1\bar\beta_1+A_1+B_1) - D(K_2\bar\beta_2 + A_2 + B_2)}{(K_1+C_1)(K_2+C_2)- D^2} \\
    & = \frac{(K_2+C_2)(K_1\alpha_1^{*}+C_1\alpha_1^{*}+D\alpha_2^{*}) - D(K_2\alpha_2^{*} + C_2\alpha_2^{*} + D\alpha_1^{*})}{(K_1+C_1)(K_2+C_2)- D^2} \\
    & = \alpha_1^{*}
\end{align*}
Similary, $\tau_2 = \alpha_2^{*}$.
Hence, if the effects of risk factors that we intend to group are indeed equal, then we can still unbiasedly estimate that shared effects.

Setting 2: correlations within the two signal-groups are close to 1 and correlations between the two groups are identical, equal to $r^*$. Then,
$A_1 = (K_1 - 1)\sum_{k=1}^{K_1} \beta_k = K_1^2 \bar \beta_1 - K_1 \bar \beta_1$, $B_1 = r^{*} K_1 K_2 \bar \beta_2$, $C_1 = K_1(K_1-1)$, $A_2 = (K_2-1)\sum_{k=K_1+1}^{K_1+K_2}\beta_k = K_2^2\bar\beta_2 - K_2\bar \beta_2$, $B_2 = r^{*} K_1K_2\bar \beta_1$, $C_2 = K_2(K_2-1)$, $D = r^{*} K_1K_2$. Thus,
\begin{align*}
    \tau_1 & = \frac{K_2^2(K_1^2\bar\beta_1+r^*K_1K_2\bar\beta_2) - r^*K_1K_2(K_2^2\bar\beta_2 + r^*K_1K_2\bar \beta_1)}{K_1^2K_2^2 - r^{*2}K_1^2K_2^2} \\
    & = \bar \beta_1
\end{align*}
Similarly, we can show $\tau_2 = \bar\beta_2$. Hence, when internal correlations are high, the direct causal effect of a signal-group would approximate an average of the causal effects of its group members.

Setting 3: correlations between the two signal-groups are equal to 0, i.e., $\rho_{kl} = 0$ for $k \in \{1,\dots,K_1\}$ and $l \in \{K_1+1,\dots,K_1+K_2\}$. Then, $B_1 = B_2 = 0$ and $D = 0$.
\begin{align*}
    \tau_1 & = \frac{K_1\bar\beta_1 + A_1}{K_1+C_1} \\
    & = \frac{\sum_{k=1}^{K_1} \beta_k + 2 \sum_{k<l}^{K_1}\rho_{kl}\beta_l}{K_1 + 2\sum_{k<l}^{K_1}\rho_{kl}}
\end{align*}
\begin{align*}
    \tau_2 & = \frac{K_2\bar\beta_2 + A_2}{K_2+C_2} \\
    & = \frac{\sum_{k=K_1+1}^{K_1+K_2} \beta_k + 2 \sum_{K_1<k<l\le K_1+K_2}^{K_1+K_2}\rho_{kl}\beta_l}{K_2 + 2\sum_{K_1<k<l\le K_1+K_2}^{K_1+K_2}\rho_{kl}}
\end{align*}
Hence, when the correlations across signal-groups are close to 0, the direct causal effect of a signal-group would approximate a weighted average of the causal effects of its group members.

In the above derivations, we assume that diagonal terms of $\Sigma_{\gamma}$ are all 1. If this is not the case, then above derivations still go through, but the formula becomes more complicated. In particular, in setting 2 where internal correlations are high, the direct causal effect of a signal-group is no longer a simple average of the causal effects of its group members. Instead, it becomes a weighted average of the causal effects of its group members, with weights depending on the variance of $\bgamma_{j}$. We also assume that all $\rho_{kl} \ge 0$. If $\rho_{kl} \to -1$, MVMR-PACS would tend to create a signal-group with the contrast between exposure $k$ and $l$. In this case, we can take the difference between the two exposures and conduct a similar analysis.

\section{MVMR-PACS-x estimator}

MVMR-PACS-$x$ additionally incorporates a correlation threshold of value $x$, e.g., 0.8, into its weighting scheme. Specifically, the weights $w_{km(-)}$ and $w_{km(+)}$ in MVMR-PACS are further multiplier by indicator functions $I(\hat r_{km}>r^*)$ and $I(\hat r_{km}<-r^*)$, respectively. In other words, $w_{km(-)}$ and $w_{km(+)}$ are non-zero, only when the observed correlations exceed a pre-specified threshold $r^*$, e.g., $r^*=0.8$. When all observed correlations fall below this threshold in magnitude, the PACS penalty reduces to the adaptive LASSO penalty \citep{zou2006adaptive}.
Compared with MVMR-PACS, MVMR-PACS-x can achieve improved variable selection performance when the threshold is properly chosen (see the simulation results in Section 2.3). In practice, the threshold can be selected as an additional tuning parameter.

\section{Multifold data thinning for cross validation and post-selection inference}

\subsection{Multifold data thinning for summary-data MVMR}

We adapt the multifold data thinning procedure \citep{neufeld2023data} to summary data MVMR settings to generate multiple independent replicates of summary statistics. Take five-fold data thinning as an example. Let $\epsilon_{1},\dots,\epsilon_5$ be the proportion of information allocated to each generated fold, such that $\sum_{m=1}^5 \epsilon_m = 1$.  For each SNP $j$, $\hat \bgamma_j \sim N(\bgamma_j, \Sigma_{Xj})$. To generate 5 independent SNP-exposure association estimates, we draw
\begin{align*}
    \begin{bmatrix}
        \hat \bgamma_j^{(1)} \\
        \hat \bgamma_j^{(2)} \\
        \hat \bgamma_j^{(3)} \\
        \hat \bgamma_j^{(4)} \\
        \hat \bgamma_j^{(5)}
    \end{bmatrix}  | \hat \bgamma_j 
    \sim N( \begin{bmatrix}
        \epsilon_1 \hat \bgamma_j \\
        \epsilon_2 \hat \bgamma_j \\
        \epsilon_3 \hat \bgamma_j \\
        \epsilon_4 \hat \bgamma_j \\
        \epsilon_5 \hat \bgamma_j
    \end{bmatrix},
    \begin{bmatrix}
        \epsilon_1 (1-\epsilon_1)\Sigma_{Xj} & -\epsilon_1 \epsilon_2 \Sigma_{Xj} & -\epsilon_1 \epsilon_3 \Sigma_{Xj} & -\epsilon_1 \epsilon_4 \Sigma_{Xj} & -\epsilon_1 \epsilon_5 \Sigma_{Xj} \\
        -\epsilon_2 \epsilon_1 \Sigma_{Xj} & \epsilon_2 (1-\epsilon_2) \Sigma_{Xj} & -\epsilon_2 \epsilon_3 \Sigma_{Xj} & -\epsilon_2 \epsilon_4 \Sigma_{Xj} & -\epsilon_2 \epsilon_5 \Sigma_{Xj} \\
        -\epsilon_3 \epsilon_1 \Sigma_{Xj} & -\epsilon_3 \epsilon_2 \Sigma_{Xj} & \epsilon_3(1-\epsilon_3) \Sigma_{Xj} & -\epsilon_3 \epsilon_4 \Sigma_{Xj} & -\epsilon_3 \epsilon_5 \Sigma_{Xj} \\
        -\epsilon_4 \epsilon_1 \Sigma_{Xj} & -\epsilon_4 \epsilon_2 \Sigma_{Xj} & -\epsilon_4\epsilon_3 \Sigma_{Xj} & \epsilon_4(1-\epsilon_4) \Sigma_{Xj} & -\epsilon_4 \epsilon_5 \Sigma_{Xj} \\
        -\epsilon_5 \epsilon_1 \Sigma_{Xj} & -\epsilon_5 \epsilon_2 \Sigma_{Xj} & -\epsilon_5\epsilon_3 \Sigma_{Xj} & -\epsilon_5\epsilon_4 \Sigma_{Xj} & \epsilon_5 (1-\epsilon_5) \Sigma_{Xj} \\
    \end{bmatrix}
\end{align*}
According to Theorem 2 in \citep{neufeld2023data}, we have the following results: (1) for $m = 1,\cdots,5$, $\hat \bgamma_j^{(m)} \sim N(\epsilon_m \bgamma_j, \epsilon_m \Sigma_{Xj})$; (2) $\hat \bgamma_j^{(1)},\cdots, \hat \bgamma_j^{(5)}$ are mutually independent; (3)  $\hat \bgamma_j^{(1)}+\hat \bgamma_j^{(2)}+\cdots+ \hat \bgamma_j^{(5)} = \hat \bgamma_j$. 

The SNP-outcome association estimates can be generated analogously by replacing $\hat \bgamma_j$ and $\Sigma_{Xj}$ with $\hat \Gamma_j$ and $\sigma_{Yj}^2$. The generated SNP-outcome association estimate also satisfies: (1) $\hat \bGamma_j^{(m)} \sim N(\epsilon_m \Gamma_j, \epsilon_m\sigma_{Yj}^2)$, where $\Gamma_j = \bgamma_j \bbeta^*$; (2) $\hat \Gamma_j^{(1)},\cdots, \hat \Gamma_j^{(5)}$ are mutually independent; (3)  $\hat \Gamma_j^{(1)}+\hat \Gamma_j^{(2)}+\cdots+ \hat \Gamma_j^{(5)} = \hat \Gamma_j$. 

This procedure leads to five independent replicates, $D_1,\cdots,D_5$ of summary statistics for $p$ SNPs: $D_m = \bigg \{\hat \bgamma_j^{(m)},\epsilon_m\Sigma_{Xj},\hat \Gamma_j^{(m)},\epsilon_m\sigma_{Yj}^2, j = 1,...,p \bigg \}$. Importantly, within each replicate, the following relationship holds: $E[\hat\Gamma_j^{(m)}] = E[\hat \bgamma_j^{(m)}]\bbeta^*$. Therefore, each independent replicate can be in principle used to identify $\bbeta^*$.

$\epsilon_m$ determines the proportion of instrument strength allocated to the $m$-th fold. Specifically, the population-level instrument strength matrix \citep{wu2024more} based on the $D_m$ is
\begin{align*}
    \epsilon_m^{-1}\sum_{j=1}^p \epsilon_m^2\bgamma_j\bgamma_j^{T} \sigma_{Yj}^{-2} = \epsilon_m \sum_{j=1}^p \bgamma_j\bgamma_j^{T} \sigma_{Yj}^{-2},
\end{align*}
which implies that the instrument strength in the $m$-th fold is $\epsilon_m$ of that in the original data. In our implementation, we always let $\epsilon_m$ equal each other, corresponding to even information split across generated replicates. We refer interested readers to \cite{neufeld2023data} for discussion on the implication of this parameter. In our implementation, data thinning was performed using the R package \textsf{datathin} which is available at https://github.com/anna-neufeld/datathin.

\subsection{Cross validation based on multifold data thinning}\label{alg cross validation}

A cross validation procedure can be conducted based on the generated independent replicates through mutlifold data thinning. For example, a 5-fold cross validation can be conducted as follows:
\begin{enumerate}
  \item Create five independent folds of data $D_1,\cdots,D_5$ by applying five-fold data thinning to the original data $D = \bigg \{\hat \bgamma_j,\Sigma_{Xj},\hat \Gamma_j,\sigma_{Yj}^2, j = 1,...,p \bigg \}$ of summary statistics with $\epsilon_1,\cdots,\epsilon_5$.
  \item For each fold $m=1,\ldots,5$:
  \begin{enumerate}
    \item Define the training set $\mathcal{T}_m = D-D_m$ and validation set $\mathcal{V}_m=D_m$. Here, $D-D_m = \bigg \{\hat \bgamma_j - \hat \bgamma_j^{(m)}, (1-\epsilon_m)\Sigma_{Xj},\hat \Gamma_j - \hat \Gamma_j^{(m)}, (1-\epsilon_m)\sigma_{Yj}^2, j = 1,...,p \bigg \}$, where $\hat \bgamma_j - \hat \bgamma_j^{(m)} \sim N((1-\epsilon_m)\bgamma_j, (1-\epsilon_m)\Sigma_{Xj})$ and $\hat \Gamma_j - \hat \Gamma_j^{(m)} \sim N((1-\epsilon_m)\Gamma_j, (1-\epsilon_m)\sigma_{Yj}^2)$.
    \item Fit the model on $\mathcal{T}_m$ (for each candidate tuning parameter $\lambda$).
    \item Evaluate on $\mathcal{V}_m$ and record the performance metric $L_m(\lambda)$.
  \end{enumerate}
  \item Aggregate performance across folds:
  $\bar L(\lambda)=\tfrac{1}{5}\sum_{m=1}^5 L_m(\lambda)$; choose the best tuning parameter $\hat\lambda$ according to some criteria, e.g., minimizing $L(\lambda)$.
  \item Refit the model on the full dataset $D$ using $\hat\lambda$.
\end{enumerate}

\subsection{Post-selection inference}

To avoid using the same dataset for both model selection and inference, we first perform two-fold data thinning to create two independent datasets, denoted $D^{\rm select}$ and $D^{\rm infer}$, with even information split. Suppose $D^{\rm select}$ serves as the selection dataset for model and variable selection. We then conduct a five-fold cross validation based on $D^{\rm select}$ according to the algorithm described in Supplementary Section \ref{alg cross validation}. This step aims to select tuning parameters and identify signal-groups. In the last step, inference on the identified signal-groups is conducted using SRIVW based on $D^{\rm infer}$.

\section{Computation and Implementation details}

The initial estimates $\tilde {\bbeta}_{\rm init}$ are obtained by minimizing the projected debiased loss function with a ridge penalty
\begin{align}
    \mathcal{L}_{\rm dRidge}(\bbeta, \lambda) = \frac{1}{2} \bbeta^T  (\hat \Pi^T W \hat \Pi - V)_{+} \bbeta - \hat \bGamma^T W \hat \Pi \bbeta + \lambda ||\bbeta||_2^2.
\end{align}
The tuning parameter $\lambda$ is selected through five-fold data thinning. These initial estimates are also used in defining the adaptive weights in the PACS penalty. We prove the consistency and asymptotic normality of this initial estimator in Section XX of the Supplement. The tuning parameter $\lambda$ is selected from a grid of values between $[B\times 10^{-4}, B \times 10^2]$ where $B = (\max(\hat \mu_{n,min}, 0)+ p)^{2/5}$ and $\hat \mu_{n,min}$ is the minimum eigenvalue of sample instrument strength matrix \citep{wu2024more}. Theorem \ref{theo: ridge} provides a sufficient condition for the rate of $B$.

To minimize $\mathcal{L}_{\rm PACS}(\bbeta, \lambda)$, we adopt the local quadratic approximation approach used in \cite{sharma2013consistent}. The main idea is to approximate the non-smooth penalty terms by quadratic functions. For example, by Taylor expansion, for $k = 1,\dots,K$, $|\beta_k| \approx \beta_k^2/(2|\tilde \beta_{k}|) + c$, where $|\tilde \beta_{k}|$ is a given non-zero initial value close to $\beta_{k}$ and $c$ is a term irrelevant to $\beta_k$. $|\beta_k - \beta_m|$ and $|\beta_k + \beta_m|$ can be approximated in a similar manner. This leads to an efficient iterative algorithm with closed-form updates, as follows:
\begin{enumerate}
    \item Initialize $\bbeta^{(0)} = \tilde {\bbeta}_{\rm init}$.
    \item At iteration $(t+1)$, calculate
        \begin{align*}
            \hat \bbeta^{(t+1)} = \bigg\{(\hat \Pi^T W \hat \Pi - V)_{+} + \frac{\lambda}{2} (I_w^{(t)} + D_{(-)}^T I_{w(-)}^{(t)} D_{(-)} + D_{(+)}^T I_{w(+)}^{(t)} D_{(+)} \bigg\}^{-1} \hat \Pi^T W \hat \bGamma
        \end{align*}
    \item 	Let $t = t + 1$ and return to Step 2 until convergence.
\end{enumerate}
In the step 2, 
\begin{align*}
    & I_{w}^{(t)} =  {\rm diag}(\frac{w_k}{|\hat \beta_k^{(t)}|}, j = 1,..,K) \\
    & I_{w(-)}^{(t)} =  {\rm diag}(\frac{w_{km(-)}}{|\hat \beta_{k}^{(t)} - \hat \beta_{m}^{(t)}|}, 1\le k < m \le K) \\
    & I_{w(+)}^{(t)} =  {\rm diag}(\frac{w_{km(+)}}{|\hat \beta_{k}^{(t)} + \hat \beta_{m}^{(t)}|}, 1\le k < m \le K)
\end{align*}
and $D_{(-)}$ and $D_{(+)}$ are $\frac{K(K-1)}{2}\times K$ matrices of $\pm 1$ that encode all pairwise differences and sums, respectively.

We note that when $(\hat \Pi^T W \hat \Pi - V)_{+}$ is positive definite, the objective function $\mathcal{L}_{\rm PACS}(\bbeta, \lambda)$ is strictly convex and the above algorithm is guaranteed to converge to the global minimizer. When it is only positive semi-definite, e.g., in the presence of highly correlated risk factors, then the algorithm will converge to a local minimizer.

We computed $(\hat \Pi^T W \hat \Pi - V)_{+}$ using the \textsf{ADMM\_proj} function from the R package \textsf{BDcocolasso} which is available at https://github.com/celiaescribe/BDcocolasso.

\section{Proofs}

We denote the sample sizes for the outcome and each exposure as $n_Y, n_{Xk}, k=1,\dots, K$, respectively. Let ${\rm diag}(a_1, \dots, a_K)$ denote a $K\times K$ diagonal matrix with diagonal entries $a_1, \dots, a_K$.  We make the following assumption on the observed data.
\begin{assumption}\label{assump: 1}
   (i) The sample sizes $n_Y, n_{Xk}, k=1,\dots, K$ diverge to infinity at the same rate. Let $n=\min (n_Y, n_{X1}, \dots,  n_{XK}) $. The number of exposures $K$ and the number of SNPs $p$ grow to infinity as $n $ increases. \\
   (ii) The $2p$ random vectors (variables) 
    $\{{\hat\bgamma_j, \hat\Gamma_j}, j=1,\dots, p \} $ are mutually  independent. For each $j$, $\hat\bgamma_j \sim N(\bgamma_j, \Sigma_{Xj}), \hat \Gamma_j \sim N(\Gamma_j, \sigma_{Yj}^2) $, $  \Gamma_j =\bgamma_j^T \bm\beta^*$. \\
    (iii)   The variances $\Sigma_{Xj}  = 
	 {\rm diag}(\sigma_{Xj1}, \dots, \sigma_{XjK}) \ \Sigma \  {\rm diag}(\sigma_{Xj1}, \dots, \sigma_{XjK})$ and $\sigma_{Yj}^2$ are known, where $\sigma_{Xjk}$ is the SE of $\hat \gamma_{jk}$ and $\Sigma $ represents a positive definite shared correlation matrix. Moreover, the variance ratios $\sigma_{Xjk}^2/\sigma_{Yj}^2$ are bounded away from zero and infinity for all $j =1,\dots, p$ and $k = 1,...,K$.
\end{assumption}

\begin{assumption}\label{assumption bounded}
    There exists a non-random  $K\times K$ matrix $S_n = \tilde S_n {\rm diag}(\sqrt{\mu_{n1}}, ..., \sqrt{\mu_{nK}})$ that satisfy the following conditions:
    (i) $\tilde S_n$ is bounded element-wise and the smallest eigenvalue of $\tilde S_n \tilde S_n^T$ is bounded away from 0, and
    (ii) $S_n^{-1}\big(\sum_{j=1}^{p} \bgamma_j \bgamma_j^T \sigma_{Yj}^{-2}\big)S_n^{-T}$ is bounded element-wise and its smallest eigenvalue is bounded away from 0.
\end{assumption}

These two assumptions are also made in \cite{wu2024more} with detailed discussions. In particular, Assumption 2 provides a general asymptotic regime to study MVMR estimators, allowing for varying degrees of instrument strengths across different combinations of exposures. The vector $\mu_{n1},\dots,\mu_{nK}$  represents a complete, multi-dimensional summary of overall instrument strength, and the scalar $\mu_{n,\min} = \min (\mu_{n1},\dots, \mu_{nK})$ captures the slowest rate among all linear combinations of exposures. We further assume all the standard error estimates as well as the shared correlation matrix $\Sigma$ can be estimated with negligible errors and hence are considered as known quantities. Additionally, note that $\bgamma_j, \Gamma_j, \Sigma_{Xj},\sigma_{Yj}, j=1,\dots, p$ are sequences of real numbers or matrices that can depend on $n$ in theoretical analysis, but we omit the $n$ index for ease of notation.

\subsection{MVMR-dRidge}

The oracle properties of MVMR-PACS (see Theorem \ref{theorem: PACS}) require a consistent and asymptotic normal initial estimator. Theorem \ref{theo: ridge} establishes the consistency and asymptotic normality of MVMR-dRidge (MVMR debiased ridge estimator), which is used as the initial estimator. MVMR-dRidge is the unique minimizer to the following loss function:
\begin{align}
    \hat \bbeta_{\rm dRidge} = \argmin_{\bbeta} \frac{1}{2} \bbeta^T  (\hat \Pi^T W \hat \Pi - V)_{+} \bbeta - \hat \bGamma^T W \hat \Pi \bbeta + \phi ||\bbeta||_2^2.
\end{align}
$\hat \bbeta_{\rm dRidge}$ has a closed-form solution:
\begin{align*}
    \hat \bbeta_{\rm dRidge} = ((\hat \Pi^T W \hat \Pi - V)_{+} + \phi I_K)^{-1} \hat \Pi W \hat \bGamma.
\end{align*}
\begin{theorem} \label{theo: ridge}
Assume Assumptions 1-2, ${\mu_{n,\min}}/{\sqrt{p}}\rightarrow \infty$, and $\max_j \gamma_{jk}^2\sigma_{Xjk}^{-2}/(\mu_{n,\min} + p) \rightarrow 0$ for all $k$ as $n\to\infty$. If $\phi = o_P(\sqrt{\mu_{n,\min}+p})$, where $a_n = O_P (b_n)$ means $a_n/b_n$ is bounded in probability,
    then $\hat \bbeta_{\rm dRidge}$ is consistent and asymptotically normal, i.e., as $n\to \infty$,
\begin{align}
    \mathbb{V}^{-\frac{1}{2}}\bigg\{ \sum_{j=1}^{p}M_j \bigg\}( \hat \bbeta_{\rm dRidge} - \bbeta_0) \xrightarrow[]{d} N(\bm0, I_{K}), \label{eq: dmvmr normal}
\end{align} 
where $M_j = \bgamma_{j}\bgamma_{j}^T\sigma_{Yj}^{-2}$, $V_j = \Sigma_{Xj}\sigma_{Yj}^{-2}$,   and $\mathbb{V} = \sum_{j=1}^{p}\big\{  (1+ \bm{\beta}_0^TV_j\bm{\beta}_0) (M_j+V_j)+V_j\bm{\beta}_0\bm{\beta}_0^TV_j\big\}$.
\end{theorem}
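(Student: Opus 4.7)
The plan is to invert the closed-form expression
\begin{align*}
A(\hat\bbeta_{\rm dRidge} - \bbeta_0) = \hat\Pi^T W\hat\bGamma - A\,\bbeta_0, \qquad A := (\hat\Pi^T W\hat\Pi - V)_{+} + \phi I_K,
\end{align*}
and split the right-hand side into an asymptotically normal score term plus a negligible bias. The two ingredients are a concentration bound for $A$ around $\sum_j M_j = \Pi^T W\Pi$ and a Lindeberg--Feller CLT for $\hat\Pi^T W\hat\bGamma - (\sum_j M_j)\bbeta_0$.

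For the Hessian, I would use the projection inequality $\|(M)_+ - T\|_{\max}\le 2\|M - T\|_{\max}$ valid for any PSD target $T$, applied with $T = \Pi^T W\Pi$. This reduces controlling $(\hat\Pi^T W\hat\Pi - V)_+ - \Pi^T W\Pi$ to controlling $\sum_j \sigma_{Yj}^{-2}(\hat\bgamma_j\hat\bgamma_j^T - \bgamma_j\bgamma_j^T - \Sigma_{Xj})$, a sum of independent mean-zero matrices under Assumption~1. Entrywise sub-exponential bounds yield a max-norm deviation of order $\sqrt{\mu_{n,\min}+p}$, which since $K$ is fixed translates into an operator-norm bound of the same order. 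Combined with $\phi I_K = o_P(\sqrt{\mu_{n,\min}+p})$ and the Assumption~2 lower bound $\lambda_{\min}(\Pi^T W\Pi)\gtrsim \mu_{n,\min}$ (which under $\mu_{n,\min}/\sqrt p\to\infty$ dominates the deviation), we obtain $A = \Pi^T W\Pi + R_n$ with $\|R_n\|_{\rm op} = o_P(\mu_{n,\min})$, so $A^{-1}$ exists with probability tending to one and $(\sum_j M_j)A^{-1} \xrightarrow{P} I_K$.

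For the score, I would apply Lindeberg--Feller to
\begin{align*}
T_n := \hat\Pi^T W\hat\bGamma - (\textstyle\sum_j M_j)\bbeta_0 = \sum_{j=1}^p \sigma_{Yj}^{-2}\bigl(\hat\bgamma_j\hat\Gamma_j - \bgamma_j\bgamma_j^T\bbeta_0\bigr),
\end{align*}
whose summands are independent and mean zero by Assumption~1(ii) and the independence of $\hat\bgamma_j$ and $\hat\Gamma_j$. A direct variance calculation using normality reproduces exactly $\mathrm{Var}(T_n) = \mathbb V$. The assumption $\max_j \gamma_{jk}^2\sigma_{Xjk}^{-2}/(\mu_{n,\min}+p)\to 0$ together with the bounded variance ratios in Assumption~1(iii) verifies Lindeberg's condition for every fixed linear combination of coordinates, so the Cram\'er--Wold device delivers $\mathbb V^{-1/2}T_n \xrightarrow{d} N(\bm 0, I_K)$. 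Writing
\begin{align*}
(\textstyle\sum_j M_j)(\hat\bbeta_{\rm dRidge} - \bbeta_0) = (\textstyle\sum_j M_j)A^{-1}\bigl(T_n - R_n\bbeta_0\bigr),
\end{align*}
the leading factor converges to $I_K$ by the previous step, the remainder $R_n\bbeta_0$ is $o_P(\sqrt{\mu_{n,\min}+p})$ since $\bbeta_0$ is fixed, and Slutsky's theorem with $\mathbb V^{-1/2}$ standardization finishes the argument.

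The main obstacle I anticipate is the projection $(\cdot)_+$: because it is defined via the elementwise max norm rather than the spectral norm, the standard spectral non-expansivity is unavailable, and one must instead exploit $K$ fixed to upgrade a max-norm deviation into an operator-norm one of the same order. A secondary subtlety is the Lindeberg verification for summands that include the quadratic-in-noise term reflected by $V_j\bbeta_0\bbeta_0^T V_j$ in $\mathbb V$; the uniform negligibility hypothesis is designed precisely to control the corresponding fourth-moment contributions, and the argument relies on it together with Assumption~1(iii).
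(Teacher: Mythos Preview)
There is a genuine gap in the score/remainder split. Your score
\[
T_n = \hat\Pi^T W\hat\bGamma - \Big(\sum_j M_j\Big)\bbeta_0 = \sum_j \sigma_{Yj}^{-2}\bigl(\bgamma_j\eta_j + \Gamma_j\xi_j + \xi_j\eta_j\bigr)
\]
does \emph{not} have variance $\mathbb V$. A direct calculation gives $\mathrm{Var}(T_n)=\sum_j\{M_j+V_j+(\bbeta_0^T M_j\bbeta_0)V_j\}$, which differs from $\mathbb V$ by $\sum_j\{(\bbeta_0^T V_j\bbeta_0)(M_j+V_j)+V_j\bbeta_0\bbeta_0^T V_j-(\bbeta_0^T M_j\bbeta_0)V_j\}$; already in the scalar case $K=1$, $\beta_0=1$ this gap equals $2\sum_j V_j^2=\Theta(p)$, which is of the same order as $\mathbb V=\Theta(\mu_{n,\min}+p)$ whenever $\mu_{n,\min}/p\not\to\infty$. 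The mismatch is exactly compensated by your ``remainder'' $R_n\bbeta_0$: its random part $(\hat\Pi^T W\hat\Pi-V-\Pi^T W\Pi)\bbeta_0$ is $O_P(\sqrt{\mu_{n,\max}+p})$ (not $o_P(\sqrt{\mu_{n,\min}+p})$---you slip from $O_P$ to $o_P$ there), is correlated with $T_n$ through the common $\hat\bgamma_j$, and after $\mathbb V^{-1/2}$ standardization remains $O_P(1)$, contributing to the limiting variance. So neither the CLT step nor the Slutsky step goes through as written.

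The paper's route (via Theorem~2 of \cite{wu2024more}) avoids this by taking the \emph{debiased} score
\[
\hat\Pi^T W\hat\bGamma - (\hat\Pi^T W\hat\Pi - V)\bbeta_0
= \sum_j \sigma_{Yj}^{-2}\bigl\{\hat\bgamma_j(\hat\Gamma_j-\hat\bgamma_j^T\bbeta_0)+\Sigma_{Xj}\bbeta_0\bigr\},
\]
whose variance is exactly $\mathbb V$ (this is Lemma~6 there). With that choice, the only extra term introduced by the ridge is $-\phi\bbeta_0$, and the single modification needed is $\phi\,\mathbb V^{-1/2}=o_P(1)$ and $\phi\,\mathbb V^{-1/2}(\sum_j M_j)^{-1}\mathbb V^{1/2}=o_P(1)$, both immediate from $\phi=o_P(\sqrt{\mu_{n,\min}+p})$. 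Your Hessian-concentration and Lindeberg pieces are fine once you switch to this score; the fix is simply to move the random Hessian fluctuation into the score rather than treating it as a negligible bias.
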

The proof largely mirrors the proof for Theorem 2 in \citep{wu2024more}, and hence is omitted. The key modification is to show $\phi \mathbb{V}^{-\frac{1}{2}} (\sum_j M_j)^{-1} \mathbb{V}^{\frac{1}{2}} = o_p(1)$ and $\phi \mathbb{V}^{-\frac{1}{2}} = o_p(1)$. In practice, we constructed a logarithmically spaced grid of candidate penalty values for $\phi$, ranging from $10^{2}$ to $10^{-4}$ times an upper bound that scales with $\hat \mu_{n,\rm min}+p$ when $\hat \mu_{n,\rm min} \ge 0$, and scales with $p$ when $\hat \mu_{n,\rm min} < 0$. 

\subsection{MVMR-dLASSO}

MVMR-dLASSO is the unique minimizer to the following loss function:
\begin{align*}
    \hat \bbeta_{\rm dLASSO} & = \argmin_{\bbeta} \frac{1}{2} \bbeta^T(\hat \Pi^T W \hat \Pi - V)_{+} \bbeta - \hat \bGamma^T W \hat \Pi \bbeta + \lambda_n \sum_k \hat w_k |\beta_k|.
\end{align*}
where $\hat w_k = 1/|\tilde \beta_k|^{\tau}$ and $\tilde \bbeta_{\rm init} = (\tilde \beta_1,\dots,\tilde \beta_K)^T$ is a consistent and asymptotically normal estimator of $\bbeta^*$ under the assumptions of Theorem \ref{theo: ridge}.

MVMR-dLASSO is the unique minimizer to the following objective function:
\begin{align*}
    \hat \bbeta_{\rm dLASSO}^{(n)}  = \argmin_{\bbeta} \frac{1}{2} \bbeta^T(\hat \Pi^T W \hat \Pi - V)_{+} \bbeta - \hat \bGamma^T W \hat \Pi \bbeta + \lambda_n \sum_k \hat w_k |\beta_k|
\end{align*}
where $\hat w_k = 1/|\tilde \beta_k|^{\tau}$ and $\tilde \bbeta_{\rm init} = (\tilde \beta_1,\dots,\tilde \beta_K)^T$ is a consistent and asymptotically normal estimator of $\bbeta^*$ under the assumptions of Theorem \ref{theo: ridge}.

\begin{theorem} \label{theorem: dlasso}
    (Oracle properties of MVMR-dLASSO) Under Assumptions 3-\ref{assumption bounded}, ${\mu_{n,\min}}/{\sqrt{p}}\rightarrow \infty$, and $\max_j \gamma_{jk}^2\sigma_{Xjk}^{-2}/(\mu_{n,\min} + p) \rightarrow 0$ for all $k$ as $n\to\infty$, suppose that $\lambda_n/r_n \rightarrow 0$, $\lambda_n r_n^{\tau - 1} \rightarrow \infty$ where $r_n = \mu_{n,\min}/\sqrt{\mu_{n,\min} + p}$ and $\tau$ satisfies $\lambda_nr_n^{\tau+1}/\mu_{n,\max}\rightarrow \infty$, then
    \begin{enumerate}
        \item Consistency in risk factor selection: $\lim_n \probP(\mathcal{A}_n = \mathcal{A}) = 1$
        \item Asymptotic normality: $
    \mathbb{V}_{\mathcal{A}}^{-\frac{1}{2}} (\Pi^T W \Pi)_{\mathcal{A}}(\hat \bbeta_{{\rm dLASSO}, \mathcal{A}}^{(n)} - \bbeta_{\mathcal{A}}^*)  \xrightarrow[]{D} N(\bm 0, I_{|\mathcal{A}|})$
    \end{enumerate}
\end{theorem}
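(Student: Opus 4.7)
The proof strategy is to adapt the classical adaptive LASSO oracle-property argument of Zou (2006) (also the argument used for the PACS estimator in Sharma et al.~(2013)) to the debiased MVMR loss, while carefully controlling the quadratic form $(\hat\Pi^TW\hat\Pi - V)_{+}$ under the many-weak-instrument regime. The plan is to reparametrize $\bbeta = \bbeta^* + u/r_n$ with $r_n = \mu_{n,\min}/\sqrt{\mu_{n,\min}+p}$, define
\begin{align*}
V_n(u) = \mathcal{L}_{\text{debiased}}^{+}\bigl(\bbeta^* + u/r_n\bigr) - \mathcal{L}_{\text{debiased}}^{+}(\bbeta^*) + \lambda_n \sum_k \hat w_k \bigl\{|\beta_k^* + u_k/r_n| - |\beta_k^*|\bigr\},
\end{align*}
and identify the limiting random function of $u$ whose unique minimizer pins down the asymptotic behavior of $\hat u^{(n)} = r_n(\hat\bbeta_{\rm dLASSO}^{(n)} - \bbeta^*)$.

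First I would expand the quadratic part. Using $E(\hat\Pi^TW\hat\Pi - V) = \Pi^T W\Pi$ together with Assumption~\ref{assumption bounded}, the leading quadratic term in $V_n(u)$ behaves like $\tfrac{1}{2r_n^2}u^T (\Pi^TW\Pi) u$, which is $\Theta(u^Tu)$ by the definition of $r_n$ and the bounds on $\tilde S_n$; the projection $(\cdot)_{+}$ introduces only an elementwise-max perturbation that is $o_P(\mu_{n,\min}+p)$ and hence negligible after scaling by $r_n^{-2}$. The linear term is the centered score $r_n^{-1}u^T\bigl\{\hat\Pi^TW\hat\bGamma - (\hat\Pi^TW\hat\Pi - V)\bbeta^*\bigr\}$, whose variance (computed exactly as in Theorem~\ref{theo: ridge}, tracking the same $M_j$, $V_j$ bookkeeping) gives a CLT under $\mu_{n,\min}/\sqrt{p}\to\infty$ and the Lindeberg-type condition $\max_j\gamma_{jk}^2\sigma_{Xjk}^{-2}/(\mu_{n,\min}+p)\to 0$.

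Next I would handle the penalty term componentwise, which is the standard Zou-type dichotomy. For $k\in\mathcal{A}$, $\tilde\beta_k \xrightarrow{P} \beta_k^*\neq 0$ by Theorem~\ref{theo: ridge}, so $\hat w_k = O_P(1)$ and $\lambda_n \hat w_k(|\beta_k^*+u_k/r_n|-|\beta_k^*|) = O_P(\lambda_n/r_n)\cdot u_k\cdot\mathrm{sgn}(\beta_k^*)\to 0$ since $\lambda_n/r_n\to 0$. For $k\notin\mathcal{A}$, $\beta_k^*=0$ and $r_n\tilde\beta_k = O_P(1)$, so $\hat w_k = r_n^{\tau}/|r_n\tilde\beta_k|^{\tau}$, giving a penalty contribution $\lambda_n\hat w_k|u_k|/r_n \asymp \lambda_n r_n^{\tau-1}|u_k|\cdot O_P(1)\to\infty$ whenever $u_k\neq 0$. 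By the standard Knight-Fu epi-convergence argument, this forces the $k$-th coordinate of the limiting minimizer to be zero w.p.\ approaching one, yielding $\probP(\mathcal{A}_n\subseteq\mathcal{A})\to 1$. The opposite inclusion $\probP(\mathcal{A}\subseteq\mathcal{A}_n)\to 1$ follows from the KKT characterization: on $\{\hat\beta_k=0\}$ for $k\in\mathcal{A}$, the subgradient condition would require $|[(\hat\Pi^TW\hat\Pi-V)_{+}\hat\bbeta - \hat\Pi^TW\hat\bGamma]_k|\leq \lambda_n\hat w_k$, but the left side is of order $\mu_{n,\max}|\beta_k^*| + O_P(\sqrt{\mu_{n,\min}+p})$ while the right side is $O_P(\lambda_n)$, and $\lambda_n r_n^{\tau+1}/\mu_{n,\max}\to\infty$ together with $\lambda_n/r_n\to 0$ rules this out.

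Once selection consistency is established, the asymptotic normality follows by restricting attention to the active coordinates on the event $\{\mathcal{A}_n=\mathcal{A}\}$: there $\hat\bbeta_{{\rm dLASSO},\mathcal{A}}^{(n)}$ satisfies the oracle KKT equation $[(\hat\Pi^TW\hat\Pi-V)_{+}]_{\mathcal{A}\mathcal{A}}\hat\bbeta_{\mathcal{A}} = [\hat\Pi^TW\hat\bGamma]_{\mathcal{A}} - \lambda_n \hat w_{\mathcal{A}}\odot\mathrm{sgn}(\hat\bbeta_{\mathcal{A}})$, where the penalty perturbation is $O_P(\lambda_n) = o_P(r_n)$ and therefore asymptotically negligible after premultiplication by $(\Pi^TW\Pi)_{\mathcal{A}}^{1/2}\mathbb{V}_{\mathcal{A}}^{-1/2}$. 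A CLT identical in form to the one used for $\hat\bbeta_{\rm dRidge}$ then delivers the stated $N(\bm 0, I_{|\mathcal{A}|})$ limit with the given sandwich covariance $\mathbb{V}_{\mathcal{A}}$. The main obstacle I anticipate is the bookkeeping around the projection $(\cdot)_{+}$: showing that its elementwise-max perturbation is dominated by $r_n^{2}$ on the active block and does not inflate the penalty-vs-signal comparison in the KKT step; the conditions on $\lambda_n$, $r_n$, and $\mu_{n,\max}$ are calibrated precisely to absorb it.
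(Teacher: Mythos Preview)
Your overall strategy (reparametrize by $u=r_n(\bbeta-\bbeta^*)$, study the recentered objective, split the penalty into active and inactive parts) is the same Zou-style route the paper takes, but there are two genuine gaps and one misplaced concern.

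The misplaced concern: the projection $(\cdot)_+$ is a non-issue. Under $\mu_{n,\min}/\sqrt p\to\infty$ and Assumption~\ref{assumption bounded}, $S_n^{-1}(\hat\Pi^TW\hat\Pi-\Pi^TW\Pi-V)S_n^{-T}\xrightarrow P 0$, so the smallest eigenvalue of $\hat\Pi^TW\hat\Pi-V$ is $\Theta_P(\mu_{n,\min})\to\infty$ and $(\hat\Pi^TW\hat\Pi-V)_+=\hat\Pi^TW\hat\Pi-V$ with probability tending to one. The projection simply disappears; none of the elementwise-max bookkeeping you anticipate is needed.

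First gap: your claim that $\tfrac{1}{2r_n^2}u^T(\Pi^TW\Pi)u$ is $\Theta(u^Tu)$ is wrong. The smallest eigenvalue of $r_n^{-2}\Pi^TW\Pi$ is indeed $\Theta((\mu_{n,\min}+p)/\mu_{n,\min})\ge\Theta(1)$, but the largest is $\Theta(\mu_{n,\max}/r_n^2)$, which may diverge since Assumption~\ref{assumption bounded} places no bound on $\mu_{n,\max}/\mu_{n,\min}$. Hence the rescaled quadratic does \emph{not} converge to a fixed positive-definite matrix, and the ``standard Knight--Fu epi-convergence argument'' does not apply as stated. This is precisely why the extra condition $\lambda_nr_n^{\tau+1}/\mu_{n,\max}\to\infty$ appears: the paper uses it in a direct contradiction argument, showing that if some $\hat u_{k}^{(n)}$ with $k\in\mathcal A^c$ stayed bounded away from zero, the penalty contribution of order $\lambda_nr_n^{\tau-1}$ would dominate even the worst-case quadratic growth $\mu_{n,\max}/r_n^2$, contradicting minimality.

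Second gap: the two inclusions in the selection-consistency step are swapped. The limiting argument (epi-convergence or the contradiction above) yields only $\hat u_{\mathcal A^c}^{(n)}\xrightarrow P 0$, i.e.\ $\hat\beta_{k'}^{(n)}=o_P(r_n^{-1})$; it does \emph{not} give $\hat\beta_{k'}^{(n)}=0$ exactly, so it does not establish $\mathcal A_n\subseteq\mathcal A$. For exact zeros on $\mathcal A^c$ you need the KKT step: if $k'\in\mathcal A^c$ and $\hat\beta_{k'}^{(n)}\neq 0$, stationarity forces the $k'$-th score to equal $\lambda_n\hat w_{k'}\,\mathrm{sgn}(\hat\beta_{k'}^{(n)})$; the right side is of order $\lambda_nr_n^\tau$, the left side is $O_P(\mu_{n,\max}/r_n)$ (using the already-established asymptotic normality on $\mathcal A$ and $\hat u_{\mathcal A^c}\to 0$), and $\lambda_nr_n^{\tau+1}/\mu_{n,\max}\to\infty$ yields the contradiction. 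Conversely, $\mathcal A\subseteq\mathcal A_n$ is immediate from consistency on $\mathcal A$ (the asymptotic normality gives $\hat\beta_k^{(n)}\xrightarrow P\beta_k^*\neq 0$), no KKT needed. Your KKT argument aimed at $\mathcal A\subseteq\mathcal A_n$ is also circular as written: bounding the left side by ``$\mu_{n,\max}|\beta_k^*|+O_P(\sqrt{\mu_{n,\min}+p})$'' presumes control of the remaining coordinates of $\hat\bbeta-\bbeta^*$ that has not yet been established at that point.
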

where $\mathcal{A} = \{k: \beta_k^* \neq 0 \}$, $\mathcal{A}_n = \{k: \hat \beta_{{\rm dLASSO},k}^{(n)} \neq 0\}$, and $\mathbb{V} = \sum_{j=1}^{p}\big\{  (1+ \bbeta^{*T}V_j\bbeta^*) (M_j+V_j)+V_j\bbeta\bbeta^{*T}V_j\big\}$, $I_{|\mathcal{A}|}$ is an identity matrix of size $|\mathcal{A}|\times|\mathcal{A}|$, and the subscript $\mathcal{A}$ applied to a vector (or matrix) denotes the subvector (or submatrix) formed by selecting the entries (or rows and columns) corresponding to the indices in the set $\mathcal{A}$.

\begin{proof}
    The proof follows the proof of adaptive LASSO in \cite{zou2006adaptive} with adaption of a random design matrix and non-i.i.d. setup. To ease the notation, we remove the subscript $\rm dLASSO$ in the objective function and MVMR-dLASSO estimator in this proof. Under Assumption \ref{assumption bounded} and by Lemma 7 in \cite{wu2024more} that $S_n^{-1}(\hat \Pi^T W \hat \Pi - \Pi^TW \Pi - V)S_n^{-T} \xrightarrow[]{P} \bm 0$, we know the minimum eigenvalue of $\hat \Pi^T W \hat \Pi - V$ is $\Theta_p(\mu_{n,\min})$. Because we have assumed that $\mu_{n,\min}/\sqrt{p} \rightarrow \infty$, $(\hat \Pi^T W \hat \Pi - V)_{+} = \hat \Pi^T W \hat \Pi - V$ for large $n$. Therefore, for large $n$, the objective function for MVMR-dLASSO can be simplified as
    \begin{align*}
    \Psi^{(n)} (\bbeta) = \frac{1}{2} \bbeta^T(\hat \Pi^T W \hat \Pi - V) \bbeta - \hat \bGamma^T W \hat \Pi \bbeta + \lambda_n \sum_k \hat w_k |\beta_k|
    \end{align*}

    Define $\bm u = r_n(\bbeta - \bbeta^*)$. Then, $\bbeta = \bbeta^{*} + \frac{\bm u}{r_n}$, where under the assumption $\mu_{n,\min}/\sqrt{p} \rightarrow \infty$, $r_n \rightarrow \infty$. Then, the objective function can be writen as a function of $\bm u$ as follows
    \begin{align*}
        \Psi^{(n)}(\bm u) & = \frac{1}{2} (\bbeta^* + \frac{\bm u}{r_n})^T  (\hat \Pi^T W \hat \Pi - V) (\bbeta^* + \frac{\bm u}{r_n}) - \hat \bGamma^T W \hat \Pi (\bbeta^* + \frac{\bm u}{r_n})  +  \lambda_n \sum_{k=1} \hat w_k |\beta_k^* + \frac{u_k}{r_n}|
    \end{align*}

Let $\hat {\bm u}^{(n)} = \argmin \Psi^{(n)}(\bm u)$. The goal is then to figure out the asymptotic properties of $\hat {\bm u}^{(n)}$. Once we know the properties of $\hat {\bm u}^{(n)}$, we know the properties of $\hat \bbeta^{(n)}$ because  $\hat \bbeta^{(n)} = \bbeta^* + \frac{ \hat {\bm u}^{(n)}}{r_n}$. 

Define $V^{(n)}(\bm u) = \Psi_n(\bm u) - \Psi_n(\bm 0)$. Specifically, 
\begin{align*}
    V^{(n)}(\bm u) & = \frac{1}{2r_n^2} \bm u^T (\hat \Pi^T W \hat \Pi- V) \bm u - \frac{1}{r_n}\bm u^T (\hat\Pi^T W \hat \bGamma - \hat \Pi^T W \hat \Pi \bbeta^* + V \bbeta^*) \\
    & + \lambda_n \sum_{k=1} \hat w_k \{|\beta_k^* + \frac{u_k}{r_n}| - |\beta_k^*| \}
\end{align*}

For the first term, by assumption \ref{assumption bounded} and Lemma 7 in \cite{wu2024more} that $S_n^{-1} (\hat \Pi^T W \hat \Pi - \Pi^T W \Pi - V) S_n^{-T} \xrightarrow[]{P} \bm 0$, the minimum and maximum eigenvalue of $\hat \Pi^T W \hat \Pi - V$ are of order $\Theta(\mu_{n,\min})$ and $\Theta(\mu_{n,max})$, respectively. Then, for every $\bm u$, we have 
\begin{align}
    \frac{1}{r_n^2}\bm u^T (\hat \Pi^T W \hat \Pi - V) \bm u = O_p(\frac{\mu_{n,\max}}{r_n^2}), \label{rate of squared term in obj}
\end{align}
and $\frac{1}{r_n^2}(\hat \Pi^T W \hat \Pi - V)$ remains positive definite with high probability as $n \rightarrow \infty$ because the minimum eigenvalue of $\frac{1}{r_n^2}(\hat \Pi^T W \hat \Pi - V) = \Theta_p(\frac{\mu_{n,\min}+p}{\mu_{n,\min}}) \ge \Theta_p(1)$. This ensures that $V^{(n)}(\bm u)$ is a strictly convex function of $\bm u$ for large $n$, and $\hat {\bm u}^{(n)}$ is unique and bounded. 

For the third term, we consider its behavior as $n \rightarrow \infty$. For $\beta_k^* \neq 0$, we have $\hat w_k \xrightarrow[]{P} |\beta_k^*|^{-\tau}$ by the assumption that $\tilde \bbeta_{\rm init}$ is consistent, and $r_n(|\beta_k^* + \frac{u_k}{r_n}| - |\beta_k^*|) \rightarrow u_k {\rm sign}(\beta_k^*)$. Hence, $\lambda_n \hat w_k (|\beta_k^* + \frac{u_k}{r_n}| - |\beta_k^*|) = O_p(\frac{\lambda_n}{r_n})u_k{\rm sign}(\beta_k^*), \forall k \in \mathcal{A}$. Under the assumption that $\lambda_n/r_n \rightarrow 0$, we have $\lambda_n \hat w_k\{|\beta_k^* + \frac{u_k}{r_n}| - |\beta_k^*| \} = \frac{\lambda_n}{r_n} \hat w_k \bigg[r_n\{|\beta_k^* + \frac{u_k}{r_n}| - |\beta_k^*| \}\bigg] = o_p(1)$. For $\beta_k^* = 0$,  $ r_n\{|\beta_k^* + \frac{u_k}{r_n}| - |\beta_k^*|\} = |u_k|$. Because $\tilde \bbeta_{\rm init}$ is also asymptotically normal, $[\mathcal{V}_n]_{kk}^{-1} \tilde \beta_k \xrightarrow[]{D} N(0,1) = O_p(1)$, where $\mathcal{V}_n$ is a consistent estimator of the asymptotic variance of $\tilde \bbeta_{\rm init}$. Thus, $\lambda_n \hat w_k |\frac{u_k}{r_n}| = \frac{\lambda_n}{r_n}[\mathcal{V}_n]_{kk}^{-\frac{\tau}{2}}  |u_k|  \{ [\mathcal{V}_n]_{kk}^{-\frac{1}{2}} \tilde \beta_k \}^{-\tau}$. The slowest rate at which the diagonal term of $\mathcal{V}_n$ goes to 0 is $r_n^{-2}$ by Lemma 8 in \cite{wu2024more}. By the assumption that $\lambda_n r_n^{\tau - 1} \rightarrow \infty$, we have $ \lambda_n \hat w_k\{|\beta_k^* + \frac{u_k}{r_n}| - |\beta_k^*|\}\rightarrow \infty$ with probability tending one. This holds true unless $\hat u_k^{(n)}$ goes to 0 sufficiently fast, faster than $\lambda_n r_n^{\tau - 1}$.

We now show that $\hat {\bm u}_{\mathcal{A}^c}^{(n)} \xrightarrow[]{P} \bm 0$. Suppose for contradiction that there exists $\epsilon > 0$ and an index $k \in \mathcal{A}^c$, such that, the event $|\hat u_k^{(n)}| \ge \epsilon$ happens infinitely often. This implies that $\lambda_n \hat w_k \{|\beta_k^* + \frac{\hat u_k^{(n)}}{r_n}| - |\beta_k^*|\} \ge \frac{\lambda_n}{r_n} \hat w_k \epsilon$ with probability tending one. As discussed above, when $\beta_k^* = 0$, $\frac{\lambda_n}{r_n} \hat w_k = O_p(\lambda_n r_n^{\tau - 1}) \to \infty$. Furthermore, by the result \eqref{rate of squared term in obj} and boundedness of $\hat {\bm u}^{(n)}$, the rate $\frac{1}{r_n^2}{\hat {\bm u}}^{(n)T}(\hat \Pi^T W\hat \Pi - V){\hat {\bm u}}^{(n)}$ diverges to infinity is at most $\mu_{n,\max}/r_n^2$. By the assumption that $\lambda_n r_n^{\tau + 1}/\mu_{n,\max} \rightarrow \infty$, we have $\lambda_n \hat w_k \{|\beta_k^* + \frac{u_k^{(n)}}{r_n}| - |\beta_k^*|\}$ goes to infinity at a rate faster than $\frac{1}{r_n^2}{\hat {\bm u}}^{(n)T}(\hat \Pi^T W\hat \Pi - V){\hat {\bm u}}^{(n)}$ for any $\hat {\bm u}^{(n)}$ that has $|\hat u_k^{(n)}| \ge \epsilon$ happens infinitely often. Consequently, the objective function $\Psi_n(\hat {\bm u}^{(n)})$ goes to infinity with probability tending one. However, this contradicts the definition of $\hat {\bm u}^{(n)}$ as the minimizer of $\Psi^{(n)}$ because we could always pick $\hat {\bm u}^{(n)}$ with $\hat u_k^{(n)} = 0, \forall k \in \mathcal{A}^c$ and other coordinates suitably to make the objective function smaller. Hence, we must have $\hat u_k^{(n)} \xrightarrow[]{P} 0, \forall k \in \mathcal{A}^c$.

As a result, for large enough $n$, we only need to minimize 
\begin{align*}
    \frac{1}{2 r_n^2} \bm u_{\mathcal{A}}^T (\hat \Pi^T W\hat \Pi- V)_{\mathcal{A}} \bm u_{\mathcal{A}} - \frac{1}{r_n} \bm u_{\mathcal{A}}^T(\hat \Pi^T W\hat \bGamma - \hat \Pi^T W \hat \Pi \bbeta^* + V \bbeta^*)_{\mathcal{A}} + O_p(\frac{\lambda_n}{r_n})\bm u_{\mathcal{A}}^T\cdot{\rm sign}(\bbeta_{\mathcal{A}}^*).
\end{align*}

This objective is convex and the minimizer is
\begin{align*}
    \hat{\bm u}_{\mathcal{A}}^{(n)} = r_n (\hat\Pi^T W \hat \Pi-V)_{\mathcal{A}}^{-1}(\hat \Pi^T W \hat \bGamma - \hat \Pi^T W \hat \Pi \bbeta^* + V\bbeta^*)_{\mathcal{A}} + O_p( \lambda_n r_n)(\hat\Pi^T W \hat \Pi-V)_{\mathcal{A}}^{-1}{\rm sign}(\bbeta_{\mathcal{A}}^*)
\end{align*}

This implies that, 
\begin{align*}
    (\hat \bbeta_{\mathcal{A}}^{(n)} - \bbeta_{\mathcal{A}}^*) & = (\hat\Pi^T W\hat\Pi - V)_{\mathcal{A}}^{-1} (\hat \Pi^T W \hat \bGamma - \hat \Pi^T W \hat \Pi\bbeta^* + V \bbeta^*)_{\mathcal{A}}  + O_p( \lambda_n) (\hat\Pi^T W \hat \Pi-V)_{\mathcal{A}}^{-1}{\rm sign}(\bbeta_{\mathcal{A}}^*),
\end{align*}
and then,
\begin{align*}
    \mathbb{V}_{\mathcal{A}}^{-\frac{1}{2}} (\Pi^T W \Pi)_{\mathcal{A}}(\hat \bbeta_{\mathcal{A}}^{(n)} - \bbeta_{\mathcal{A}}^*) & = \mathbb{V}_{\mathcal{A}}^{-\frac{1}{2}} (\Pi^T W \Pi)_{\mathcal{A}} (\hat\Pi^T W\hat\Pi - V)_{\mathcal{A}}^{-1} (\hat \Pi^T W \hat \bGamma - \hat \Pi^T W \hat \Pi\bbeta^* + V \bbeta^*)_{\mathcal{A}}  \\
    & + O_p( \lambda_n r_n)(\hat\Pi^T W \hat \Pi-V)_{\mathcal{A}}^{-1}{\rm sign}(\bbeta_{\mathcal{A}}^*) \\
    & = \underbrace{\mathbb{V}_{\mathcal{A}}^{-\frac{1}{2}} (\Pi^T W \Pi)_{\mathcal{A}} (\hat\Pi^T W\hat\Pi - V)_{\mathcal{A}}^{-1} \mathbb{V}_{\mathcal{A}}^{\frac{1}{2}}}_{A_1} \underbrace{ \mathbb{V}_{\mathcal{A}}^{-\frac{1}{2}} (\hat \Pi^T W \hat \bGamma - \hat \Pi^T W \hat \Pi\bbeta^* + V \bbeta^*)_{\mathcal{A}} }_{A_2}\\
    & + O_p( \lambda_n) \underbrace{\mathbb{V}_{\mathcal{A}}^{-\frac{1}{2}} (\Pi^T W \Pi)_{\mathcal{A}} (\hat\Pi^T W \hat \Pi-V)_{\mathcal{A}}^{-1}\mathbb{V}_{\mathcal{A}}^{\frac{1}{2}}}_{A_1} \mathbb{V}_{\mathcal{A}}^{-\frac{1}{2}}{\rm sign}(\bbeta_{\mathcal{A}}^*)
\end{align*}
By the result S10 and Lemma 6 in \cite{wu2024more}, we know $A_1 \xrightarrow[]{P} I_{|\mathcal{A}|}$ and $A_2 \xrightarrow[]{D} N(\bm 0, I_{|\mathcal{A}|})$. Additionally, by Lemma 2 in \cite{wu2024more}, $\mathbb{V}_{\mathcal{A}}^{-\frac{1}{2}}$ goes to 0 at a rate at least $\sqrt{\mu_{n,\min} + p}$. Thus, the second term is of order $O_p(\lambda_n/\sqrt{\mu_{n,\min}+p}) = o_p(1)$ as $\lambda_n/r_n \rightarrow 0$. Lastly, by the Slutsky's theorem, we complete the proof of the asymptotic normality part. That is,
\begin{align}
    \mathbb{V}_{\mathcal{A}}^{-\frac{1}{2}} (\Pi^T W \Pi)_{\mathcal{A}}(\hat \bbeta_{\mathcal{A}}^{(n)} - \bbeta_{\mathcal{A}}^*) \xrightarrow[]{D} N(0,I_{|\mathcal{A}|}) = O_p(1). \label{dlasso asymp normal}
\end{align}

To show the selection consistency part, note that $\forall k \in \mathcal{A}$, by the asymptotic normality result, we have $\hat \beta_k^{(n)}\xrightarrow[]{P}\beta_k^*$. Thus, $\probP(k \in \mathcal{A}_n) \rightarrow 1$. It remains to show that $\forall k^{'} \in \mathcal{A}^c$, $\probP(k^{'} \in \mathcal{A}_n) \rightarrow 0$. Consider the event $k^{'} \in \mathcal{A}_n$ (i.e. $\hat \beta_{k^{'}}^{(n)} \neq 0$), by the KKT conditions, we know $\hat \bbeta^{(n)}$ satisfies
\begin{align}
    \bigg [(\hat \Pi^T W \hat \Pi - V) \hat \bbeta^{(n)} - \hat \Pi^T W \hat \bGamma\bigg]_{k^{'}}= \lambda_n \hat w_{k^{'}} {\rm sign} (\hat \beta_{k^{'}}^{(n)}) \label{KKT eq}
\end{align}
where $[\bm v]_k$ means the $k$-th entry of the vector $\bm v$. The right hand side $\lambda_n \hat w_{k^{'}} \rightarrow \infty$ at a rate $\lambda_n r_n^{\tau}$ as we have shown above. For the left hand side, note that
\begin{align*}
    (\hat \Pi^T W \hat \Pi - V) \hat \bbeta^{(n)} - \hat \Pi^T W \hat \bGamma & = (\hat \Pi^T W\hat \Pi - V) (\hat \bbeta^{(n)} - \bbeta^*) - \mathbb{V}^{\frac{1}{2}} \bigg \{\mathbb{V}^{-\frac{1}{2}}(\hat\Pi^T W \hat \bGamma - \hat \Pi^T W\hat \Pi \bbeta^* + V \bbeta^*)\bigg \}
\end{align*}
From Lemma 6 in \cite{wu2024more}, we know $\mathbb{V}^{-\frac{1}{2}}(\hat\Pi^T W \hat \bGamma - \hat \Pi^T W\hat \Pi \bbeta^* + V \bbeta^*) = O_p(1)$ and hence the fast rate at which $\mathbb{V}^{\frac{1}{2}} \bigg \{\mathbb{V}^{-\frac{1}{2}}(\hat\Pi^T W \hat \bGamma - \hat \Pi^T W\hat \Pi \bbeta^* + V \bbeta^*)\bigg \}$ diverges is $\sqrt{\mu_{n,\max} + p}$. For the first term, we study its $k^{'}$ entry. Let $v_k$ denote the $k$-th row of $\hat \Pi^T W \hat \Pi -V$. We know $v_k = O_p(\mu_{n,\max})$ by Assumption \ref{assumption bounded} and Lemma 7 in \cite{wu2024more}. Then, the $k^{'}$ entry of the first term equals
\begin{align*}
     & v_{k^{'}, \mathcal{A}} (\hat \bbeta^{(n)} - \bbeta^{*})_{\mathcal{A}}  + v_{k^{'}, \mathcal{A}^c} (\hat \bbeta^{(n)} - \bbeta^{*})_{\mathcal{A}^c} = v_{k^{'}, \mathcal{A}} (\Pi^T W \Pi)_{\mathcal{A}}^{-1}\mathbb{V}_{\mathcal{A}}^{\frac{1}{2}} \mathbb{V}_{\mathcal{A}}^{-\frac{1}{2}} (\Pi^T W \Pi)_{\mathcal{A}}  (\hat \bbeta^{(n)} - \bbeta^{*})_{\mathcal{A}} + v_{k^{'}, \mathcal{A}^c} \hat \bbeta^{(n)}_{\mathcal{A}^c}.
\end{align*}
We have shown in \eqref{dlasso asymp normal} that $\mathbb{V}_{\mathcal{A}}^{-\frac{1}{2}} (\Pi^T W \Pi)_{\mathcal{A}}  (\hat \bbeta^{(n)} - \bbeta^{*})_{\mathcal{A}} = O_p(1)$. Additionally, $(\Pi^T W \Pi)_{\mathcal{A}}^{-1}\mathbb{V}_{\mathcal{A}}^{\frac{1}{2}}$ goes to 0 at a rate of at least $r_n$ by Lemma 8 in \cite{wu2024more}. Thus, the first term is $O_p(\mu_{n,\max}/r_n)$. The second term is also $O_p(\mu_{n,\max}/r_n)$ as we have shown that $\hat {\bm u}_{\mathcal{A}^c} \xrightarrow[]{P} \bm 0$ and $\hat\bbeta_{\mathcal{A}^c}^{(n)} = \hat {\bm u}_{\mathcal{A}^c}/r_n$.  In summary, $\bigg [(\hat \Pi^T W \hat \Pi - V) \hat \bbeta^{(n)} - \hat \Pi^T W \hat \bGamma\bigg]_{k^{'}}$ is $O_p(\max\{\mu_{n,\max}/r_n, \sqrt{\mu_{n,\max} + p}\}) = O_p(\mu_{n,\max}/r_n)$, because it is easy to see that $\mu_{n,\max}/\sqrt{\mu_{n,\max}+p} \ge \mu_{n,\min}/\sqrt{\mu_{n,\min} + p} = r_n$. Finally, because  $\lambda_n r_n^{\tau+1}/\mu_{n,\max} \rightarrow \infty$, we conclude that the equation \eqref{KKT eq} cannot hold because the right hand side grows faster than the left hand side as $n\to \infty$. This is a contradiction and therefore we must have $\forall k^{'} \in \mathcal{A}^c$, $\probP(k^{'} \in \mathcal{A}_n) \rightarrow 0$.

\end{proof}

\subsection{MVMR-PACS}

MVMR-PACS is the unique minimizer to the following objective function:
\begin{align*}
    \hat \bbeta_{\rm PACS}^{(n)} = \argmin_{\bbeta} \frac{1}{2} \bbeta^T(\hat \Pi^T W \hat \Pi - V)_{+} \bbeta - \hat \bGamma^T W \hat \Pi \bbeta + \lambda_n P(\bbeta)
\end{align*}
where $P(\bbeta) = \sum_{k=1}^K \hat w_k |\beta_k| + \sum_{k < m} \hat w_{km(-)} |\beta_k - \beta_m| + \sum_{k<m} \hat w_{km (+)} |\beta_k + \beta_m|$, 
$\hat w_k = 1/|\tilde \beta_k|^{\tau}$, $\hat w_{km(-)} = \hat c_{km(-)}/|\tilde \beta_k - \tilde \beta_m|^{\tau}$, $\hat w_{km(+)} = \hat c_{km(+)}/|\tilde \beta_k + \tilde \beta_m|^{\tau}$, and $\tilde \bbeta_{\rm init} = (\tilde \beta_1,\dots,\tilde \beta_K)^T$ is a consistent and asymptotically normal estimator of $\bbeta^*$ under the assumption of Theorem \ref{theo: ridge}, $c_{km(-)}^{(n)}$ and $c_{km(+)}^{(n)}$ are user-specified and potentially data-dependent weights. In our setting, $c_{km(-)}^{(n)}$ and $c_{km(+)}^{(n)}$ are given in the main text section 4.4. We require $\hat c_{km(-)}\xrightarrow[]{P} c_{km(-)}$, and $\hat c_{km(+)}\xrightarrow[]{P} c_{km(+)}$ that satisfy $0 < c_{km(-)}, c_{km(+)}<\infty$ for all $k,m \in \{1,\dots,K\}$.

Let $\bar \bgamma_j = G \bgamma_j$ be the true association of $j$-th SNP with $L$ causal groups, where $G$ follows the definition given in the main text section 4.8. In words, $G$ collapses elements of $\bgamma_j$ by taking the sum (difference) if the corresponding true causal effects are nonzero and of the same (opposite) magnitude and it drops elements of $\bgamma_j$ if the corresponding true causal effects are zero.

\begin{lemma}
   For the matrices $G$ and $C_g$ satisfying the definition given in the main text section 4.8, $G^T C_g \bbeta^* = \bbeta^*$.
\end{lemma}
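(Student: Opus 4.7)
The plan is to verify the identity entrywise using the explicit definitions of $G$ and $C_g$, together with the facts $s_k^2 = 1$ and $\bar s_l^2 = 1$. First I will compute $C_g \bbeta^*$ coordinate by coordinate, showing that its $l$-th entry equals $\bar s_l \alpha_l$. Then I will multiply by $G^T$ and check that coordinate $k$ of the result recovers $\beta_k^* = s_k \alpha_l$ when $k$ lies in the causal group $G_l$, and equals $0$ when $k$ corresponds to a non-causal exposure. The identity then follows from the signs squaring to one.

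In more detail, the $l$-th entry of $C_g \bbeta^*$ is
\begin{align*}
(C_g \bbeta^*)_l = \sum_{k \in G_l} \frac{\bar s_l s_k}{|G_l|} \beta_k^* = \sum_{k \in G_l} \frac{\bar s_l s_k}{|G_l|} (s_k \alpha_l) = \bar s_l \alpha_l,
\end{align*}
using $s_k^2 = 1$ and $|G_l|$ cancelling with the sum over group members. This matches the definition $\bar\beta_l^* = (\bar s_l/|G_l|) \sum_{k\in G_l} s_k \beta_k^*$ given in Methods Section 4.8.

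Next I compute $(G^T C_g \bbeta^*)_k = \sum_{l=1}^L G_{lk} \bar\beta_l^*$. If exposure $k$ is non-causal (i.e.\ $k \notin G_l$ for any $l$), then $G_{lk} = 0$ for every $l$, so the sum vanishes and equals $\beta_k^* = 0$. If $k$ belongs to a unique group $G_{l^*}$, then only the $l = l^*$ term contributes and
\begin{align*}
(G^T C_g \bbeta^*)_k = \bar s_{l^*} s_k \cdot \bar s_{l^*} \alpha_{l^*} = s_k \alpha_{l^*} = \beta_k^*,
\end{align*}
since $\bar s_{l^*}^2 = 1$. Combining both cases gives $G^T C_g \bbeta^* = \bbeta^*$.

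There is no real obstacle here; the lemma is essentially a bookkeeping statement that the paired sign/averaging operators $C_g$ and sign/summing operator $G^T$ compose to the identity on the support of $\bbeta^*$ and to zero on its complement. The only point requiring mild care is keeping track of the group sign $\bar s_l$ so that the two sign flips introduced by $C_g$ and $G^T$ cancel, which is automatic from the definitions $C_{g,lk} = \bar s_l s_k / |G_l|$ and $G_{lk} = \bar s_l s_k$.
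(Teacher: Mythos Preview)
Your proof is correct and follows essentially the same entrywise verification as the paper. The only difference is the order of operations: you compute $C_g\bbeta^*$ first (obtaining $\bar\beta_l^*=\bar s_l\alpha_l$) and then apply $G^T$, whereas the paper first forms the matrix $G^TC_g$ and then applies it to $\bbeta^*$; both routes rely on the same cancellations $s_k^2=1$ and $\bar s_l^2=1$ and are equivalent.
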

\begin{proof}
    Recall that  for each risk factor $k$, if it belongs to a causal group $G_l$ (i.e. $k\in G_l$ for some $l$), then $\beta_k^* = s_k \alpha_l$, where $s_k \in \{-1, +1\}$ and $\alpha_l > 0$ represents the magnitude of the causal effect of group $G_l$; otherwise, $\beta_k^* = 0$, i.e., it is a non-causal risk factor.

    Next, note that for every $k,j \in \{1,...,K\}$, 
    \begin{align*}
        (G^T C_g)_{(k,j)} = \sum_{l=1}^L G_{kl}C_{g,jl} = \begin{cases}
        \frac{s_k s_{j}}{|G_l|}, & \text{if $\exists l(k) \in \{1,...,L\}$ \ s.t.\ $k,j \in G_{l(k)}$} \\
        0, & \text{otherwise}
    \end{cases}
    \end{align*}
where $l(k)$ denotes the unique group index such that $k \in G_{l(k)}$. In other words, $(G^T C_g)_{(k,j)}$ is only non-zero when $k,l$ belong to the same causal group, and zero otherwise.

Now we consider $G^T C_g \bbeta^*$. Its $k$-th component is
\begin{align*}
    (G^T C_g \bbeta^*)_k & = \sum_{j = 1}^K (G^T C_g)_{kj} \beta_j^*
\end{align*}
There are two scenarios. First, if risk factor $k$ is non-causal, then $(G^TC_g)_{kj} = 0$ for all $j = 1,...,K$. Then, $(G^T C_g \bbeta^*)_k = \beta_k^* = 0$. Second, if risk factor $k$ is causal, then there exists $l(k)$ such that $k \in G_{l(k)}$ and
\begin{align*}
    (G^T C_g \bbeta^*)_k & = \sum_{j = 1}^K (G^T C_g)_{kj} \beta_j^* \\
    & = \sum_{j\in G_{l(k)}} (G^T C_g)_{kj} \beta_j^* \text{ as $(G^T C_g)_{kj}$ is non-zero only if $k,j$ belong to the same causal group}\\
    & = \sum_{j\in G_{l(k)}} \frac{s_ks_j}{|G_{l(k)}|} \beta_j^* \\
    & = \sum_{j\in G_{l(k)}} \frac{s_ks_j}{|G_{l(k)}|} s_j \alpha_{l(k)} \text{ as $\beta_j^* = s_j \alpha_{l(k)}$}\\
    & = \sum_{j\in G_{l(k)}} \frac{s_k}{|G_{l(k)}|} \alpha_{l(k)} \\
    & = s_k \alpha_{l(k)} \sum_{j\in G_{l(k)}} \frac{1}{|G_{l(k)}|}  \\
    & = s_k \alpha_{l(k)} = \beta_k^*
\end{align*}
This completes the proof.
\end{proof}

\begin{theorem} \label{theorem: PACS supp}
    (Oracle properties of MVMR-PACS)  Under the same conditions as in Theorem \ref{theorem: dlasso}, the MVMR-PACS estimator satisfies
    \begin{enumerate}
        \item Consistency in risk factor selection: $\lim_n \probP(\mathcal{B}_n = \mathcal{B}) = 1$
        \item Asymptotic normality: $
    \mathbb{V}_{g}^{-\frac{1}{2}} (\Pi_g^T W \Pi_g)(C_g\hat \bbeta_{{\rm PACS}}^{(n)} - C_g\bbeta^*)  \xrightarrow[]{D} N(\bm 0, I_{L})$
    \end{enumerate}
\end{theorem}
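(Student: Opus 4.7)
The plan is to mirror the structure of the MVMR-dLASSO proof (Theorem \ref{theorem: dlasso}) but work with the over-parameterized vector $\bm\theta^* = M\bbeta^*$ so that sparsity \emph{and} group structure are treated uniformly. First, since ${\mu_{n,\min}}/{\sqrt{p}}\to\infty$, Lemma 7 of \citet{wu2024more} gives that $(\hat \Pi^T W \hat \Pi - V)$ is positive definite with probability tending to $1$, so $(\hat \Pi^T W \hat \Pi - V)_+ = \hat \Pi^T W \hat \Pi - V$ eventually and I can drop the projection. Introducing $\bm u = r_n(\bbeta - \bbeta^*)$ and subtracting $\Psi^{(n)}(\bm 0)$, I obtain a convex objective $V^{(n)}(\bm u)$ whose quadratic part is $\tfrac{1}{2r_n^2}\bm u^T(\hat\Pi^TW\hat\Pi - V)\bm u = O_p(\mu_{n,\max}/r_n^2)\|\bm u\|^2$, and whose linear part is $O_p(1)$ by Lemma 6 of \citet{wu2024more} applied to $\mathbb V^{-1/2}(\hat\Pi^TW\hat\bGamma - \hat\Pi^TW\hat\Pi\bbeta^* + V\bbeta^*)$.

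Next I would analyze each of the three pieces of the PACS penalty in the $\bm u$-parameterization. Using the consistency and asymptotic normality of $\tilde\bbeta_{\rm init}$ (Theorem \ref{theo: ridge}), the continuous mapping theorem, and the fact that $\hat c_{km(\pm)}\xrightarrow{P} c_{km(\pm)}\in(0,\infty)$: for any index $k\in\{1,\dots,K^2\}$ with $\theta_k^*\neq 0$, the corresponding weight converges in probability to a finite positive constant and the penalty difference $\lambda_n\hat w(\cdot)\{|\theta_k^* + u_k/r_n|-|\theta_k^*|\}$ is $O_p(\lambda_n/r_n)\cdot\mathrm{sign}(\theta_k^*)u_k = o_p(1)$; for $k$ with $\theta_k^*=0$, the weight diverges at rate $r_n^\tau$, giving a term of order $(\lambda_n r_n^{\tau-1})|u_k|\to\infty$ unless $u_k\to 0$. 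A standard convexity-plus-contradiction argument (as in the proof of Theorem \ref{theorem: dlasso}) then forces $\hat{\bm u}^{(n)}_k\xrightarrow{P}0$ for every $k$ with $\theta_k^*=0$, because the tuning condition $\lambda_n r_n^{\tau+1}/\mu_{n,\max}\to\infty$ ensures the penalty blow-up dominates the worst-case growth $\mu_{n,\max}/r_n^2$ of the quadratic term.

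For asymptotic normality, the previous step implies that, with probability tending to one, $\hat\bbeta^{(n)}_{\rm PACS}$ lives in the linear subspace $\{\bbeta: M\bbeta \text{ has zeros at all indices in }\mathcal B^c\}$, which is precisely $\{\bbeta = G^T\bm\alpha : \bm\alpha\in\mathbb R^L\}$ (using the helper lemma $G^T C_g\bbeta^* = \bbeta^*$). Plugging $\bbeta = G^T\bm\alpha$ into $\Psi^{(n)}$ reduces the problem, on the oracle support, to an unpenalized weighted least squares with the grouped design $\Pi_g = \Pi G^T$, up to the vanishing penalty contribution from the active indices, which is $O_p(\lambda_n)\cdot (\Pi_g^TW\Pi_g)^{-1}\cdot\mathrm{sign}$-vector and hence $o_p(1)$ after premultiplying by $\mathbb V_g^{-1/2}(\Pi_g^TW\Pi_g)$ since $\lambda_n/\sqrt{\mu_{n,\min}+p}\to 0$. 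Applying Lemma 6 of \citet{wu2024more} to the grouped quantities $\bar\bgamma_j = G\bgamma_j$ and $V_{j,g} = G\Sigma_{Xj}\sigma_{Yj}^{-2}G^T$ then yields the stated CLT with variance $\mathbb V_g$. Finally, the selection-consistency claim $\probP(\mathcal B_n = \mathcal B)\to 1$ is obtained from the KKT subgradient conditions applied to each of the three families $|\beta_k|$, $|\beta_k-\beta_m|$, $|\beta_k+\beta_m|$: reading off the equation $[(\hat\Pi^TW\hat\Pi - V)\hat\bbeta^{(n)} - \hat\Pi^TW\hat\bGamma]_k = \lambda_n \hat w_k \mathrm{sign}(\hat\theta_k^{(n)})$ (and its pairwise analogues, pushed through $D_{(\pm)}$), the right-hand side grows like $\lambda_n r_n^\tau$ whereas the left-hand side is $O_p(\mu_{n,\max}/r_n)$ by the already established rates, and the condition $\lambda_n r_n^{\tau+1}/\mu_{n,\max}\to\infty$ yields the contradiction.

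I expect the main obstacle to be the passage from the over-parameterized $\bm\theta^*$ bookkeeping to the grouped-design statement: one must verify that the sign conventions encoded in $C_g$ and $G$ are compatible when collapsing $\hat\bbeta^{(n)}$ onto $\mathrm{range}(G^T)$, and that the three PACS penalty families play symmetric roles in the KKT argument so that equal-magnitude groups with mixed signs are not mistakenly split. Once that linear-algebraic correspondence is pinned down, everything else is a direct adaptation of the MVMR-dLASSO argument.
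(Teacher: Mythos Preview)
Your proposal is correct and follows essentially the same route as the paper: drop the projection via Lemma~7, reparameterize by $\bm u = r_n(\bbeta-\bbeta^*)$, show each penalty term either vanishes (active index) or blows up at rate $\lambda_n r_n^{\tau-1}$ (inactive index), force the oracle structure by convexity-plus-contradiction, restrict to the subspace $\mathrm{range}(G^T)$ using $G^TC_g\bm u=\bm u$ on that subspace, and then invoke the grouped CLT; selection consistency is finished by KKT.

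One point of caution on the KKT step: the optimization is in $\bbeta$, not in the over-parameterized $\bm\theta=M\bbeta$, so the stationarity condition at coordinate $m$ is not simply $[(\hat\Pi^TW\hat\Pi-V)\hat\bbeta^{(n)}-\hat\Pi^TW\hat\bGamma]_m=\lambda_n\hat w_m\,\mathrm{sign}(\hat\theta_m)$. The subgradient of $P(\bbeta)$ with respect to $\beta_m$ collects contributions from \emph{all} penalty terms that involve $\beta_m$, namely $\hat w_m\,\mathrm{sign}(\hat\beta_m)$ together with every $\hat w_{jm(-)}\,\mathrm{sign}(\hat\beta_m-\hat\beta_j)$ and $\hat w_{jm(+)}\,\mathrm{sign}(\hat\beta_m+\hat\beta_j)$. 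The paper writes this out explicitly and then argues case-by-case (zero coefficient; equal pair; opposite pair) that at least one summand on the right is of exact order $r_n^\tau$ while the remaining summands are at most this order, so the right-hand side is $\Theta_p(\lambda_n r_n^\tau)$. Your ``pushed through $D_{(\pm)}$'' shorthand obscures this and, as written, could be read as three separate KKT systems, which the problem does not have. Once you correct the form of the subgradient, the rate comparison $\lambda_n r_n^{\tau+1}/\mu_{n,\max}\to\infty$ versus $O_p(\mu_{n,\max}/r_n)$ goes through exactly as you say.
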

where $\mathcal{B} = \{k: \theta_k \neq 0 \}$, $\mathcal{B}_n = \{k: \hat \beta_{{\rm PACS},k}^{(n)} \neq 0\}$, and $\Pi_g =  \Pi G^T$, $V_{j, g} = G \Sigma_{Xj}\sigma_{Yj}^{-2}G^T$, $\mathbb{V}_g = \sum_{j=1}^{p}\big\{  (1+ \bar \bbeta^{*T}V_{j,g}\bar\bbeta^*) (G\bgamma_j\bgamma_j^T\sigma_{Yj}^{-2}G^T+ V_{j,g})+V_{j,g}\bar \bbeta\bar \bbeta^{*T}V_{j,g}\big\}$, $I_{L}$ is an identity matrix of size $L\times L$.

\begin{proof}
    We prove the asymptotic normality result first. Following the same argument as in the proof of Theorem \ref{theorem: dlasso}, the objective function for MVMR-PACS (with light abuse of notation) can be simplified as
\begin{align*}
    \Psi^{(n)} (\bbeta) = \frac{1}{2} \bbeta^T(\hat \Pi^T W \hat \Pi - V) \bbeta - \hat \bGamma^T W \hat \Pi \bbeta + \lambda_n P(\bbeta)
\end{align*}
Define $\bm u = r_n(\bbeta - \bbeta^*)$. Then, $\bbeta = \bbeta^{*} + \frac{\bm u}{r_n}$, where under the assumption $\mu_{n,\min}/\sqrt{p} \rightarrow \infty$, $r_n \rightarrow \infty$. Then, the objective function can be writen as a function of $\bm u$ as follows
    \begin{align*}
        \Psi^{(n)}(\bm u) & = \frac{1}{2} (\bbeta^* + \frac{\bm u}{r_n})^T  (\hat \Pi^T W \hat \Pi - V) (\bbeta^* + \frac{\bm u}{r_n}) - \hat \bGamma^T W \hat \Pi (\bbeta^* + \frac{\bm u}{r_n})  +  \lambda_n P(\bbeta^* + \frac{\bm u}{r_n})
    \end{align*}

Let $\hat {\bm u}^{(n)} = \argmin \Psi^{(n)}(\bm u)$; then $\hat \bbeta^{(n)} = \bbeta^* + \frac{ \hat {\bm u}^{(n)}}{r_n}$. Let $V^{(n)}(\bm u) = \Psi_n(\bm u) - \Psi_n(\bm 0)$. We know $\hat {\bm u}^{(n)}$ is also the minimizer of $V^{(n)}(\bm u)$, where 
\begin{align*}
    V^{(n)}(\bm u) & = \frac{1}{2r_n^2} \bm u^T (\hat \Pi^T W \hat \Pi- V) \bm u - \frac{1}{r_n}\bm u^T (\hat\Pi^T W \hat \bGamma - \hat \Pi^T W \hat \Pi \bbeta^* + V \bbeta^*) \\
    & + \lambda_n \tilde P(u)
\end{align*}
where 
\begin{align*}
    \tilde P(\bm u) & = \sum_k \hat w_k \{|\beta_k^* + \frac{u_k}{r_n}| - |\beta_k^*|\} + \sum_{1\le k < l \le K} \hat  w_{kl(-)} \{|\beta_k^* - \beta_l^* + \frac{u_k - u_l}{r_n}| - |\beta_k^* - \beta_l^*| \} \\
    & + \sum_{1\le k < l \le K}\hat w_{kl(+)}\{|\beta_k^* + \beta_l^* + \frac{u_k + u_l}{r_n}| - |\beta_k^* + \beta_l^*| \}
\end{align*}
We now consider the limiting behavior of $\lambda_n \tilde P(\bm u)$. We aim to show that $\lambda_n \tilde P(\bm u)$ goes to 0 under correct sparsity and group structure and goes to infinity under the incorrect structure. 

For all $k, l \in \{1,...,K\}$, if $\beta_k^* \neq 0, \beta_l^* \neq 0$ and $\beta_k^* \neq \beta_l^*$, then $\hat w_k \xrightarrow[]{P} |\beta_k^*|^{-\tau}$, $\hat w_{kl(-)} \xrightarrow[]{P} c_{kl(-)} |\beta_k^* - \beta_l^*|^{-\tau}$ and $\hat w_{kl(+)} \xrightarrow[]{P} c_{kl(+)} |\beta_k^* + \beta_l^*|^{-\tau}$. Also, $r_n \{ |\beta_k^* + \frac{u_k}{r_n}| - |\beta_k^*|\} \to u_k{\rm sign}(\beta_k^*)$, $r_n \{ | \beta_k^* - \beta_l^* + \frac{u_k - u_l}{r_n}| - |\beta_k^* - \beta_l^*| \} \to (u_k - u_l) {\rm sign} (\beta_k^* - \beta_l^*)$, and $r_n \{ | \beta_k^* + \beta_l^* + \frac{u_k + u_l}{r_n}| - |\beta_k^* + \beta_l^*| \} \to (u_k + u_l) {\rm sign} (\beta_k^* + \beta_l^*)$. By Slutsky's theorem, we have $\lambda_n \tilde P(\bm u) \xrightarrow[]{P} 0$ because $\lambda_n /r_n \to 0$. 

If $\beta_k^* = 0$, then $r_n\{|\beta_k^* +\frac{u_k}{r_n}| - |\beta_k^*|\} = |u_k|$ and $ \lambda_n \hat w_k\{|\beta_k^* + \frac{u_k}{r_n}| - |\beta_k^*|\} = \frac{\lambda_n}{r_n} |u_k| [\mathcal{V}_n]_{kk}^{-\frac{\tau}{2}} \{[\mathcal{V}_n]_{kk}^{-\frac{1}{2}} \tilde \beta_k \}^{-\tau} =\frac{\lambda_n}{r_n} |u_k| [\mathcal{V}_n]_{kk}^{-\frac{\tau}{2}} O_p(1)$, where the second equality is because $\tilde \bbeta_{\rm init}$ is asymptotically normal, and hence $[\mathcal{V}_n]_{kk}^{-1} \tilde \beta_k \xrightarrow[]{D} N(0,1) = O_p(1)$, where $\mathcal{V}_n$ is a consistent estimator of the asymptotic variance of $\tilde \bbeta_{\rm init}$. Because the slowest rate at which the diagonal term of $\mathcal{V}_n$ goes to $0$ is $r_n^{-2}$ by Lemma 8 in \cite{wu2024more}, we know $\frac{\lambda_n}{r_n} |u_k| [\mathcal{V}_n]_{kk}^{-\frac{\tau}{2}} O_p(1) = O_p(\lambda_n r_n^{\tau - 1}) |u_k|$. By the assumption that $\lambda_n r_n^{\tau - 1} \rightarrow \infty$, we have $\lambda_n \hat w_k\{|\beta_k^* + \frac{u_k}{r_n}| - |\beta_k^*|\} \to \infty$ with probability tending one unless $u_k$ goes to 0 sufficiently fast, faster than $\lambda_n r_n^{\tau - 1}$.

Similarly, if $\beta_k^* = \beta_l^*$, then $r_n \{ | \beta_k^* - \beta_l^* + \frac{u_k - u_l}{r_n}| - |\beta_k^* - \beta_l^*| \} = |u_k - u_l|$, and $\lambda_n \hat w_{kl(-)}\{|\beta_k^* - \beta_l^* + \frac{u_k - u_l}{r_n}| - |\beta_k^* - \beta_l^*|\} = \lambda_n \hat c_{kl}|u_k - u_l| [ {\mathcal{V}}_{n, kl}]^{-\frac{\tau}{2}} [{\mathcal{V}}_{n, kl}^{-\frac{1}{2}}(\tilde \beta_k - \tilde \beta_l)]^{-\tau}$, where ${\mathcal{V}}_{n,kl} = [\mathcal{V}_n]_{kk} + [\mathcal{V}_n]_{ll} - 2[\mathcal{V}_n]_{kl}$ denotes the consistent variance estimator of $\tilde \beta_k - \tilde \beta_l$. By the asymptotic normality of $\tilde \bbeta_{\rm init}$, ${\mathcal{V}}_{n,kl}^{-\frac{1}{2}}(\tilde \beta_k - \tilde \beta_l) = O_p(1)$. Furthermore, because $\hat c_{kl(-)} \xrightarrow[]{P} c_{kl(-)}$, and the slowest rate at which $\tilde {\mathcal{V}}_n$ goes to 0 is $r_n^{-2}$, we have $\lambda_n \hat w_{kl(-)}\{|\beta_k^* - \beta_l^* + \frac{u_k - u_l}{r_n}| - |\beta_k^* - \beta_l^*|\} = \lambda_n \hat c_{kl}|u_k - u_l| [ {\mathcal{V}}_{n, kl}]^{-\frac{\tau}{2}} [{\mathcal{V}}_{n, kl}^{-\frac{1}{2}}(\tilde \beta_k - \tilde \beta_l)]^{-\tau} = O_p(\lambda_n r_n^{\tau - 1} c_{kl(-)} |u_k-u_l|)$. By the assumption that $\lambda_n r_n^{\tau - 1} \to \infty$ and $0 < c_{kl(-)} <\infty$, we have $\lambda_n \hat w_{kl(-)}\{|\beta_k^* - \beta_l^* + \frac{u_k - u_l}{r_n}| - |\beta_k^* - \beta_l^*|\}\to \infty$ with probability tending one unless $u_k - u_l$ goes to 0 sufficiently fast, faster than $\lambda_n r_n^{\tau - 1}$. 

By the same argument, we can show that if $\beta_k^* = -\beta_l^*$, then $\lambda_n \hat w_{kl(+)}\{|\beta_k^* + \beta_l^* + \frac{u_k + u_l}{r_n}| - |\beta_k^* + \beta_l^*|\}\to \infty$ with probability tending one unless $u_k + u_l$ goes to 0 sufficiently fast, faster than $\lambda_n r_n^{\tau - 1}$. 

By proving via contradiction (similar to the arguments in the proof of Theorem \ref{theorem: dlasso}), we can show that for those indices for which $\beta_k^* = 0$, the corresponding $\hat u_k \xrightarrow[]{P} 0$ at a rate faster than $\lambda_n r_n^{\tau - 1}$; for those pairs of indices for which $\beta_k^* = \beta_l^*$, the corresponding $\hat u_k - \hat u_l \xrightarrow[]{P} 0$ at a rate faster than $\lambda_n r_n^{\tau - 1}$; for those pairs of indices for which $\beta_k^* = - \beta_l^*$, the corresponding $\hat u_k + \hat u_l \xrightarrow[]{P} 0$ at a rate faster than $\lambda_n r_n^{\tau - 1}$. In other words, $\hat {\bm u}^{(n)}$ must have oracle structure with probability tending one as $n\to \infty$. Note that when $\bm u$ has the oracle structure, (i.e., for those indices for which $\beta_k^* = 0$, the corresponding $u_k = 0$; for those pairs of indices for which $\beta_k^* = \beta_l^*$, the corresponding $u_k = u_l$; for those pairs of indices for which $\beta_k^* = - \beta_l^*$, the corresponding $u_k = - u_l$), it is also true that $G^T C_g \bm u = \bm u$ by following almost the same argument as in the proof of Lemma 1. This implies that when $n$ is large enough, $G^TC_g \hat {\bm u}^{(n)} = \hat {\bm u}^{(n)}$ holds with high probability.

As a result, when $n$ is large enough, we can instead minimize $V^{(n)}(G^TC_g\bm u)$ for $\bm u$. Note that
\begin{align}
    V^{(n)}(G^TC_g\bm u) & = \frac{1}{r_n^2} \bm u^T C_g^TG(\hat \Pi^T W \hat \Pi - V)G^T C_g \bm u - \frac{1}{r_n} \bm u^TC_g^TG(\hat \Pi^T W \hat \bGamma - \hat \Pi^T W \hat \Pi \bbeta^* + V \bbeta^*) + \lambda_n \tilde P(G^TC_g\bm u) \nonumber \\
    & = \frac{1}{r_n^2} \bm u^T C_g^TG(\hat \Pi^T W \hat \Pi - V)G^T C_g \bm u - \frac{1}{r_n} \bm u^TC_g^TG(\hat \Pi^T W \hat \bGamma - \hat \Pi^T W \hat \Pi G^T C_g\bbeta^* + V G^T C_g\bbeta^*) \nonumber \\
    & + \lambda_n \tilde P(G^TC_g\bm u) \nonumber \\
    & = \frac{1}{r_n^2} \bm u_g^T(\hat \Pi_g^T W \hat \Pi_g - V_g) \bm u_g - \frac{1}{r_n} \bm u_g^T(\hat \Pi_g^T W \hat \bGamma - \hat \Pi_g^T W \hat \Pi_g C_g\bbeta^* + V_g C_g\bbeta^*) + \lambda_n \tilde P(G^T\bm u_g) \label{obj: grouped}
\end{align}
where we let $\bm u_g = C_g \bm u = (u_{g,1},\dots,u_{g,L})^T$ be the transformed unknown parameters. 
\eqref{obj: grouped} is essentially the objective function after collapsing and droping the elements of $\hat \bgamma_j$ according to the oracle structure. 

Additionally, note that $\tilde P(G^T\bm u_g) = O_p(\lambda_n/r_n) (\sum_{k=1}^{L}u_{g,k} {\rm sign}(\bar \beta_k^*) + \sum_{1\le k < l \le L} (u_{g,k} - u_{g,l}){\rm sign}(\bar \beta_k^* - \bar \beta_l^*) + \sum_{1\le k < l \le L}(u_{g,k}+u_{g,l}){\rm sign}(\bar \beta_k^* + \bar \beta_l^*)) = O_p(\lambda_n/r_n) f(\bm u_g)$, where $f(\bm u_g) = \bm a + \bm b^T \bm u_g$ is a linear function in $\bm u _g$, where $\bm a$ and $\bm b$ are bounded constants. Therefore, when $n$ is large enough, we only need to minimize
\begin{align*}
    \frac{1}{r_n^2} \bm u_g^T(\hat \Pi_g^T W \hat \Pi_g - V_g) \bm u_g - \frac{1}{r_n} \bm u_g^T(\hat \Pi_g^T W \hat \bGamma - \hat \Pi_g^T W \hat \Pi_g C_g\bbeta^* + V_g C_g\bbeta^*) + O_p(\frac{\lambda_n}{r_n}) f(\bm u_g)
\end{align*}
This objective is convex and the minimizer
\begin{align*}
    \hat{\bm u}_{g}^{(n)} = r_n (\hat\Pi_g^T W \hat \Pi_g-V_g)_{\mathcal{A}}^{-1}(\hat \Pi_g^T W \hat \bGamma - \hat \Pi_g^T W \hat \Pi_g C\bbeta^* + V_g C\bbeta^*) + O_p( \lambda_n/r_n)(\hat\Pi_g^T W \hat \Pi_g-V_g)^{-1} \bm b
\end{align*}

Let $\hat \bbeta_g^{(n)} = \bar \bbeta^* + \bm u_g^{(n)}/r_n$. Then we have, 
\begin{align*}
    (\hat \bbeta_g^{(n)} - \bar \bbeta^*) & = (\hat\Pi_g^T W\hat\Pi_g - V_g)^{-1} (\hat \Pi_g^T W \hat \bGamma - \hat \Pi_g^T W \hat \Pi_g C\bbeta^* + V_g C\bbeta^*)  + O_p( \lambda_n) (\hat\Pi_g^T W \hat \Pi_g-V_g)^{-1}\bm b,
\end{align*}
and,
\begin{align*}
    \mathbb{V}_g^{-\frac{1}{2}} (\Pi_g^T W \Pi_g)(\hat \bbeta_{g}^{(n)} - \bar \bbeta^*) & = \mathbb{V}_{g}^{-\frac{1}{2}} (\Pi_g^T W \Pi_g) (\hat\Pi_g^T W\hat\Pi_g - V)^{-1} (\hat \Pi_g^T W \hat \bGamma - \hat \Pi_g^T W \hat \Pi_gC\bbeta^* + V_g C\bbeta^*)  \\
    & + O_p( \lambda_n) \mathbb{V}_g^{-\frac{1}{2}} (\Pi_g^T W \Pi_g)(\hat\Pi_g^T W \hat \Pi_g-V_g)^{-1}\bm b \\
    & = \underbrace{\mathbb{V}_{g}^{-\frac{1}{2}} (\Pi_g^T W \Pi_g)_{\mathcal{A}} (\hat\Pi_g^T W\hat\Pi_g - V_g)^{-1} \mathbb{V}_{g}^{\frac{1}{2}}}_{A_1} \underbrace{ \mathbb{V}_{g}^{-\frac{1}{2}} (\hat \Pi_g^T W \hat \bGamma - \hat \Pi_g^T W \hat \Pi_g C\bbeta^* + V_g C \bbeta^*)}_{A_2}\\
    & + O_p( \lambda_n) \underbrace{\mathbb{V}_{g}^{-\frac{1}{2}} (\Pi_g^T W \Pi_g) (\hat\Pi_g^T W \hat \Pi_g-V)^{-1}\mathbb{V}_{g}^{\frac{1}{2}}}_{A_1} \mathbb{V}_{g}^{-\frac{1}{2}}\bm b
\end{align*}
By the result S10 and Lemma 6 in \cite{wu2024more}, we know $A_1 \xrightarrow[]{P} I_{L}$ and $A_2 \xrightarrow[]{D} N(\bm 0, I_{L})$. Additionally, by Lemma 2 in the SRIVW paper, $\mathbb{V}_{\mathcal{A}}^{-\frac{1}{2}}$ goes to 0 at a rate at least $\sqrt{\mu_{n,\min} + p}$. Thus, the second term is of order $O_p(\lambda_n/\sqrt{\mu_{n,\min}+p}) = o_p(1)$ as $\lambda_n/r_n \rightarrow 0$. Then, by the Slutsky's theorem, we show the asymptotic normality of $\hat \bbeta_g^{(n)}$. Finally, note that $\hat \bbeta_g^{(n)} - \bar \bbeta^* = \bm u_g^{(n)}/r_n = C_g \bm u^{(n)}/r_n = C_g(\hat \bbeta - \bbeta^*)$ by the definition of $\bm u_g$ and $\bm u$. We complete the proof of asymptotic normality.

We now prove the consistency in risk factor selection and group identification. Note that the asymptotic normality result implies that for $\forall k \in \mathcal{B}$, $\probP(k \in \mathcal{B}_n) \to 1$. It remains to show that $\forall k^{'}  \in \mathcal{B}^c$, $\probP(k^{'} \in B_n)\to 0$. For $k^{'} \in \mathcal{B}^c$, there are three possibility. First, there exists an index $m \in \{1,\dots,K\}$ such that $\beta_m^* = 0$ but $\hat \beta_m \neq 0$. Second, there exists a pair of indices $(m, l)$ for $m,l\in\{1,\dots,K\}$ such that $\beta_m^* = \beta_l^*$ but $\hat \beta_m \neq \hat \beta_l$. Third, there exists a pair of indices $(m, l)$ such that $\beta_m^* = - \beta_l^*$ but $\hat \beta_m \neq -\hat \beta_l$. In the following, we show by contradiction that these three scenarios happen with probability tending 0.

For the first scenario, Since $\hat \beta_m \neq 0$, by the KKT conditions, we know $\hat \beta_m \neq 0$ satisfies
\begin{align}
    \bigg [(\hat \Pi^T W \hat \Pi - V) \hat \bbeta^{(n)} - \hat \Pi^T W \hat \bGamma\bigg]_{m} & = \lambda_n \{\hat w_{m}{\rm sign}(\hat \beta_m) + \sum_{1\le j < m} \hat w_{jm(-)}{\rm sign}(\hat \beta_m - \hat \beta_j) + \sum_{m < k \le K} \hat w_{mk(-)}{\rm sign}(\hat \beta_k - \hat \beta_m) \nonumber \\
    & + \sum_{1\le j < m} \hat w_{jm(+)}{\rm sign}(\hat \beta_m + \hat \beta_j) + \sum_{m < k \le K} \hat w_{mk(+)} {\rm sign} (\hat \beta_m + \hat \beta_k  \}\label{KKT eq pacs}
\end{align}
Following the same argument in the proof of Theorem \ref{theorem: dlasso}, we can show that the left hand side is\\ $O_p(\mu_{n,\max}/r_n)$ using the asymptotic normality result. Now we focus on the right hand side. Since $\beta_m = 0$, $[\mathcal{V}_n]_{mm}^{-\frac{1}{2}}\tilde \beta_m = O_p(1)$. Hence, $\hat w_m = [\mathcal{V}_n]_{mm}^{-\frac{\tau}{2}} ([\mathcal{V}_n]_{mm}^{-\frac{1}{2}}\tilde \beta_m)^{-\tau} = O_p(r_n^{\tau})$. Consider $\hat w_{jm(-)} = \hat c_{jm(-)} (|\tilde \beta_m - \tilde \beta_j|)^{-\tau}$ for $j < m$. If $\beta_j = 0$, then $ {\mathcal{V}}_{n, jm}^{-\frac{1}{2}}(\tilde \beta_m - \tilde \beta_j) = O_p(1)$, where $ {\mathcal{V}}_{n, jm} = [\mathcal{V}_n]_{kk} + [\mathcal{V}_n]_{ll} - 2 [\mathcal{V}_n]_{kl}$ denotes the consistent variance estimator of $\tilde \beta_m - \tilde \beta_j$, and thus $\hat w_{jm(-)} = O_p(r_n^{\tau})$. If $\beta_j \neq 0$, then $|\tilde \beta_m - \tilde \beta_j| \xrightarrow[]{P} |\beta_j|$ and  $\hat w_{jm(-)} = O_p(1)$. Hence, $\hat w_{jm(-)} = O_p(r_n^{\tau})$ for $j < m$. Similarly, we can argue that $\hat w_{mk(-)}$ for $k > m$, $\hat w_{jm(+)}$ for $j < m$, and $\hat w_{mk(+)}$ for $k > m$ are all at most $O_p(r_n^{\tau})$. Therefore, the right hand side is at most $O_p(\lambda_n r_n^{\tau})$. By the assumption that $\lambda_n r_n^{\tau + 1}/\mu_{n,\max} \to \infty$, we have a contradiction that \eqref{KKT eq pacs} cannot hold for large enough $n$. 

For the second scenario, we only need to consider the case when $\beta_m = \beta_l \neq 0$, because the proof for the case when $\beta_m = \beta_l = 0$ follows directly from the proof in the first scenario. We again analyze the right hand side of \eqref{KKT eq pacs}. For the first term, because $\beta_m^* \neq 0$, $|\tilde \beta_m| \xrightarrow[]{P} \beta_m^*$ and hence $\hat w_m = O_p(1)$. Then, consider $\hat w_{jm(-)} = \hat c_{jm(-)} (|\tilde \beta_m - \tilde \beta_j|)^{-\tau}$ for $j < m$ and $\hat w_{mk(-)} = \hat c_{mk(-)} (|\tilde \beta_k - \tilde \beta_m|)^{-\tau}$ for $m < k$. For the indices $l$ such that $\beta_l \neq \beta_m$, we know $|\tilde \beta_m - \tilde \beta_l| \xrightarrow[]{P} |\beta_m - \beta_l|$ and thus $\hat w_{lm} = O_p(1)$. For the indices $l$ such that $\beta_l = \beta_m \neq 0$, we know  $ {\mathcal{V}}_{n, lm}^{-\frac{1}{2}}(\hat \beta_m - \hat \beta_l) = O_p(1)$ where $ {\mathcal{V}}_{n, lm}$ denotes the consistent variance estimator of $\tilde \beta_m - \tilde \beta_l$. By the definition of this scenario, there exists at least one such an index $l$, and additionally, $\hat \beta_m \neq \hat \beta_l$. Therefore, we must have $\hat w_{lm(-)}{\rm sign}(\hat \beta_m - \hat \beta_l) = O_p(r_n^{\tau})$. For the terms $\hat w_{jm(+)}$ and $\hat w_{mk(+)}$, following similar arguments, we know they are at most $O_p(r_n^{\tau})$. Therefore, the right hand side is $O_p(\lambda_n r_n^{\tau})$. Again, by the assumption that $\lambda_n r_n^{\tau + 1}/\mu_{n,\max} \to \infty$, we have a contradiction that \eqref{KKT eq pacs} cannot hold for large enough $n$. 

The proof for the third scenario mirrors the proof of the second scenario. We can follow similary arguments to show for the pair of indices $(m, l)$ such that $\beta_m^* = - \beta_l^*  \neq 0$ but $\hat \beta_m \neq \hat \beta_l$, $\hat w_{ml(+)}{\rm sign}(\hat \beta_m + \hat \beta_l) = O_p(r_n^{\tau})$. Thus, the right hand side of \eqref{KKT eq pacs} is still $O_p(\lambda_n r_n ^{\tau})$ and hence we still have a contradiction. 

\end{proof}

\end{document}